\documentclass[11pt]{article}
\usepackage[hmargin=1in,vmargin=1in]{geometry}
\usepackage[dvipsnames,usenames,table]{xcolor}
\usepackage{hyperref}
\hypersetup{colorlinks=true, urlcolor=Blue, citecolor=Green, linkcolor=BrickRed, breaklinks, unicode}
\usepackage{hchang}
\usepackage{tikz}
\usepackage{tikz-3dplot}
\usetikzlibrary{calc}

\usepackage{url}
\usepackage{graphicx}
\usepackage[noend]{algpseudocode}
\usepackage{algorithm}
\usepackage{bbm}
\usepackage{float}
\usepackage{framed}
\usepackage{enumerate}
\usepackage{color}
\usepackage{thm-restate}
\usepackage{comment}
\usepackage{subcaption}
\usepackage{tikz}
\usepackage{cleveref}
\usepackage{subfiles}
\usepackage{cancel}

\usepackage{setspace}
\usepackage{mdframed}

\usepackage{enumitem}
\newlist{inlinelist}{enumerate*}{1}
\setlist[inlinelist]{label=(\roman*),font={\color{red!50!black}\bfseries}, itemjoin={{, }}, itemjoin*={{, }}}
\newtheorem{theorem}{Theorem}

\newtheorem{lemma}{Lemma}
\newtheorem{claim}{Claim}

\newtheorem{observation}{Observation}
\newtheorem{remark}{Remark}

\newtheorem{hypothesis}{Hypothesis}
\newtheorem{question}{Question}

\newcommand{\eps}{\varepsilon}
\newcommand{\real}{\mathbb{R}}

\newcommand{\lotte}[1]{\textcolor{orange}{Lotte: #1}}

\newcommand{\gnote}[1]{\note{G: #1}}
\newcommand{\geert}[1]{\textcolor{teal}{Geert: #1}}

\usepackage{xspace}
\newcommand{\kcenter}{$k\text{-}\mathsf{center}$\xspace}
\newcommand{\tencenter}{$10\text{-}\mathsf{center}$\xspace}
\newcommand{\sixcenter}{$6\text{-}\mathsf{center}$\xspace}
\newcommand{\threecenter}{$3\text{-}\mathsf{center}$\xspace}
\newcommand{\twocenter}{$2\text{-}\mathsf{center}$\xspace}
\newcommand{\radius}{\ensuremath{1-\eps+\eps^{1.7}}\xspace}
\newcommand{\da}{\ensuremath{\Delta}\xspace}
\newcommand{\aaa}{\ensuremath{a}\xspace}
\newcommand{\bb}{\ensuremath{b}\xspace}
\newcommand{\cc}{\ensuremath{c}\xspace}
\newcommand{\dd}{\ensuremath{d}\xspace}
\newcommand{\ee}{\ensuremath{e}\xspace}
\newcommand{\ff}{\ensuremath{f}\xspace}
\newcommand{\PSEdge}[2]{\ensuremath{P(#1, #2)}\xspace}

\newcommand{\pset}{{\mathcal P}}
\newcommand{\dset}{{\mathcal D}}

\newcommand{\apoints}{\widetilde{{\mathcal A}}}
\newcommand{\bpoints}{\widetilde{{\mathcal B}}}
\newcommand{\cpoints}{\widetilde{{\mathcal C}}}
\newcommand{\dpoints}{\widetilde{{\mathcal D}}}

\newcommand{\even}{{\sf Even}}
\newcommand{\odd}{{\sf Odd}}

\title{Fine-Grained Complexity of Continuous Euclidean $k$-Center}
\author{Lotte Blank\thanks{University of Bonn, Germany. Supported by the Deutsche Forschungsgemeinschaft (DFG, German Research Foundation) – 459420781 (FOR AlgoForGe)} \and 
Karl Bringmann\thanks{ETH Zurich, Switzerland. 
Part of this work was done while affiliated to Saarland University and Max Planck Institute for Informatics, Saarbrücken, Germany, where this work was part of the project TIPEA that has received funding from the European Research Council (ERC) under the European Unions Horizon 2020 research and innovation programme (grant agreement No.\ 850979).} \and  
Parinya Chalermsook\thanks{The University of Sheffield, UK.} \and 
Karthik C.\ S.\thanks{Rutgers University, USA. Supported by NSF grant CCF-2313372 and NSF CAREER Award CCF-2443697.} \and
Benedikt Kolbe \thanks{University of Bonn, Hausdorff Center for Mathematics, Lamarr Institute for Machine Learning and Artificial Intelligence, Germany.} \and
Hung Le\thanks{University of Massachusetts, Amherst. Supported by NSF grant CCF-2517033 and NSF CAREER Award CCF-2237288.} 
\and
Geert van Wordragen  \thanks{Aalto University, Finland} 
}
\date{}

\begin{document}

\maketitle

\begin{abstract}
 In the (continuous) Euclidean $k$-center problem, given $n$ points in $\mathbb{R}^d$ and an integer $k$, the goal is to find $k$ center points in $\mathbb{R}^d$ that minimize the maximum Euclidean distance from any input point to its closest center. In this paper, we establish conditional lower bounds for this problem in constant dimensions in two settings.
    
    \begin{description}
        \item[Parameterized by $k$:] Assuming the Exponential Time Hypothesis (ETH), we show that  there is no $f(k)n^{o(k^{1-1/d})}$-time algorithm for the Euclidean $k$-center problem. This result shows that the algorithm of Agarwal and Procopiuc~[SODA 1998; Algorithmica 2002] is essentially optimal. Furthermore, our lower bound rules out any $(1+\varepsilon)$-approximation algorithm running in time $(k/\varepsilon)^{o(k^{1-1/d})}n^{O(1)}$, thereby establishing near-optimality of the corresponding approximation scheme by the same authors.
        
        \item[Small $k$:] Assuming the 3-SUM hypothesis, we prove that for any $\varepsilon>0$ there is no $O(n^{2-\varepsilon})$-time algorithm for the Euclidean $2$-center problem in $\mathbb{R}^3$. This settles an open question posed by Agarwal, Ben Avraham, and Sharir~[SoCG 2010; Computational Geometry 2013]. In addition, under the same hypothesis, we prove that for any $\varepsilon > 0$, the Euclidean $6$-center problem in $\mathbb{R}^2$ also admits no $O(n^{2-\varepsilon})$-time algorithm.
    \end{description}

       The technical core of all our proofs is a novel geometric embedding of a system of linear equations. We construct a point set where each variable corresponds to a specific collection of points, and the geometric structure ensures that a small-radius clustering is possible if and only if the system has a valid solution.

    \end{abstract}

\thispagestyle{empty}

\newpage
\thispagestyle{empty}
\tableofcontents
\thispagestyle{empty}
\newpage

\clearpage
\setcounter{page}{1}

\section{Introduction}

\begin{sloppypar}The \kcenter problem is a fundamental task in combinatorial optimization and a cornerstone of \mbox{center-based} clustering, with the Euclidean \kcenter variant being a classic problem in computational geometry. The Euclidean \kcenter problem seeks to identify the optimal locations for a specific number of facilities, denoted by $k$, to serve a given set of $n$ points in a $d$-dimensional Euclidean space. The objective is to minimize the maximum distance from any point to its nearest facility. This minimax criterion ensures no single point is excessively far from a center, making it a suitable model for diverse fields such as facility location \cite{drezner2004facility}, data summarization \cite{leskovec2020mining}, robotics \cite{bullo2009distributed}, and pattern recognition \cite{duda2006pattern,theodoridis2006pattern}, where worst-case service time or response distance is critical. \end{sloppypar}

Formally, given $n$ points in $\mathbb{R}^d$ and an integer $k$ as input, the goal of the Euclidean \kcenter problem\footnote{This formulation is also known in literature as the \emph{Continuous} Euclidean \kcenter problem, as opposed to the Discrete Euclidean \kcenter problem where the centers must be picked from an input set. } is to find a set of $k$ center points in $\mathbb{R}^d$ such that the largest Euclidean distance from any of the input points to its closest center is minimized. Equivalently, it asks for $k$ balls of equal radius that cover all input points, while minimizing that radius. The problem is known to be NP-hard to approximate within a factor better than 1.82~\cite{FG88} even in the Euclidean plane, while a 2-approximation algorithm exists that runs in polynomial time~\cite{HS85,Gon85,HS86,FG88} for arbitrary dimensions.

The growing importance of big data, particularly large-scale Euclidean datasets, in statistics and machine learning \cite{rajaraman2011mining}, combined with the close connection between various data analysis tasks and \kcenter clustering (for e.g.\ \cite{krause2008near}), has motivated extensive research into the parameterized and fine-grained complexity of the Euclidean \kcenter problem.
\vspace{-0.15in}

\paragraph{Parameterized Complexity.}
The parameterized complexity of Euclidean \kcenter, with the dimensionality $d$ and the number of clusters $k$ as parameters, has been studied since the early 1990s. For the planar case ($d=2$), early algorithms achieved running times of $O(n^{2k-1})$~\cite{drezner1984p}, which was later improved to $n^{O(\sqrt{k})}$ by Hwang et al.~\cite{hwang1993b}. This line of work culminated in the algorithm by Agarwal and Procopiuc~\cite{agarwal2002exact}, who presented an algorithm with a running time of $n^{O(k^{1-1/d})}$ for any constant dimension~$d$.

Despite receiving significant attention from the computational geometry and theory of clustering communities \cite{agarwal1998efficient}, this 27-year-old result\footnote{The conference version of \cite{agarwal2002exact} appeared in 1998 \cite{AgarwalP98}.} from \cite{agarwal2002exact} remarkably remains the state-of-the-art for exact solutions. Breaking this long-standing barrier, even for the planar case, is a central open problem in computational geometry.\vspace{-0.05in}

\begin{question}\label{q1}
	For any $d\ge 2$, is there an algorithm for Euclidean \kcenter on $n$ points in $\mathbb{R}^d$ running in time $n^{o(k^{1-1/d})}$?\vspace{-0.05in}
\end{question}

For the \kcenter problem under the $\ell_{\infty}$-metric in $\mathbb{R}^2$ Marx~\cite{marx2005efficient} ruled out the existence of a fixed-parameter tractable (FPT) algorithm, i.e., an algorithm with a running time of $f(k) \cdot \text{poly}(n)$, under standard complexity assumptions like W[1]$\neq$FPT (see Hypothesis~\ref{hyp:w1}). It is conceivable that this lower bound can be extended to the Euclidean metric as well. In fact, it is conceivable that using existing techniques (such as combining \cite{marx2005efficient} and \cite{MS14}) one can rule out algorithms under the Exponential Time Hypothesis (ETH; see Hypothesis~\ref{hyp:eth}) for the Euclidean \kcenter problem on $n$ points in $\mathbb{R}^2$ running in time $n^{o(\sqrt{k})}$. However, it is not immediately clear how to answer Question~\ref{q1} for $d>2$. 
 
This stands in stark contrast to the complexity when parameterizing by dimension $d$, for which Cabello et al.~\cite{cabello2011geometric} established that Euclidean \kcenter is W[1]-hard even for just $k=2$ centers. 
In fact, their work provides a tight $n^{\Omega(d)}$ running time lower bound for Euclidean \twocenter under ETH.

Given the hardness of finding exact solutions, the focus naturally turns to approximation algorithms. 
Improving upon the classic 2-approximation algorithm~\cite{FG88}, Agarwal and Procopiuc~\cite{agarwal2002exact} provided a $(1+\varepsilon)$-approximation with a running time of $O(n \log k) + (k/\varepsilon)^{O(dk^{1-1/d})}$. 
While different constructions involving coresets in subsequent works have improved the dependence on the dimension $d$ from exponential to linear~\cite{badoiu2002approximate,badoiu2003smaller,kumar2003comuting}, \cite{agarwal2002exact} remains the state-of-the-art for any fixed, constant dimension. 
We ask whether this long-standing bound is optimal in its dependence on $k$:\vspace{-0.05in}

\begin{question}\label{q2}
	For any $d\ge 2$, is there a $(1+\varepsilon)$-factor approximation algorithm for Euclidean \kcenter on $n$ points in $\mathbb{R}^d$ running in time $ (k/\varepsilon)^{o(k^{1-1/d})} \cdot n^{O(1)}$?\vspace{-0.15in}
\end{question}

\paragraph{Fine-Grained Complexity.}

Beyond parameterized complexity, the fine-grained complexity of Euclidean \kcenter for small constants $k$ and $d$ has been a central theme in computational geometry. The planar Euclidean \twocenter problem, for instance, saw a flurry of activity in the 1990s \cite{agarwal1994planar,jaromczyk1994efficient,katz1997expander}, culminating in Sharir's celebrated near-linear time algorithm \cite{Sharir97}. This line of work ultimately settled on an optimal $O(n\log n)$ running time \cite{Eppstein97,chan1999more,wang2022planar,cho2024optimal}.

This success in the plane, however, has not been replicated in higher dimensions. Progress on the Euclidean \twocenter problem in $\mathbb{R}^3$ has been substantially slower, with the best-known algorithm running in $O(n^{3}(\log n)^5)$ time \cite{agarwal20102}, a modest improvement over the initial $O(n^{3+\varepsilon})$ bound from fifteen years prior \cite{agarwal1995vertical}. This significant gap between the planar and 3D cases led Agarwal, Ben Avraham, and Sharir~\cite{agarwal20102} to pose the following question on the  problem's underlying hardness:\vspace{-0.05in}


\begin{question}[Posed by Agarwal, Ben Avraham, and Sharir \cite{agarwal20102}] Is the \twocenter problem in
$\mathbb{R}^3$ 3-SUM-hard? This would suggest that a near-quadratic algorithm is (almost) the best possible for
this problem.\label{q3}\vspace{-0.05in}
\end{question}

Their question is motivated by a second, randomized algorithm from the same paper, which achieves an expected running time of $O((n^2
(\log  n)^5)/(1 - r^*/r_0)^3)$,   where  $r^*$ is the common radius of the \twocenter balls and $r_0$ is the radius of the smallest enclosing ball of the input pointset. This hints that a near-quadratic time is achievable, at least in cases where the geometric quantity $(1 - r^*/r_0)$ is not too small.

A similar complexity gap may also exist when increasing the number of centers in the plane. While the planar \twocenter problem is now considered solved, the complexity of the planar \threecenter problem remains far less understood and, as noted by \cite{ChanHY23}, has not been the subject of focused study. The current best-known running time is a decades-old $O(n^5 \log n)$
 bound, stemming from a general algorithm for 
\kcenter discussed by Hwang et al.~\cite[Page 1]{hwang1993b}. The wide gap between the near-linear time for $k=2$ and this high-polynomial time for $k=3$ motivates the following fine-grained complexity question:\vspace{-0.05in}

\begin{question}\label{q4}
	Is there a conditional lower bound of $n^{1+\varepsilon}$ for some constant $\varepsilon>0$ for the Euclidean \threecenter problem on $n$ points in $\mathbb{R}^2$ under a standard fine-grained hypothesis?\vspace{-0.05in}
\end{question}

In this paper, we answer Questions~\ref{q1}, \ref{q2}, and \ref{q3} completely, and also make partial progress on Question \ref{q4}. But before we detail our results in Section~\ref{sec:results}, we highlight in the next subsection some existing technical challenges that we had to address and overcome.

\subsection{Technical Challenges} 
The difficulty in resolving Questions~\ref{q1}, \ref{q2}, \ref{q3}, and \ref{q4} stems from two fundamental properties of the problem setting: the search for a continuous solution and the constraints imposed by the Euclidean metric. We elaborate on these challenges below.\vspace{-0.15in}

\paragraph{Discrete \kcenter vs.\ Continuous \kcenter.} Recall that the version of \kcenter discussed so far in the paper is referred to also as the continuous \kcenter, as opposed to the discrete \kcenter problem, in which the $k$ centers need to be picked from the input\footnote{In the literature, there is a distinction between the case where the $k$ centers need to be picked from the input points versus the case where apart from the input points (called clients), a separate set of points (called facilities) is also given as input and the goal is to pick $k$ facilities to cover all clients. The latter problem is also called the $k$-\textsf{supplier} problem.}. 
The continuous version of a clustering problem is often algorithmically easier than its discrete counterpart, as it allows an algorithm designer the flexibility to place centers anywhere in space rather than being forced to select from the input points. This performance gap is   illustrated by the \twocenter problem in the plane: the continuous version is solvable in optimal $O(n\log n)$ time \cite{cho2024optimal}, whereas the best-known algorithm for discrete \twocenter remains $O(n^{4/3}\text{polylog } n)$ \cite{AgarwalSW98}. Not surprisingly, this flexibility that aids algorithm design is a major impediment to proving conditional lower bounds for the continuous version.

This intuition  is confirmed by literature: the discrete versions of our open questions have recently seen substantial progress on lower bounds where their continuous counterparts have not. 
For the parameterized complexity questions, the W[1]-hardness and an $n^{\Omega(\sqrt{k})}$ ETH-based lower bound for planar discrete \kcenter were established in \cite{feldmann2020parameterized}. Subsequently, Questions~\ref{q1} and \ref{q2} were fully resolved\footnote{Both the exact and approximate algorithm for the continuous Euclidean \kcenter of \cite{agarwal2002exact} also extend with the same running time to the discrete version.} for the discrete case in all constant dimensions by Chitnis and Saurabh~\cite{ChitnisS22}. Similarly for the fine-grained complexity questions, and in particular Question~\ref{q3}, Chan, He, and Yu~\cite{ChanHY23} proved that under the  Hyperclique Hypothesis, the discrete \twocenter problem in $\mathbb{R}^{13}$ cannot be solved in time $n^{2-\varepsilon}$, for any $\varepsilon>0$, and in general that the discrete \kcenter problem in $\mathbb{R}^{7k}$ cannot be solved in time $n^{k-\varepsilon}$.

 At a technical level, the hardness proofs in the literature for all discrete clustering problems typically follow a common template: they reduce from some hard combinatorial problem $\Pi$ by mapping its constraints to a carefully constructed point set. In this geometric embedding, distances between points are designed to represent the consistency between constraints, such that a valid selection of $k$ discrete centers minimizing the clustering objective corresponds to a satisfying assignment for~$\Pi$. This entire framework, however, faces a common roadblock when adapted to the continuous setting. The core challenge is that an optimal continuous solution may place centers in ``spurious'' locations that have no meaningful interpretation as assignments to the original instances of $\Pi$, thus breaking the reduction's soundness argument. Overcoming this hurdle is a well-known and non-trivial task, as demonstrated by the adaptations required for the NP-hardness proofs of continuous \kcenter itself~\cite{FG88}, as well as for continuous $k$-\textsf{means} and $k$-\textsf{median}~\cite{CKL22}, and even geometric Steiner tree~\cite{FGK25}.
\vspace{-0.15in}

\paragraph{The Challenge of the Euclidean Metric.}
For many geometric problems, a lower bound for the Euclidean metric implies a similar lower bound for every other $\ell_p$-metric, but the reverse is not necessarily true. This asymmetry stems from the fact that any $n$ points in the $\ell_2$-metric can be embedded into any $\ell_p$-metric in $O(\log n)$ dimensions with low distortion \cite{johnson1984extensions} (in particular see \cite{Racke_2006}). Consequently, in high dimensions, the Euclidean metric presents itself as the ``last frontier'' for proving lower bounds. This trend is clearly observed in the  NP-hardness results  of Feder and Greene~\cite{FG88} for \kcenter, wherein they established optimal inapproximability factors for the $\ell_1$ and $\ell_{\infty}$ metrics, but closing the gap between approximation and hardness in the Euclidean metric remains a major open problem. This pattern extends to other fundamental geometric problems such as $k$-\textsf{means} and $k$-\textsf{median}~\cite{CKL22}, Steiner tree~\cite{T00,FGKMP25}, and the Traveling Salesman Problem~\cite{T00}.

\noindent This difficulty also extends to the fixed-dimensional setting, albeit for different reasons. The axis-aligned nature of metrics like $\ell_{\infty}$ is highly amenable to the rigid, grid-like gadgets used in hardness reductions; indeed, the W[1]-hardness proof for planar \kcenter by Marx~\cite{marx2005efficient} was for the $\ell_{\infty}$-metric. In contrast, the abundance of geometric structure in Euclidean space, such as the uniqueness of barycenters and the existence of an inner product, provides powerful tools for algorithm design that can break the fragile analysis of such combinatorial encodings.

 \subsection{Our Results}\label{sec:results}
In this paper, we provide conditional lower bounds that address all four open questions raised earlier. Our results share a common core reduction, which we adapt to different settings. For the parameterized complexity results, this core reduction is combined with the machinery in  \cite{MS14} developed under ETH. For the fine-grained complexity results, the core reduction is combined with assumptions such as the 3-SUM hypothesis. We provide a high-level overview of this common core reduction and our techniques in Section~\ref{sec:overview}. 

\paragraph{Parameterized Lower Bounds.}
We begin by addressing Questions~\ref{q1} and \ref{q2} concerning the optimality of the $n^{O(k^{1-1/d})}$ running time and $(k/\varepsilon)^{O(dk^{1-1/d})}$ running time for the exact and approximate Euclidean \kcenter versions, respectively. Our first two theorems show that these long-standing bounds are essentially tight under ETH.

\begin{theorem}[Exact \kcenter in $\mathbb{R}^d$]\label{thm:k-center}
\begin{sloppypar}
Assuming ETH, for any computable function $f$, there is no \mbox{$f(k)\cdot n^{o(k^{1-1/d})}$}-time algorithm for the Euclidean \kcenter problem on $n$ points in $\mathbb{R}^d$.
\end{sloppypar}
\end{theorem}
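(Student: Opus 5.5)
We reduce from the $d$-dimensional grid CSP framework of Marx and Sidiropoulos~\cite{MS14}. Under ETH, a binary CSP whose primal graph is a subgraph of the $d$-dimensional grid $[N]^d$ (so $k=N^d$ variables) with domain size $n$ admits no $f(k)\, n^{o(k^{1-1/d})}$-time algorithm. Equivalently, we may take the source problem to be a system of constraints between grid-adjacent variables. Since the overview describes the core as embedding a system of linear equations, we first normalize the CSP so that every constraint is an equation between grid neighbours.

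The reduction places one variable gadget per grid cell, at positions $\lambda\cdot v$ for $v\in[N]^d$ with a large spacing $\lambda$. The gadget for variable $x_v$ is a set of points, all within distance slightly more than $1$ of a common region. It is built so that it can be covered by a constant number $c$ of unit-radius balls, and every such cover corresponds to choosing a value $a\in[n]$. The value is encoded by a tiny displacement of order $\eps\cdot a/n$ of one designated ball, in a direction toward a neighbour. The radius is set to the value $\radius$ indicated by the macro. For each grid edge $uv$ we place connector points between the two gadgets. A connector point can only be covered if the displacements chosen on its two sides are compatible, which encodes the constraint; the relation is realized by points on an arc whose coverage requires $\langle$shift$_u$ + shift$_v\rangle$ to take appropriate values. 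The total number of centers is $k' = O(k)$, the dimension stays $d$, and the number of points is $\mathrm{poly}(n,N)$. The resulting bound is $n^{o(k'^{1-1/d})}$, as required.

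Completeness is direct: from a satisfying assignment we place the centers at the encoded displacements and verify coverage. Soundness is the main obstacle, because centers are continuous and may sit at spurious locations. The plan for soundness has three steps. First, a counting or packing argument shows that each gadget contains exactly $c$ centers, using far-apart "anchor" points that force a center near each gadget. Second, because the anchor points nearly span an antipodal configuration at radius about $1$, every center is pinned to within $O(\eps^{1.7})$ of a canonical position. Third, a rounding lemma takes the actual center position to the nearest encoded value, and we show that the connector points remain uncovered unless the rounded values satisfy the constraint. The second step needs careful Euclidean estimates: the sagitta of order $\eps^2$ versus the shift of order $\eps$ is what separates slack from signal, and this is why the radius carries the $\eps^{1.7}$ term.

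Finally, the numbers involved have polynomial bit complexity, so the reduction runs in polynomial time. Combining it with the lower bound of~\cite{MS14} yields the theorem. The $f(k)$ factor is preserved because $k'$ depends only on $k$.
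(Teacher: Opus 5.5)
\textbf{There is a genuine gap.} Your outline never specifies the gadget that encodes values and constraints, and the mechanisms you do describe cannot support a soundness proof.

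First, your second soundness step contradicts your encoding. Once you fix which anchors a ball must cover, its admissible centers form an intersection of balls, which is convex. So anchors can never force a center onto a discrete set of encoded positions. If they pin it to within $O(\eps^{1.7})$ of one canonical point, no value-dependent displacement of order $\eps a/n$ survives, because $\eps/n\gg\eps^{1.7}$ for $\eps=(10n)^{-100}$. Your rounding lemma is therefore exactly where the spurious-center problem lives, and nothing in your plan controls a center lying between two encoded positions.

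Second, coverage has OR semantics: a connector point is covered as soon as one ball contains it. With each value encoded as a displacement along one direction, the displacements for which a ball misses a given point form the complement of an interval. Hence the forbidden region in the $(\mathrm{shift}_u,\mathrm{shift}_v)$-plane is a union of quadrants. Such a region cannot forbid an interior pair with sum $s_0+1$ while allowing all pairs with sums $s_0$ and $s_0+2$, so requiring the sum to lie in an arbitrary set $D_{uv}$ is not expressible this way.

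Third, one designated ball displaced toward one neighbour cannot transmit the same value to all $2d$ neighbours. Also, your normalization of the CSP to equations needs an explicit encoding, since arbitrary relations are not linear equations.

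\textbf{How the paper handles these points.} The value is never read off a center coordinate. It is the \emph{split index} of a point set $\pset_n(X,s,\vec w,1)$ on a hexagonal edge, i.e., how its points are divided between the balls at the two endpoints, so it is discrete by definition. Anchors only confine each center to a window of radius $0.3n\eps$ (\Cref{lem:K-cen_anchor_unique_solution}), which leaves room for the shifts.

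The center itself is eliminated algebraically. Take a ball near a vertex where three edges meet at angles $2\pi/3$, covering $\apoints[i],\bpoints[j],\cpoints[k]$. Each covered point gives a linear inequality in the unknown center, and since the three directions sum to zero, adding them cancels the center. This leaves $i+j+k\le 0$ at scale $\eps$, and, when $i+j+k=0$, an inequality on the array entries at scale $\eps^{1.5}$ (\Cref{lm:covering-basic_new}). The reversed sets at the other end of a shared edge give the opposite inequalities (\Cref{lm:D4-covering}).

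The two balls at the endpoints of a shared edge each see edges of both neighbouring basic curves together with the shared point set, whose split index plays the role of the sum. Each variable is a closed oriented basic curve with consistency points, along which the split index can only decrease (\Cref{lem:2D_consistency_new}). Closing up the curves, together with the opposite inequalities at the two ends of each shared edge, forces all these inequalities to be equalities. This propagates one value to every shared edge of a curve.

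Finally, the CSP is first converted into binary SumSet using a proper $2$-colouring with $D_v\subseteq[n]\cdot n^{c(v)}$, so that a sum in $D_{uv}$ determines the pair (\Cref{lm:binary-sumset}). Membership is then encoded by array entries in $\{1,2\}$ and $\{-2,0\}$ (\Cref{obs:defX_i}).

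Your proposal lacks analogues of these ingredients: cancellation of the center via zero-sum edge directions, discrete split indices, closed cycles that turn inequalities into equalities, and the SumSet encoding.
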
 

In addition, the proof of the above theorem also implies the following inapproximability result.

\begin{theorem}[Approximate \kcenter in $\mathbb{R}^d$]\label{thm:k-center-approx}
Assuming ETH, for any computable function $f$, there is no $f(k)\cdot (k/\varepsilon)^{o(k^{1-1/d})}\cdot n^{O(1)}$ time $(1+\varepsilon)$-factor approximation algorithm for the Euclidean \kcenter problem on $n$ points in $\mathbb{R}^d$.
\end{theorem}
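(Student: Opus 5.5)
The plan is to derive Theorem~\ref{thm:k-center-approx} from the same reduction that proves Theorem~\ref{thm:k-center}, by checking that this reduction has a multiplicative gap and only polynomially bounded coordinates. The reduction starts from a problem that, by the machinery of \cite{MS14}, has no $f(k)\,n^{o(k^{1-1/d})}$-time algorithm under ETH. A natural candidate is a $d$-dimensional grid-structured constraint satisfaction problem, or equivalently a system of linear equations on variables placed on a $d$-dimensional grid with domain size $n$. From an instance it builds in polynomial time a point set $P\subset\mathbb{R}^d$ with $|P|=\mathrm{poly}(n)$, a number of centers $k'=O(k)$, and a threshold radius $r$. It must satisfy two properties.
\begin{itemize}
\item \emph{Completeness:} if the instance is satisfiable, $P$ can be covered by $k'$ balls of radius $r$.
\item \emph{Soundness:} if the instance is unsatisfiable, $P$ cannot be covered by $k'$ balls of radius $r'$.
\end{itemize}
The only change from the exact setting is that soundness must hold with $r' = (1+\delta)r$ for an explicit gap $\delta \ge 1/\mathrm{poly}(n)$, rather than merely with $r'>r$.

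To get the gap, I would reopen the soundness argument of the core embedding. Each variable corresponds to a collection of points, and radius $r$ forces a center into one of $n$ discrete, well-separated positions encoding a value. With all coordinates being integers of magnitude $\mathrm{poly}(n)$ (after scaling, e.g.\ with $r$ of the form $1-\eps+\eps^{1.7}$ and $\eps=1/\mathrm{poly}(n)$), every ``spurious'' placement that violates a constraint leaves some point at distance at least $r+\eta$ with $\eta\ge 1/\mathrm{poly}(n)$. The reason is that the relevant distances are square roots of polynomials in $\mathrm{poly}(n)$-bounded rationals, so any strict inequality between them has slack at least inverse-polynomial. Making this quantitative is the main obstacle. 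The exact soundness proof likely argues only that some point is strictly outside the ball, so I would go through each case of that argument and lower-bound the violation by comparing squared distances, which are rational with polynomially bounded denominators. I would first try to obtain a uniform bound of the form $\eta \ge \eps^{c}$ for a constant $c$.

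With the gap established, set $\varepsilon := \delta/2 = 1/\mathrm{poly}(n)$. A $(1+\varepsilon)$-approximation algorithm on the constructed instance returns radius below $(1+\delta)r$ if and only if the source instance is satisfiable, so it decides that instance. Suppose the algorithm runs in time $f(k')\,(k'/\varepsilon)^{o(k'^{1-1/d})}\,|P|^{O(1)}$. Since $k'/\varepsilon = \mathrm{poly}(n)$ and $k'=O(k)$, this is $f'(k)\,n^{o(k^{1-1/d})}$, which contradicts the ETH lower bound for the source problem used in Theorem~\ref{thm:k-center}.
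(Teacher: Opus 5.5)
Your proposal is correct and matches the paper's proof. The paper gets the inverse-polynomial gap from exactly the slack check you describe: Remark~\ref{rem:covering_basic_2eps1.7} records that soundness survives raising the radius from $1-\varepsilon+\varepsilon^{1.7}$ to $1-\varepsilon+2\varepsilon^{1.7}$. It then takes $\delta=\varepsilon^{1.7}/(1-\varepsilon+\varepsilon^{1.7})$ and absorbs $(k/\delta)^{o(k^{1-1/d})}$ into $g(k)\,n^{o(k^{1-1/d})}$, just as you do.

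Two small precisions. First, $k'/\varepsilon$ is $k'\cdot\mathrm{poly}(n)$ rather than $\mathrm{poly}(n)$, but the $k'$-part is absorbed into $f'(k)$. Second, your generic ``square roots of bounded rationals'' heuristic does not apply directly, because centers range continuously. It becomes valid once you note that an optimal radius is the enclosing radius of at most $d+1$ input points.
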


While the above lower bound holds against $(1+\varepsilon)$-factor approximation algorithms whose guarantee holds for all $\varepsilon$ (not necessarily fixed),   
\cite{bandyapadhyay2022parameterized} considered the case where $\varepsilon$ is a fixed positive constant and showed that for some constant $\varepsilon_0>0$ no $(1+\varepsilon_0)$-approximation for the $2$-dimensional Euclidean  $k$-center problem is possible in time 
$2^{o({k^{1/4}})}\cdot n^{O(1)}$, unless ETH fails. We can improve this running time lower bound to \EMPH{near-optimality} and prove that  for some constant $\varepsilon_0>0$ no $(1+\varepsilon_0)$-approximation for the $2$-dimensional Euclidean  $k$-center problem is possible in time 
$2^{o({\sqrt{k}})}\cdot n^{O(1)}$, unless ETH fails (see \Cref{rem:exp} for details).

\paragraph{Fine-Grained Lower Bounds.}

Next, we turn to the fine-grained complexity for small, constant~$k$ and~$d$. We affirmatively answer Question~\ref{q3}, posed by Agarwal, Ben Avraham, and Sharir, by proving that the \twocenter problem in $\mathbb{R}^3$ is indeed 3-SUM-hard.

\begin{theorem}[2-center problem in $\mathbb{R}^3$]\label{thm:2-center}
Assuming the 3-SUM hypothesis, for every constant $\varepsilon > 0$, there is no $O(n^{2-\varepsilon})$ time algorithm for the Euclidean \twocenter problem on $n$ points in $\mathbb{R}^3$. 
\end{theorem}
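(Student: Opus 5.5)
We plan a fine-grained reduction from a 3-SUM variant to the decision version of Euclidean \twocenter in $\mathbb{R}^3$, following the ``geometric embedding of a system of linear equations'' from the overview. The starting problem is the standard equivalent form: given sets $A,B,C$ of $n$ integers each, decide whether there exist $a\in A$, $b\in B$, $c\in C$ with $a+b=c$. Equivalently, we view it as a two-variable linear system: choose $x\in A$, $y\in B$ such that $x+y\in C$. The reduction produces $O(n)$ points in $\mathbb{R}^3$ and a threshold radius $r=1-\eps+\eps^{1.7}$ (the \texttt{radius} macro suggests exactly this scale), with $\eps$ polynomially small in $n$. The target property is that the points admit two balls of radius $r$ covering them if and only if a solution exists. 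An $O(n^{2-\delta})$ algorithm for \twocenter would then give an $O(n^{2-\delta})$ algorithm for 3-SUM, contradicting the hypothesis.

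\textbf{Construction.} We start from a symmetric ``frame'': a set of anchor points (for example, near-antipodal pairs on a unit sphere) that forces any feasible pair of radius-$r$ balls to have centers in two small, well-separated regions. This rigidity is what defeats the ``spurious center'' problem for continuous $k$-center. Within these regions, each center has three degrees of freedom. We plan to encode the variable $x$ by the position of the first center along one direction, and $y$ by the position of the second center along a related direction.

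The input numbers are placed as follows.
\begin{itemize}
\item Each $a\in A$ gives a perturbed point, slightly beyond radius $1$ from the frame's center, on a circle arc whose angular position is proportional to $\eps a$. Covering all such points with the first ball forces that ball's center to lie in a thin slab. We arrange it so that it works exactly when the center is ``tuned'' to some value $a$. The plan is to use gaps between consecutive allowed values, realized by pairs of points which force discreteness.
\item The set $B$ is encoded in the same way for the second ball.
\item For $C$, we place points that must be covered by one of the two balls. A point of $C$ lies in the union exactly when its position is consistent with the sum of the two encoded coordinates; the linear structure comes from taking the position of a point covered by both balls as depending linearly on both offsets.
\end{itemize}
The second-order term $\eps^{1.7}$ lets us absorb curvature errors. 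Ball-membership conditions are quadratic, but after scaling by $\eps$ the quadratic corrections are $O(\eps^2)\ll\eps^{1.7}$, while the linear encoding differences are of order $\eps^{1.7}$ times the integer gaps. We therefore keep the integers bounded by $n^{O(1)}$ and choose $\eps=n^{-c}$. This needs care with bit complexity, but stays polynomial.

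\textbf{Proof outline.}
\begin{enumerate}
\item Completeness: from a solution $(a,b,c)$, we write down explicit centers and verify all the distance inequalities, up to error terms.
\item Rigidity lemma: any covering with radius $r$ partitions the frame points in the intended way. The argument is that the frame points are pairwise at distance almost $2$, so each ball covers a prescribed half.
\item Soundness: the center positions satisfy linearized constraints that pin them to integer values in $A$ and $B$, and the $C$-points then force $a+b\in C$.
\end{enumerate}

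The main obstacle is the soundness step, namely ruling out continuous center placements that cover the $A$- and $B$-points without corresponding to discrete choices. Our first attempt is to bracket every allowed value by two close ``fence'' points, so that any center in between misses a fence point by at least $\Omega(\eps^{1.7})$. We would then show that these second-order violations dominate the $O(\eps^2)$ slack.
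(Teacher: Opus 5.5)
There is a genuine gap where discreteness is supposed to arise. Suppose every $A$-point must be covered by the first ball. Then the admissible centers for that ball form $\bigcap_{p}B(p,r)$ over those points. That is an intersection of balls, hence convex. It can never be a disjoint union of thin slabs, one per value $a\in A$. Adding ``fence'' points that the same ball must cover only shrinks this convex set further. Once the anchors fix which ball covers which anchor, the only source of a discrete choice in continuous $2$-center is the combinatorial assignment of the remaining points to the two balls. Your construction never uses that freedom for $A$ and $B$.

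Your $C$-gadget fails for a related reason. Whether a fixed point lies in the union is a disjunction of two conditions, each involving only one center. So no point can test a linear relation between a coordinate of ball $1$ and a coordinate of ball $2$; point positions are input data and cannot ``depend on the offsets''. A line of points covered jointly by both balls does couple them, but only through where the split between the balls falls.

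The paper turns this around: both balls touch all three gadgets, and each ball's center carries both unknowns.
\begin{itemize}
\item The sets $\apoints,\bpoints,\cpoints$ lie on vertical lines through $(1/\sqrt{2},0,0)$, $(0,1/\sqrt{2},0)$ and $(-1/2,-1/2,0)$.
\item Anchors force a top ball near $(0,0,1/\sqrt{2})$ and a bottom ball near $(0,0,-1/\sqrt{2})$. Their horizontal offsets are confined to roughly $[-t,t]$ with $t=3\lfloor n/100\rfloor+1/4$, so neither ball covers a whole line.
\item Each line therefore splits into a head (bottom ball) and a tail (top ball), and the split indices are the encoded values.
\item Write the top center as $(x^+\eps,y^+\eps,\cdot)$. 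Covering the first tail point of each line gives, up to $O(M^2\eps)$ errors, $x^+\ge x^*_i$, $y^+\ge y^*_j$ and $x^++y^++z^*_k\le 0$, where $x^*_i=-3i-X[i]\sqrt{\eps}$ and similarly for $y^*,z^*$. The $C$-line's horizontal direction $(-1/2,-1/2)$ is what makes the third constraint linear in $x^++y^+$ within a single ball.
\item Eliminating $x^+,y^+$ gives $x^*_i+y^*_j+z^*_k\le 0$, and the bottom ball gives the reverse inequality. The balls interact only through the shared split indices.
\item The pairing of points at $3i\eps+X[i]\eps^{1.5}\pm\eps$ makes a split between consecutive pairs cost $4$ (or $4\sqrt{2}$ on the $C$-line) while gaining only about $3$. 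This forces even splits and turns both inequalities into equality.
\end{itemize}
That last accounting needs uniformly spaced coarse positions. Plain 3SUM, with arbitrary gaps between consecutive values, does not provide them: a split between two far-apart values gains unbounded slack. This is why the paper reduces from Gap Convolution-3SUM. There, $i+j+k=0$ is enforced at scale $\eps$ and $X[i]+X[j]+X[k]=0$ at scale $\eps^{1.5}$. The gap, which restricts YES-solutions to $[-n/100,n/100]$, is what lets the anchors guarantee that the switching indices exist.
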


\noindent Finally, we make progress towards answering Question~\ref{q4}, which asked about a super-linear lower bound for planar \threecenter. While our techniques do not resolve the complexity for $k=3$ directly, we establish the first quadratic barrier for the planar Euclidean \kcenter problem for any small constant $k$, showing that the problem is 3-SUM-hard already for $k=6$. This provides strong evidence that the complexity of the planar Euclidean \kcenter problem jumps from near-linear for $k=2$ to quadratic for small $k>2$.

\begin{theorem}[6-center problem in $\mathbb{R}^2$]\label{thm:6-center}
Assuming the 3-SUM hypothesis, there is no $O(n^{2-\varepsilon})$ time algorithm for the Euclidean \sixcenter problem in $\mathbb{R}^2$ for any constant $\varepsilon > 0$.
\end{theorem}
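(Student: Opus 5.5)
We reduce from 3-SUM, through an intermediate linear-equation problem, to a $6$-center instance in $\mathbb{R}^2$ with $O(n)$ points. The reduction should run in $O(n)$ or $O(n\log n)$ time, so that an $O(n^{2-\varepsilon})$ algorithm for \sixcenter would refute the 3-SUM hypothesis.

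\textbf{Step 1: reformulate 3-SUM.} Start from the standard formulation: given sets $A,B,C$ of $n$ integers each, decide whether there are $a\in A$, $b\in B$, $c\in C$ with $a+b=c$. Rewrite this as a system of linear equations over three variables $x_1\in A$, $x_2\in B$, $x_3\in C$ with the single constraint $x_1+x_2-x_3=0$. We may normalize all values to lie in a tiny interval $[0,\delta]$ after scaling. Then each variable can be encoded by a one-parameter family of geometric positions, and the equation by a collinearity or distance condition.

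\textbf{Step 2: construct the instance.} We use the paper's core embedding. Each variable gets a small cluster of points whose covering by one unit disk is possible only if the disk center sits near a specific location. Consider a point set around a ``gadget arc'': a disk of radius $r=1-\eps+\eps^{1.7}$ (the \radius\ parameter) covers the gadget only if its center lies in a narrow region. For each value $v$ in a set, we place a point $p_v$ slightly perturbed along a direction, so that the disk excludes precisely those $p_v$ on one side. Covering all such points then forces the two neighboring disks to agree on a threshold value, and that threshold represents the chosen value.

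\textbf{Step 3: arrange the six disks.} Six disks are arranged cyclically, for instance at roughly the vertices of a hexagon around a central configuration. Each pair of consecutive disks shares a ``boundary'' point set $P(\cdot,\cdot)$ (the \PSEdge\ sets) that encodes one variable: the disk on one side covers the values $\le t$ and the other covers the values $> t$. With three variables and a cyclic structure, we use six edges: each variable appears twice, and the linear relation is enforced through the rigid geometry of disk positions. Each disk's center is determined up to $O(\delta)$ perturbation by its two thresholds, and the Euclidean distance constraints, linearized via first-order expansions with error $O(\eps^{1.7})$ absorbed, force threshold sums to be consistent. Such consistency is exactly $a+b=c$.

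\textbf{Step 4: prove completeness and soundness.} Completeness follows by explicit placement of the centers. Soundness is the main obstacle. We must show that any $6$ disks of radius $r$ covering the points behave canonically: each disk covers the designated cluster, and the centers are not in spurious positions. We plan to place far-apart ``anchor'' points, roughly at distance $2$ apart so that no disk covers two anchors, forcing a bijection between disks and anchor regions. Then a quantitative perturbation analysis (second-order Taylor terms of order $\eps^2\ll\eps^{1.7}$) converts the covering conditions into the linear inequalities. Integrality and the gap between distinct sums make the inequalities collapse to equality. Finally, we verify that all coordinates have polynomial bit complexity and that the instance size is $O(n)$, which gives the claimed $n^{2-o(1)}$ lower bound.
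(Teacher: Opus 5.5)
\textbf{Gap 1: no ternary junction (Step~3).} A $6$-cycle in which every disk meets exactly two point sets cannot work. Each disk then imposes only a \emph{binary} constraint on its two adjacent thresholds. By monotonicity of distances along a point set, this constraint has the form $t_{\ell+1}\le h_\ell(t_\ell)$ (or $\ge$) for a monotone $h_\ell$. Disks interact only through shared point sets, so ``rigid geometry'' adds no further coupling. A cycle of such monotone bounds can be decided in near-linear time by composing the $h_\ell$ around the cycle, so no gadget of this shape can encode $x_1+x_2=x_3$ in a 3SUM-hard way.

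What is missing is a \emph{ternary junction}. In the paper, $D_1$ near $c=(0,1)$ meets $\apoints_1,\bpoints_1,\cpoints$ along three unit directions at mutual angle $2\pi/3$. Summing the three linearized covering constraints cancels the unknown center, because the directions sum to zero, and yields $i+j+k\le 0$ (\Cref{lm:covering-basic_new}). $D_4$ near $-c$ gives the reverse inequality on $\apoints_3,\bpoints_3,\cpoints$ (\Cref{lm:D4-covering}). The other four disks sit between nearly antipodal sets, e.g.\ $\bpoints_1$ near $p$ and $\bpoints_2$ near $\hat q$ with $\|p-q\|=2$ and the tuned $\gamma$. They only transmit $j_1\ge j_2\ge j_3$ and $i_1\ge i_2\ge i_3$ (\Cref{lm:B12-prop}). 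So the incidence structure is a theta graph with seven point sets, in which $i,j$ occur three times and $k$ once.

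Relatedly, anchors pairwise $\ge 2$ apart only tell you which disk is which. You need each center pinned to within $0.3n\eps$, both for your linearization and to stop one disk from covering a whole point set. That requires several anchors at distance $\Delta=1-(n/10+1)\eps$ around each expected center, including an antipodal pair (\Cref{lem:avg_center}, \Cref{lem:fixing_center}, \Cref{lm:6cen-anchor}).

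\textbf{Gap 2: source problem and encoding (Steps~1--2).} With plain 3SUM and one point per value, a split between two disks falls between consecutive values. One disk then certifies an inequality about the \emph{last covered} elements, the other about the \emph{first uncovered} ones. Even with a ternary junction you only get $a+b+c\le 0\le a'+b'+c'$ for \emph{different} elements. This holds in NO instances such as $A=B=C=\{-1,1\}$, so integrality never collapses anything to an equality.

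The paper instead reduces from Gap Convolution-3SUM (\Cref{lm:gap-3sum-hardness}). The dominant coordinate of a point is the index term $3\lfloor m/2\rfloor\eps$. Each index gets two points at offsets $\mp\eps$, and the value $X[\lfloor m/2\rfloor]$ enters only as an $\eps^{1.5}$ perturbation. With radius $1-\eps+\eps^{1.7}$, the junction condition reads $\sum_m(3\lfloor m/2\rfloor+1-(-1)^m)\eps+(\text{values})\,\eps^{1.5}\le 3\eps^{1.7}$, and an odd (between-pair) cut costs an extra $\eps/2$. Combined with the reverse junction and the monotone chains, this forces every cut inside a pair. Both sides then see the \emph{same} $X[i],X[j],X[k]$, which yields $i+j+k=0$ and $X[i]+X[j]+X[k]=0$. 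The convolution structure is what supplies a linear relation among indices at all. The gap (YES indices at most $n/100$) keeps the completeness centers within $0.1n\eps$ of the anchor centers.
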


Henceforth, for the rest of the paper, we refer to the Euclidean \kcenter problem simply as the \kcenter problem.


\subsection{Organization of Paper}

In \Cref{sec:overview}, we give an overview of the technical ideas behind our main results. In \Cref{sec:prelims} we state the hypotheses used in this paper and in \Cref{sec:10center} we present a warm up proof of a conditional lower bound for the 10-center problem. Next, in \Cref{sec: parameterized results}, we give the lower bound for  $k$-center in $\real^{d}$, proving \Cref{thm:k-center} and \Cref{thm:k-center-approx}. Then, \Cref{sec: 2 center} describes a detailed reduction from the Convolution-3SUM to 2-center in $\real^{3}$, providing a proof of \Cref{thm:2-center}. Finally, in \Cref{sec:6CenterPlane}, we show the reduction from the same problem to $6$-center on the plane to prove \Cref{thm:6-center}. 

\section{Proof Overview}\label{sec:overview}

A fundamental building block in all of our constructions is a geometric gadget encoding the solution of the following Gap Convolution-3SUM problem: Given an array $X[-n\ldots n]$ of $2n+1$ integers in $\{-n^2, \ldots, n^2\}$, decide whether:
\begin{itemize}
    \item  \textbf{YES Case:} $\exists~ i, j, k \in \{-\lfloor \frac{n}{100} \rfloor, \dots, \lfloor \frac{n}{100} \rfloor\}$ such that $i+j+k=0$ and $X[i] + X[j] + X[k] = 0$.
    \item \textbf{NO Case:}  $\not\exists~ i, j, k \in \{-n , \dots, n\}$ such that $i+j+k=0$ and $X[i] + X[j] + X[k] = 0$. 
\end{itemize}

By a simple reduction from the Convolution-3SUM problem, one can show that this problem is 3SUM-hard (see \Cref{lm:gap-3sum-hardness}). The construction of this gadget is perhaps best illustrated in our lower bound for the 2-center problem in $\real^3$.



\begin{figure}[!htb]
\centering
\begin{subfigure}[b]{0.25\textwidth}
\centering
\begin{tikzpicture}[scale=3,>=stealth]

\draw[->,thick] (0,0) -- (1,0) node[right] {$x$};
 \draw[thick] (0,0) -- (-0.5,0); 
\draw[->,thick] (0,0) -- (0,1) node[above] {$z$};
\draw[thick] (0,0) -- (-0.6,-0.3); 
\draw[dashed,->,thick] (0,0) -- (1.0,0.5) node[above right] {$y$}; 

\fill (0,0) circle (0.5pt) node[below right] {$O$};

\pgfmathsetmacro{\a}{1/sqrt(2)}

\coordinate (A) at (\a,0);
\coordinate (B) at ({0 + \a*0.4}, {0 + \a*0.2});
\coordinate (C) at ({-0.5 + (-0.5)*0.4}, {0 + (-0.5)*0.2});

\fill[blue] (A) circle (0.6pt) node[below right] {$(\frac{1}{\sqrt2},0,0)$};
\fill[red](B) circle (0.6pt);
\node[red] at ($(B)+(-0.36,0.05)$) {$(0,\frac{1}{\sqrt2},0)$};
\fill[green!70!black] (C) circle (0.6pt) node[below left] {$(-\frac12,-\frac12,0)$};

 \node[blue]at ($(A)+(0.1,0.25)$) {$\tilde{\mathcal{A}}$};
\node[red] at ($(B)+(0.1,0.2)$) {$\tilde{\mathcal{B}}$};
\node[green!70!black] at ($(C)+(0.08,-0.1)$) {$\tilde{\mathcal{C}}$};

\draw[->,orange] (A) -- ++(0,0.5) node[right] {$s_A$};
\draw[dashed,thick,orange] (A) -- ++(0,-0.5);
\foreach \i in {1,2,3,4,5}{\fill[blue] ($(A)+(0,0.06*\i)$) circle (0.4pt);}
\foreach \i in {1,2,3,4,5}{\fill[blue!50] ($(A)+(0,-0.06*\i)$) circle (0.4pt);}

\draw[->,orange] (B) -- ++(0,0.5) node[right] {$s_B$};
\draw[dashed,thick,orange] (B) -- ++(0,-0.5);
\foreach \i in {1,2,3,4,5}{\fill[red] ($(B)+(0,0.06*\i)$) circle (0.4pt);}
\foreach \i in {1,2,3,4,5}{\fill[red!50] ($(B)+(0,-0.06*\i)$) circle (0.4pt);}

\draw[->,orange] (C) -- ++(0,0.5)node[right] {$s_C$};
\draw[dashed,thick,orange] (C) -- ++(0,-0.5);
\foreach \i in {1,2,3,4,5}{\fill[green!70!black] ($(C)+(0,-0.06*\i)$) circle (0.4pt);}
\foreach \i in {1,2,3,4,5}{\fill[green!40!black] ($(C)+(0,0.06*\i)$) circle (0.4pt);}

\end{tikzpicture}
 \vspace{2.8cm}
\caption{}
\end{subfigure}
\hfill
\begin{subfigure}[b]{0.45\textwidth}
\centering
\begin{tikzpicture}[scale=3]
\pgfmathsetmacro{\h}{1/sqrt(2)}
\pgfmathsetmacro{\r}{1} 
\pgfmathsetmacro{\rin}{sqrt(1 - \h*\h)} 
\pgfmathsetmacro{\yr}{0.22} 

\pgfmathsetmacro{\es}{0.97} 

\pgfmathsetmacro{\rinvis}{\es * \rin}
\pgfmathsetmacro{\yrvis}{\es * \yr}

\coordinate (C1) at (0, {\h});
\coordinate (C2) at (0, -{\h}); 

\draw[dashed,gray] (C1) ellipse ({\r} and {\yr*\r});
\draw[dashed,gray] (C2) ellipse ({\r} and {\yr*\r});

\fill[blue!18,opacity=0.85] (C1) circle (\r);
\fill[red!18,opacity=0.85](C2) circle (\r);

\draw[thick,blue!80] (C1) circle (\r);
\draw[thick,red!80](C2) circle (\r);

\draw[thick,blue!80] ($(C1)+(-\r,0)$) arc (180:360:{\r} and {\yr*\r});
\draw[thick,red!80]($(C2)+(-\r,0)$) arc (180:360:{\r} and {\yr*\r});

\draw[dashed,black] ({-\rinvis},0) arc (180:0:{\rinvis} and {\yrvis*\rin});
\draw[thick,black]({-\rinvis},0) arc (180:360:{\rinvis} and {\yrvis*\rin});

\node[right,scale=0.9] at ({\rinvis+0.1},0) {$z=0$};

\pgfmathsetmacro{\a}{1/sqrt(2)}

\coordinate (A) at (\a,0);
\coordinate (B) at ({0 + \a*0.4}, {0 + \a*0.2});
\coordinate (C) at ({-0.5 + (-0.5)*0.4}, {0 + (-0.5)*0.2});

\node[blue]at ($(A)+(0.1,0.25)$) {$\tilde{\mathcal{A}}$};
\node[red] at ($(B)+(0.1,0.2)$) {$\tilde{\mathcal{B}}$};
\node[green!70!black] at ($(C)+(0.08,-0.1)$) {$\tilde{\mathcal{C}}$};
\fill[blue] (A) circle (0.6pt);
\fill[red](B) circle (0.6pt);
\fill[green!70!black] (C) circle (0.6pt);

\draw[->,orange] (A) -- ++(0,0.5) node[right] {$s_A$};
\draw[dashed,thick,orange] (A) -- ++(0,-0.5);
\foreach \i in {1,2,3,4,5}{\fill[blue] ($(A)+(0,0.06*\i)$) circle (0.4pt);}
\foreach \i in {1,2,3,4,5}{\fill[blue!50] ($(A)+(0,-0.06*\i)$) circle (0.4pt);}

\draw[->,orange] (B) -- ++(0,0.5) node[right] {$s_B$};
\draw[dashed,thick,orange] (B) -- ++(0,-0.5);
\foreach \i in {1,2,3,4,5}{\fill[red] ($(B)+(0,0.06*\i)$) circle (0.4pt);}
\foreach \i in {1,2,3,4,5}{\fill[red!50] ($(B)+(0,-0.06*\i)$) circle (0.4pt);}

\draw[->,orange] (C) -- ++(0,0.5)node[right] {$s_C$};
\draw[dashed,thick,orange] (C) -- ++(0,-0.5);

\foreach \i in {1,2,3,4,5}{\fill[green!70!black] ($(C)+(0,-0.06*\i)$) circle (0.4pt);}
\foreach \i in {1,2,3,4,5}{\fill[green!40!black] ($(C)+(0,0.06*\i)$) circle (0.4pt);}


\coordinate (d0) at (0,{1+\a});
\fill[purple] (d0) circle (0.8pt) node[below] {$d^+_0$};

\coordinate (d1) at (1,{\a});
\fill[purple] (d1) circle (0.8pt) node[right] {$d^+_1$};

\coordinate (d2) at ({0+1*0.4},{\a+1*0.2});
\fill[purple] (d2) circle (0.8pt) node[above right] {$d^+_2$};

\coordinate (d3) at (-1,{\a});
\fill[purple] (d3) circle (0.8pt) node[left] {$d^+_3$};

\coordinate (d4) at ({0-0.4},{\a-0.2});
\fill[purple] (d4) circle (0.8pt) node[below right] {$d^+_4$};


\coordinate (dm0) at (0,{-1-\a});
\fill[purple] (dm0) circle (0.8pt) node[above] {$d^-_0$};

\coordinate (dm1) at (1,{-\a});
\fill[purple] (dm1) circle (0.8pt) node[right] {$d^-_1$};

\coordinate (dm2) at ({0+1*0.4},{-\a+1*0.2});
\fill[purple] (dm2) circle (0.8pt) node[above right] {$d^-_2$};

\coordinate (dm3) at (-1,{-\a});
\fill[purple] (dm3) circle (0.8pt) node[left] {$d^-_3$};

\coordinate (dm4) at ({0-0.4},{-\a-0.2});
\fill[purple] (dm4) circle (0.8pt) node[below right] {$d^-_4$};

\node[right=3em] at (d0) {$B^+$};
\node[right=3em] at (dm0) {$B^-$};
\end{tikzpicture}
\caption{}
\end{subfigure}
\caption{(a) Three sets of points $\apoints,\bpoints,\cpoints$ constructed from $X$. The point sets $\apoints,\bpoints, \cpoints$ are oriented upward. (b) Two approximate unit balls associated with a solution of the Gap Convolution 3SUM problem, encoding a 2-clustering for the point sets $\apoints$, $\bpoints$, $\cpoints$, and anchor points $d_i^+$ and $d_i^{-}$ for $i \in \{0,\ldots,4\}$.}
\label{fig:twoballs-gadget-intro}
\end{figure}

\paragraph{$2$-Center in $\real^3$.} Given an instance $X[-n \dots n]$ of the Gap Convolution-3SUM problem where each array entry is bounded by $n^2$, we create three point sets $\apoints[-2n,\ldots, (2n+1)], \bpoints[-2n,\ldots, (2n+1)]$, and $\cpoints[-2n,\ldots, (2n+1)]$ from $X$ in three vertical lines denoted by $s_A, s_B$ and $s_C$ (along the $z$-axis) that pass through three points $(\frac{1}{\sqrt{2}}, 0,0)$, $(0, \frac{1}{\sqrt{2}},0)$ and $(-1/2,-1/2,0)$, respectively. More precisely, $\apoints$, $\bpoints$  and $\cpoints$ are as follows. 

\begin{equation}
    \begin{split}
        \apoints[2i] &=  (\frac{1}{\sqrt{2}}, 0,0) + \left(3i\eps  + X[i]\eps^{1.5}-\eps\right)(0,0,1)\\
        \apoints[2i+1] &=  (\frac{1}{\sqrt{2}}, 0,0) + \left(3i\eps  + X[i]\eps^{1.5}+\eps\right)(0,0,1)\\
        \bpoints[2j] &=  (0, \frac{1}{\sqrt{2}},0) + \left(3j\eps  + X[j]\eps^{1.5}-\eps\right)(0,0,1)\\
        \bpoints[2j+1] &=  (0, \frac{1}{\sqrt{2}},0) + \left(3j\eps  + X[j]\eps^{1.5}+\eps\right)(0,0,1)\\
         \cpoints[2k] &=  (-1/2, -1/2,0) + \left(3k\eps  + X[k]\eps^{1.5}-\sqrt{2}\eps\right)(0,0,1/\sqrt{2})\\
        \cpoints[2k+1] &= (-1/2, -1/2,0) + \left(3k\eps  + X[k]\eps^{1.5}+\sqrt{2}\eps\right)(0,0,1/\sqrt{2})
    \end{split}
\end{equation}

Notice that all three point sets are ordered increasingly in the $z$-coordinate. 
Here $\eps = (10n)^{-100}$ is a very small constant. The spacing between consecutive points in $\apoints$ and $\bpoints$ is almost the same, \emph{up to tiny perturbations due to $X[i]\eps^{1.5}$ and $X[j]\eps^{1.5}$}, while the spacing between consecutive points in $\cpoints$ is $\sqrt{2}$ times smaller. The precise spacing is important in the encoding of a solution to the $2$-center problem. 

Specifically, the $2$-center solution of $\apoints\cup \bpoints\cup \cpoints$ will be two (almost) unit balls $B^+$, called the \EMPH{top ball}, centered near $(0,0,1/\sqrt{2})$ and $B^{-}$, called the \EMPH{bottom ball}, centered near $(0,0,-1/\sqrt{2})$. Our key idea is to guarantee that the coverage of the two balls w.r.t the point sets $\apoints, \bpoints,\cpoints$ corresponds exactly to a solution to the Gap Convolution 3SUM problem. More formally:
\begin{quote}
The top ball will cover points in $\apoints[> 2i], \bpoints[> 2j], \cpoints[> 2k]$ while the bottom ball will cover points in $\apoints[\leq 2i], \bpoints[\leq 2j], \cpoints[\leq 2k]$ for even indices $2i, 2j, 2k \in \{-2n,\ldots, 2n+1\}$ such that $i+j+k = 0$ and $X[i] + X[j] + X[k] = 0$. 
\end{quote}

For this idea to work, at a more technical level, we have to precisely control the potential positions of $B^+$ and $B^-$, and we do so by adding anchor points $D^+ = \{d^+_0\ldots,d^+_4\}$ and $D^- = \{d^-_0\ldots,d^-_4\}$, as shown in \Cref{fig:twoballs-gadget-intro}(b). The intuition is that $D^+$ must be covered by $B^+$ and $D^-$ must be covered by $B^-$, therefore, $B^+$ cannot cover all points in $\apoints,\bpoints$ and $\cpoints$; the same holds for $B^-$. Furthermore, both balls together cover all the points if and only if the system of linear equations has a solution. (We use irrational numbers in the reduction above, but for our purpose, a rational approximation suffices.)

\begin{figure}[!htb]
    \centering
    \begin{subfigure}[b]{0.45\textwidth}
        \centering
        \includegraphics[page=2,height=5.5cm]{figs/Gadget_6Center_Plane1.pdf}
    \end{subfigure}
    \begin{subfigure}[b]{0.50\textwidth}
        \centering
        \includegraphics[page=1,height=6cm]{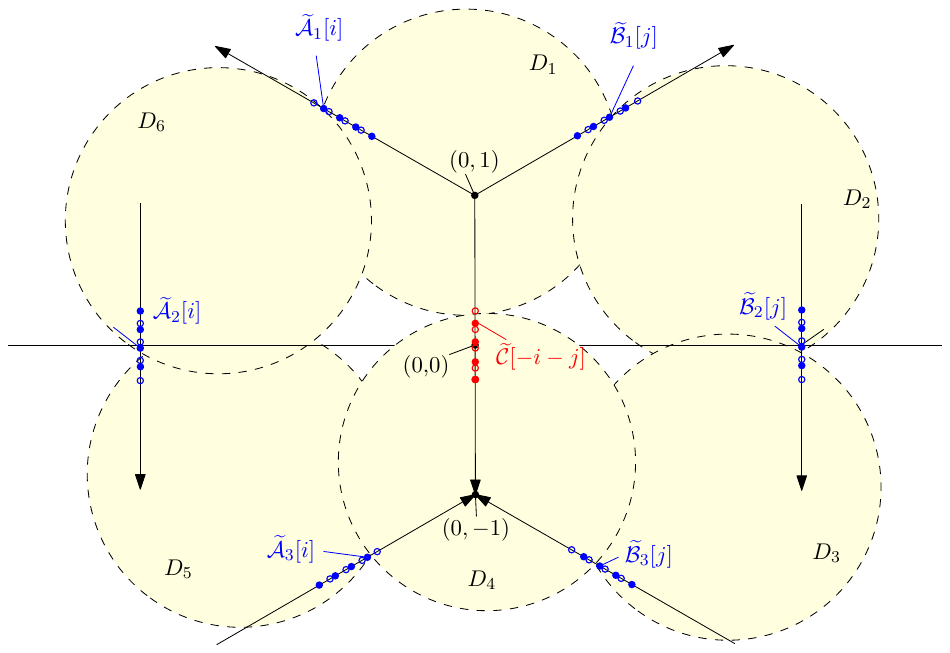}
    \end{subfigure}
    \caption{(left) The point sets $\apoints_1$, $\bpoints_1$, and $\cpoints$ are defined on edges that contain the point $(0,1)$. (right) The upward movement of $D_1$ and $D_4$ induces the movements of other disks. Anchor points are not shown.}
\label{fig:basic-gadget-intro}
\end{figure}

\paragraph{$6$-Center on the plane.} On the plane, we do not have the degree of freedom offered by an extra dimension, and for a good reason: $2$-center on the plane can be solved in $O(n\log n)$ time. This lack of freedom translates to more conceptual obstacles in encoding the linear system of equations in the Gap Convolution-3SUM problem. One clearly has to increase the number of centers, and in this paper, we manage to get by with $6$ centers for the encoding. We start by constructing the point set $\cpoints[-2n\ldots (2n+1)]$ on the edge between $(0, -1)$ and $(0,1)$. Then, we add two point sets $\apoints_1[-2n\ldots (2n+1)]$ and $\bpoints_1[-2n\ldots (2n+1)]$ on two edges that both contain the point $(0,1)$; see \Cref{fig:basic-gadget-intro}(left).

We will guarantee that in the solution to \sixcenter, there will be an (almost) unit disk $D_1$ centered near $(0,1)$, and most importantly, 
as in the 3D case, the coverage of $D_1$ relates to the solution to Gap Convolution-3SUM: 
\begin{quote}
     If $D_1$ covers $\apoints_1[\leq 2i]$, $\bpoints_1[ \leq 2j]$ and $\cpoints[\leq 2k]$, then $i+j + k \leq 0$. In case $i +j = k$, the tiny perturbation terms $X[i]\eps^{1.5}$  guarantee that $X[i] + X[j] + X[k]\leq 0$.
\end{quote}

To get the other direction of the inequality, namely  $i+j + k\geq 0$ and  $
X[i] + X[j] + X[k]\geq 0$ when $i+j +k = 0$, we add a symmetric gadget with two point sets $\apoints_3[-2n\ldots (2n+1)]$ and $\bpoints_3[-2n\ldots (2n+1)]$ on edges containing $(0,-1)$; see \Cref{fig:basic-gadget-intro}(right). Some of these new points will be covered by another almost unit disk $D_4$ centered near $(0,-1)$, which satisfies:
\begin{quote}
     If $D_4$ covers $\apoints_3[> 2i']$, $\bpoints_3[> 2j']$ and $\cpoints[> 2k]$, then $i'+j'+ k' \geq 0$.  In case $i' +j' + k' = 0$, we have $X[i'] + X[j'] +  X[k']\geq 0$.
\end{quote}

At this point, we need to guarantee two things: (i) points that are not covered by $D_1$ and $D_4$ will be covered by other disks, and (ii) $i = i', j = j'$, and $k = k'$; that is, $D_1$ and $D_4$ have to be "synchronized". We do so by adding two other vertical point sets, namely $\apoints_2[-2n\ldots (2n+1)]$ and $\bpoints_2[-2n\ldots (2n+1)]$, which induce four other disks as in \Cref{fig:basic-gadget-intro}(b).  Without going into details, the very high-level (and somewhat inaccurate) idea is that if $D_1$ and $D_4$ deviate from their expected positions (which encodes a solution to Gap Convolution-3SUM), then their movement induces the movements of other disks, and in the end, some points will not be covered. Therefore, we can only cover all the points by $6$ disks if and only if  Gap Convolution-3SUM has a solution. Despite clean high-level ideas, there are many intricate technical details that we have swept under the rug, which make the construction and the analysis of \sixcenter on the plane significantly more complicated than $2$-center in~$\real^3$.  However, if we relax our goal slightly, allowing up to $10$ centers, then the proof is simpler, and we will present it first in \Cref{sec:10center}.


\paragraph{$k$-Center in $\real^d$.}  We can rephrase our reduction for \twocenter in $\real^3$ and \sixcenter in $\real^2$ as encoding the solution to the linear equation $a+b = c$ in the Gap Convolution-3SUM problem, with $a = X[i], b = X[j]$ and $c = -X[k]$.  To get a lower bound that grows with $k$ as in \Cref{thm:k-center}, we encode a system of $\Theta(k)$ equations. To formalize this idea, we need a slightly different perspective, in terms of a constraint satisfaction problem (CSP) on graphs. 

Specifically, we view the solution to $a+b = c$ as a CSP on a graph with a single edge $(u,v)$: $u$ can be assigned an integer from $A[-n\ldots n]$, $v$ is from $B[-n\ldots n]$, and the constraint on the edge $(u,v)$ is defined by $C[-n\ldots n]$, in the sense that the value assigned to $u$, say~$a$, and the value assigned to $v$, say~$b$, must satisfy $a+b \in C[-n\ldots n]$. The gadgets we construct for \sixcenter on the plane can thus be used to realize \EMPH{one edge constraint}. 

We can generalize the above observation to a graph with more edges. We choose the grid graphs due to their regular structure in $\real^{d}$. More formally, an instance of the \EMPH{binary SumSet} problem on a grid graph is a tuple $J = (G , D, \mathcal{D})$ where:
\begin{itemize}
\item $G= (V,E)$ is an induced subgraph of the $d$-dimensional grid graph, where each vertex represents a variable. 
\item $D = [n^2]$ is the domain. 
\item $\mathcal{D}$ is a family of subsets of $D$ where vertex $v\in V$ is associated with a subset $D_v\in \mathcal{D}$ and each edge $e\in E$ is associated with a subset $D_{e}\in \mathcal{D}$. Specifically, we 2-color the graph $G$ such that a vertex $v$ gets assigned the color $c(v)\in \{0,1\}$. For vertices $u$ and $v$, $D_v$ is some subset of $[n]\cdot n^{c(v)}$, and $D_{uv}$ is a subset of their possible sums $\{a+b~|~(a,b)\in D_v\times D_u\}\subset [n^2]$. 
\end{itemize}

A \EMPH{solution} of $J$ is an assignment of the variables in $V$ such that each vertex~$v$ is assigned a value $v^* \in D_v$ and for every edge $e=(u,v)$, it holds that $u^*+v^* \in D_{e}$. 
 Roughly speaking, each edge encodes a linear equation; all the edges of the graph together encode a system of $O(k)$ equations, where $k$ is the number of edges of $G$, which is essentially also the number of vertices since grid graphs have constant degree. 

By a simple reduction from binary CSP on grid graphs~\cite{MS14}, we show that under ETH, there is no algorithm with running time $f(k)n^{o(k^{1-1/d})}$ for binary SumSet on grid graphs with $k$ variables and domain of size $n$ (\Cref{lm:binary-sumset}). Our main task then is to reduce the SumSet on grid graphs to $O(k)$-center by a geometric structure that realizes the constraint on each edge using $O(1)$ centers. For this overview, we focus on the plane case $d=2$, which already requires all major ideas.

\begin{figure}[htb]
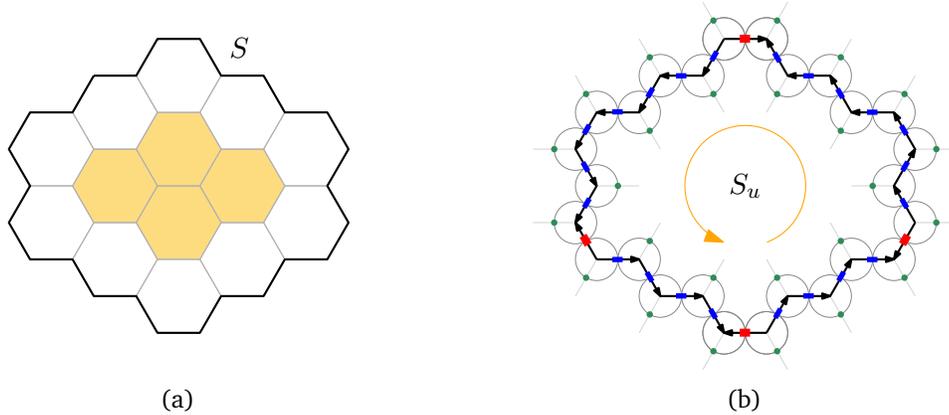

\centering
\begin{subfigure}[b]{0.45\textwidth}
\centering
\includegraphics[page=6]{figs/2D_schematic_view.pdf}
\subcaption[]{}
 \end{subfigure}
\begin{subfigure}[b]{0.45\textwidth}
\centering
\includegraphics[page=5]{figs/2D_schematic_view.pdf}
\subcaption[]{}
 \end{subfigure}
 \caption{The basic curve is a translation of the curve $S$ on the left. On the right, points are placed along the basic curve $S_u$ of some vertex $u$, and disks that cover these points. The orientation is counterclockwise.}
 \label{subfig:2D_generalk_gadget-intro}
\end{figure}

Let $G = (V,E)$ be a subset of the 2D grid graph with $k$ vertices. We associate with each $v\in V$ a curve on the plane called a \EMPH{basic curve}, denoted by $S_{v}$. The basic curve is a translation of the \EMPH{boundary} of the union of 14 hexagonal cells depicted in \Cref{subfig:2D_generalk_gadget-intro}(a). Points, constructed from the subset $D_v$ of $v$, will be added along the hexagonal edges of the basic curves; the tentative solution to $O(k)$ centers will cover all these points by $O(1)$ disks as illustrated in \Cref{subfig:2D_generalk_gadget-intro}(b). We will place a set of points $P_v[(-100n^2)\ldots (100n^2)]$ corresponding to $D_v$ on a hexagonal edge; if a disk in a clustering of all points within a basic curve contains $P_v[(-100n^2)\dots  (2i)]$ at some index~$i$, called the \EMPH{split index}, and another disk contains $P_v[(2i+1)\dots (100n^2)]$, then~$v$ will be assigned a value $i$. Our construction guarantees that $i\in D_v$ and the split indices of other point sets along all hexagonal edges will be the same, except at some special edges, called \EMPH{shared edges}. These edges contain red points in \Cref{subfig:2D_generalk_gadget-intro}.

\begin{figure}[htb]
\centering
\begin{subfigure}[b]{0.3\textwidth}
\vspace{0pt}
\centering
\includegraphics[page=2]{figs/2D_generalk_.pdf}
s \subcaption[]{Grid graph}
 \end{subfigure}%
\begin{subfigure}[b]{0.33\textwidth}
\vspace{0pt}
\centering
\includegraphics[page=4, height=4.2cm]{figs/2D_schematic_view.pdf}
\subcaption[]{Schematic view of the reduction}
 \end{subfigure}%
 \begin{subfigure}[b]{0.33\textwidth}
 \vspace{0pt}
\centering
\includegraphics[width=\textwidth]
{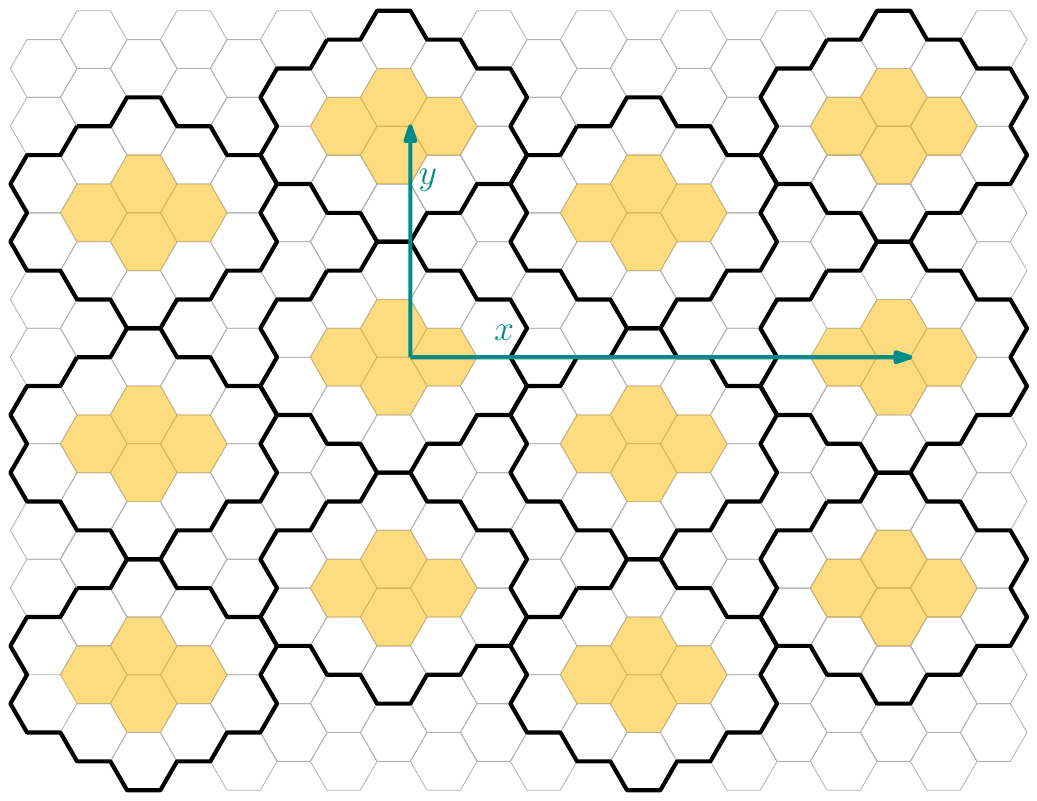}
\subcaption[]{Copies of basic curves} 
 \end{subfigure}
\caption{(a) A grid graph in $\real^2$ with $k = 9$ vertices; (b) A schematic view of the reduction; the shared edges marked in (a) are highlighted. Each arrow denotes the orientation of the edges of the basic curve. (c) Copies of nonoverlapping basic curves with the same incidence structure as the full grid graph. Translational symmetries are indicated by the green arrows.}\label{fig:d2kcenBasic-intro}
\end{figure}

Each basic curve $S_v$ is also associated with an orientation, either clockwise or counterclockwise. The orientations of all the basic curves are designed in such a way that, when we assemble them together to form the grid graph $G$, the shared hexagonal edge of two neighboring basic curves (of two neighboring vertices in $G$) has the same orientation. See \Cref{fig:d2kcenBasic-intro} for an illustration. The shared hexagonal edge between two basic curves, say $S_u$ and $S_v$, of an edge $e=(u,v)\in G$ contains points derived from $D_e$ to encode the constraint $u^* + v^* \in D_{e}$. The intuition here is the same as the intuition for the \sixcenter on the plane depicted in \Cref{fig:basic-gadget-intro}. Most of the technical details, including the placement of anchor points and the introduction of small perturbations to the points, are to guarantee that all the disks in the solution to $O(k)$-center behave as expected. The split index on each non-shared hexagonal edge of each basic curve~$S_u$ corresponds to the assigned value for vertex $u$, which together form a solution to the binary Sumset problem on $G$.

The construction for $d\geq 3$ is very similar: each vertex $u$ has a $d$-dimensional basic curve $S_u$ in $\real^{d}$. The essential additional challenge is that the basic curves now have to arrange in a way as to mimic the connectivity of the grid graph; details are given in \Cref{subsec:kcenter-Ddim}.

\section{Preliminaries}\label{sec:prelims}

In this section, we list the complexity-theoretic assumptions used in this paper. 
Our parameterized complexity results (\Cref{sec: parameterized results}) rely on the Exponential Time Hypothesis and the assumption that ${\sf W}[1] \neq {\sf FPT}$.  

\begin{hypothesis}[Exponential Time Hypothesis (ETH) \cite{IP01,IPZ01}]\label{hyp:eth}
There is some $\delta>0$ such that no $O(2^{\delta n})$-time algorithm that takes as input a 3SAT formula on $n$ variables can decide if it is satisfiable. 
\end{hypothesis}

\begin{hypothesis}[\(\mathsf{W}\lbrack 1\rbrack \neq \mathsf{FPT}\)~\cite{DowneyF13}]\label{hyp:w1} For every computable function \(f\),
there is no \(f(k)\cdot \operatorname{poly}(n)\)-time algorithm that can decide whether an input graph on \(n\) vertices contains a clique of size \(k\).
\end{hypothesis}

In the fine-grained complexity setting, our results (\Cref{sec: 2 center} and \Cref{sec:6CenterPlane}) rely on the assumed complexity of the 3SUM problem. 

\paragraph{3SUM Problem.} Given an array $A[1\ldots n]$ of integers in $\{-N,\ldots, N\}$, decide whether there exist $i,j,k\in [n]$ such that $A[i] + A[j] + A[k] = 0$. 

\begin{hypothesis}[3SUM Hypothesis \cite{gajentaan1995class}]
\label{hyp:3SUM} 
For every $\epsilon >0$, 3SUM cannot be solved in time $O(n^{2-\epsilon})$.   
\end{hypothesis} 

P{\u{a}}tra{\c{s}}cu~\cite{Patrascu10} in his seminal work found a reduction from 3SUM to the Convolution-3SUM problem defined below, ruling out subquadratic time for the latter. See also~\cite{KopelowitzPP16,ChanH20} for randomized and deterministic variants of this reduction.
As already noted in~\cite{AmirCLL14}, the range of elements can be made $\{1,\ldots,n^2\}$ using a randomized hashing reduction from~\cite{BaranDP08,Patrascu10}.

\paragraph{Convolution-3SUM problem.}  Given an array $A[1\ldots n]$ of integers in $\{1,\ldots,n^2\}$, determine whether there exist $i,j,k \in \{1,\ldots,n\}$ such that: (i) $i+j = k$ and (ii) $A[i] + A[j] = A[k]$. 

\begin{theorem}(\cite{Patrascu10})
\label{thm:conv3sum}
For every $\epsilon >0$, Convolution-3SUM cannot be solved in time $O(n^{2-\epsilon})$, unless the 3SUM Hypothesis fails. 
\end{theorem}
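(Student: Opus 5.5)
This result is due to P{\u{a}}tra{\c{s}}cu, so the plan follows his reduction from 3SUM to Convolution-3SUM. We assume an $O(n^{2-\epsilon})$ algorithm for Convolution-3SUM and solve 3SUM in $O(n^{2-\epsilon'})$ time.

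\textbf{Hashing.} Take a 3SUM instance $A[1\ldots n]$. Choose an almost-linear hash function $h:\mathbb{Z}\to[R]$ with $R = n^{1-\delta}$ buckets, for example $h(x) = ((ax) \bmod p) \bmod R$ with a random multiplier. It satisfies $h(x)+h(y) \in h(x+y) + \{0, \text{small constant offsets}\} \pmod R$. In expectation each bucket receives $n^{\delta}$ elements. Elements landing in overfull buckets (more than about $3n^{\delta}$) are few in expectation, and any triple that uses one of them can be checked directly in $O(n)$ time per such element, at total cost $O(n^{2-\delta})$.

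\textbf{Building the instances.} Enumerate triples of in-bucket ranks $(r_1,r_2,r_3)\in[3n^\delta]^3$ and the $O(1)$ possible hash offsets. For each choice, build a Convolution-3SUM array of length $O(R)$ by placing the element of rank $r_1$ of bucket $b$ at position $b$ in the first array. Do the same for ranks $r_2$ and $r_3$, with the second and third "arrays" realised as different regions of a single array.

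\textbf{Correctness.} A 3SUM triple $x+y+z=0$ has bucket indices $b_x,b_y,b_z$ with $b_x+b_y\equiv -b_z$ up to the offset. It is therefore detected by the instance with the right ranks and offset, after the standard rewriting to the form $i+j=k$, $A[i]+A[j]=A[k]$. Negating values and encoding array identities in the high-order digits both of values and of indices handles this rewriting. Conversely, any Convolution-3SUM witness yields a genuine 3SUM solution, because the values are the original integers (suitably encoded). This gives $n^{3\delta}$ instances of size $O(n^{1-\delta})$, for a total time of $n^{3\delta}\cdot n^{(1-\delta)(2-\epsilon)}$. Choosing $\delta$ small relative to $\epsilon$ makes this $n^{2-\epsilon'}$.

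\textbf{Value range.} The remark in the paper says values can be made $\{1,\ldots,n^2\}$: first reduce 3SUM to a range $\mathrm{poly}(n)$ by hashing modulo a random prime of size about $n^{2}$ (after the standard universe reduction), then shift by a constant. The modular wrap-around creates $O(1)$ cases, each handled by a separate instance.

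\textbf{Main obstacle.} The hardest part is the hashing analysis: proving that linear hashing has the required almost-linearity (with only $O(1)$ offsets) and good enough load balance to bound the overfull buckets. I would first try the multiply-shift family of Dietzfelbinger, for which both properties are known, and cite \cite{Patrascu10,KopelowitzPP16} for the randomized version and \cite{ChanH20} for the deterministic one.
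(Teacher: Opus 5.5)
The paper gives no proof of its own; it only cites P{\u{a}}tra{\c{s}}cu. Your skeleton is exactly his reduction: almost-linear hashing into $R=n^{1-\delta}$ buckets, brute force for overfull buckets, one instance of size $O(R)$ per rank triple and offset, and exponent $3\delta+(1-\delta)(2-\epsilon)=2-\epsilon+\delta+\delta\epsilon$.

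\textbf{The gap: the value-range step.} You cannot skip it, because the paper's Convolution-3SUM requires entries in $\{1,\ldots,N^2\}$ for an instance of length $N$. Here $N=\Theta(n^{1-\delta})$, while the bucketed values are still as large as the 3SUM universe.

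Reducing the \emph{3SUM} instance modulo a random prime $p\approx n^2$ is not sound. Among the $\Theta(n^3)$ triples you should expect about $n^3/p\approx n$ whose sum vanishes modulo $p$ but not over $\mathbb{Z}$. This is why universe reduction for 3SUM stops near $n^3$.

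The missing idea is that in Convolution-3SUM the index $k=i+j$ is forced, so an instance has only $O(N^2)$ candidate triples. The modular reduction must therefore be applied to each produced instance, before you merge the three arrays, with $p$ equal to $N^2$ times a suitable polylogarithmic factor. Then:
\begin{itemize}
\item True witnesses survive, in one of the two wrap-around cases $a+b-c\in\{0,p\}$.
\item A false positive occurs with at most a small constant probability.
\item Using $O(\log n)$ independent primes and answering YES only if all of them say YES, plus a union bound over the $n^{3\delta}$ instances, makes the error negligible.
\end{itemize}
The resulting range is $N^2$ times a polylogarithmic factor, not $N^2$. So you still have to pad the array to length $\tilde O(N)$ with dummy entries, for example all equal to three times the largest value; such entries cannot take part in a solution. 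After this step your converse direction (``the values are the original integers'') holds only with high probability.

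Separately, ``shift by a constant'' does not preserve solutions: adding $c$ to every entry turns $A[i]+A[j]=A[k]$ into $A[i]+A[j]=A[k]-c$. Instead, shift entry $i$ by $i$, which preserves solutions exactly because $i+j=k$. Alternatively, represent residues in $\{1,\ldots,p\}$.

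\textbf{A smaller issue: the overflow accounting you flagged.} To pay only $O(n^{2-\delta})$ with threshold $3n^{\delta}$, you need $O(n^{1-\delta})$ overflow elements. A collision probability of $O(1/R)$ does not give this. A bucket of size $s>T$ contains at least $sT/2$ colliding pairs, so with $O(n^2/R)$ expected pairs you only bound the expected number of elements in such buckets by $O(n^2/(RT))$, which is $O(n)$ for $T=3n/R$.

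What you need is near pairwise independence. Then the bucket loads $X_b$ satisfy $\sum_b\mathrm{Var}(X_b)=O(n)$, whence $\sum_b\mathbb{E}[X_b\mathbf{1}[X_b>3n/R]]\le\frac{3R}{4n}\sum_b\mathrm{Var}(X_b)=O(R)$.

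Your family $((ax)\bmod p)\bmod R$ uses only a multiplier and is not pairwise independent. Two fixes work:
\begin{itemize}
\item Use $((ax+b)\bmod p)\bmod R$ instead. It is pairwise independent with nearly uniform marginals, and it is still almost linear with three known offsets.
\item Keep mere universality but raise the threshold to $n^{\delta+\delta'}$. Then $O(n^{1-\delta'})$ elements overflow, and the extra factor $n^{3\delta'}$ in rank triples leaves the total subquadratic for small $\delta,\delta'$.
\end{itemize}
Incidentally, the offsets of your family are $\{0,p\bmod R\}$, not small ones, but only their number matters.
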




Next, we introduce the Gap-Convolution-3SUM problem, which has a gap in the admissible range for $i,j,k$ and is more symmetric than Convolution-3SUM. 


\paragraph{Gap-Convolution-3SUM problem.} Given an array $X[-n\ldots n]$ of $2n+1$ integers in $\{-n^2, \ldots, n^2\}$, decide whether:
\begin{itemize}
    \item  \textbf{YES Case:} $\exists~ i, j, k \in \{-\lfloor \frac{n}{100} \rfloor, \dots, \lfloor \frac{n}{100} \rfloor\}$ such that $i+j+k=0$ and $X[i] + X[j] + X[k] = 0$.
    \item \textbf{NO Case:}  $\not\exists~ i, j, k \in \{-n , \dots, n\}$ such that $i+j+k=0$ and $X[i] + X[j] + X[k] = 0$. 
\end{itemize}
Note that the two cases are disjoint but not exhaustive, making this a promise problem. An algorithm for Gap-Convolution-3SUM may return \textbf{anything} if the instance is not in the YES or NO case, but must return  \textbf{YES} in the YES case and  \textbf{NO} in the NO case

\begin{lemma}\label{lm:gap-3sum-hardness}
If Gap-Convolution-3SUM is solvable in time $O(n^{2-\delta})$ for some absolute constant $\delta > 0$, then the 3SUM Hypothesis fails. 
\end{lemma}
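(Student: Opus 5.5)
We reduce Convolution-3SUM to Gap-Convolution-3SUM and then invoke \Cref{thm:conv3sum}. Take a Convolution-3SUM instance $A[1\ldots n]$ with entries in $\{1,\ldots,n^2\}$, asking for $i+j=k$ and $A[i]+A[j]=A[k]$. We build an array $X[-N\ldots N]$ with $N=\Theta(n)$, say $N=1000n$. The aim is that the triples $(i,j,-k)$ live in a window $\{-\lfloor N/100\rfloor,\dots,\lfloor N/100\rfloor\}$, and that the relation $i+j+k'=0$ with $k'=-k$ reproduces $i+j=k$.

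\textbf{Sign patterns.} A naive symmetric embedding, with $X[i]=A[i]$ for $i>0$ and $X[-k]=-A[k]$, lets spurious triples (all indices positive, or mixed signs) satisfy the equations. To rule them out, we tag values by their role using a large offset $M$ (e.g.\ $M=4n^2$):
\begin{itemize}
\item For $1\le i\le n$, set $X[i]=A[i]+M$.
\item For $1\le k\le n$, set $X[-k]=-A[k]-2M$.
\item Set $X[0]$ and all other entries to a huge value such as $10M$.
\end{itemize}
Since all indices used lie in $\{-n,\dots,n\}$, they sit inside $\{-\lfloor N/100\rfloor,\dots,\lfloor N/100\rfloor\}$. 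The entries are bounded by $O(n^2)=O(N^2)$, so after rescaling $N$ they lie in $\{-N^2,\ldots,N^2\}$.

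\textbf{Correctness.} The YES direction is immediate: a Convolution-3SUM solution $(i,j,k)$ gives the triple $(i,j,-k)$, which lies in the small window and satisfies both equations. For the NO direction, take any triple in $\{-N,\dots,N\}$ with $X$-sum zero and write each value as $\pm(\text{small})+cM$ with $c\in\{1,-2,10\}$.
\begin{itemize}
\item The small parts total at most $3n^2<M$ in absolute value, so the $M$-multiples must cancel exactly: $c_1+c_2+c_3=0$.
\item With $c_i\in\{1,-2,10\}$, the only solution is $\{1,1,-2\}$. This forces two positive-index "$A$" entries and one negative-index entry.
\item The sum condition then reads $A[i]+A[j]=A[k]$, and the index condition gives $i+j=k$.
\end{itemize}
Hence a YES for Gap-Convolution-3SUM on this instance implies a Convolution-3SUM solution. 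So a Convolution-3SUM NO instance maps to a Gap NO instance, and a YES instance maps to a Gap YES instance.

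\textbf{Running time.} The reduction takes $O(n)$ time. An $O(N^{2-\delta})$ algorithm for Gap-Convolution-3SUM would therefore solve Convolution-3SUM in $O(n^{2-\delta})$ time, contradicting \Cref{thm:conv3sum} and hence the 3SUM Hypothesis.

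The main point needing care is the offset argument excluding spurious triples, in particular those reusing the same index or the filler entries; the choice of $10M$ for filler handles this. The case $i=j$ is permitted in both problems, so it causes no issue.
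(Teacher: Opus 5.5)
Your proof is correct, but it forces the sign pattern by a different mechanism than the paper.

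\textbf{The paper's route.} It leaves the values essentially untouched, with $X[n+i]=A[i]$, $X[-2n-i]=-A[i]$, filler $3n^2$, and $n'=300n$. It shifts the \emph{indices} instead. A filler is excluded because any triple containing $3n^2$ has sum at least $n^2>0$. After that, the index equation $i'+j'+k'=0$ alone forces two indices from $\{n+1,\dots,2n\}$ and one from $\{-3n,\dots,-2n-1\}$.

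\textbf{Your route.} You keep the natural indices $(i,j,-k)$ and tag the \emph{values} with multiples of $M=4n^2$. The value equation alone then forces the pattern $\{1,1,-2\}$ by separation of scales, since $|s_1+s_2+s_3|\le 3n^2<M$. After that, the index equation directly gives $i+j=k$.

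\textbf{Trade-offs.} The paper's version keeps the entries in $\{-n^2,\dots,3n^2\}$, but it needs a larger index window ($3n\le\lfloor n'/100\rfloor$). Yours keeps all relevant indices in $\{-n,\dots,n\}$, at the cost of a constant-factor blow-up of the entries (up to $40n^2$), which your choice $N=1000n$ absorbs. Both reductions give $N=\Theta(n)$ and run in linear time, so either works.

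Your NO-direction argument correctly ranges over all triples in $\{-N,\dots,N\}$. That is exactly what is needed for every produced instance to satisfy the promise.
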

\begin{proof}
Given an instance of Convolution-3SUM, i.e., an array $A[1..n]$ of integers in $\{1,\ldots,n^2\}$, we construct an array $X[-n' \dots n']$ as follows:
\begin{itemize}
    \item $n' := 300n$.
    \item $X[n+i] := A[i]$ for $i \in [n]$.
    \item $X[-2n-i] := -A[i]$ for $i \in [n]$.
    \item $X[i] := 3n^2$ for all other $i$.
\end{itemize}
Note that the entries of $X$ are in $\{-n^2, \dots, 3n^2\} \subseteq \{-(n')^2, \dots, (n')^2\}$. The reduction is in $\Tilde{O}(n)$ time. Now we show the correctness of the reduction.

If $A$ is a YES instance, i.e., $\exists ~i,j,k \in [n]$ s.t. $i+j = k$ and $A[i]+ A[j] = A[k]$, then for $i' := n+i, j' := n+j, k' := -2n-k$, we have $i'+j'+k' = i+j-k = 0$ and $X[i'] + X[j'] + X[k'] = A[i] + A[j] - A[k] = 0$.  Furthermore, $i',j',k'\in \{-3n,\ldots ,3n\} = \{-\lfloor \frac{n'}{100} \rfloor, \dots, \lfloor \frac{n'}{100} \rfloor\}$. Thus, $X$ is a YES instance.

If $X$ is not in the NO case, then there exist $i', j', k'$ such that $i'+j'+k'=0$ and $X[i'] + X[j'] + X[k'] = 0$. Note that no $3n^2$-entry can be part of a solution of  $X[i'] + X[j'] + X[k'] = 0$ and thus, $i',j',k' \in \{n+1, \dots, 2n\}\cup \{-3n, \dots, -2n-1\}$. Observe that $i'+j'+k' = 0$ if and only if two of them come are $\{n+1, \dots, 2n\}$ and the other is from $\{-3n, \dots, -2n-1\}$. Without loss of generality, we assume $i',j' \in \{n+1, \dots, 2n\}$ and $k' \in \{-3n, \dots, -2n-1\}$. Let $i = i'-n, j = j'-n, k = -2n-k'$. Then (i) $i+j - k =  i'+j'+k ' = 0$, giving $i+j = k$ and (ii) $A[i]+ A[j] - A[k] = X[n+i] + X[n+j] + X[-2n-k] = X[i'] + X[j'] + X[k'] = 0$, giving $A[i] + A[j] = A[k]$. Thus, $A$ is not in the NO case. 
\end{proof}
\section{$10$-Center in the Plane}\label{sec:10center}
As a warm up to proving Theorems~\ref{thm:k-center},~\ref{thm:2-center},~and~\ref{thm:6-center}, we first consider the \tencenter problem in the plane. We show in this section by a reduction from the Gap Convolution-3SUM problem that the \tencenter problem in the plane   cannot be solved in time $O(n^{2-\varepsilon})$ for any $\varepsilon>0$ (assuming the $3$-SUM hypothesis). This will illustrate our approach in a conceptual and relatively simple way, and also almost works for more general $k$. Note that later, in~\Cref{sec:6CenterPlane}, we show a stronger result, namely, that we obtain the same statement already for $k=6$ using a series of technical improvements to the \tencenter case. We believe that understanding the $k=10$ case allows for a more structured and altogether cleaner presentation, making the remainder of the paper and our results easier to digest without getting distracted by the detailed calculations.

Our construction uses a hexagonal grid with edge length 2, and with horizontal edges. We assemble edges of the hexagonal grid to form a simple closed curve in~$\real^2$, a \EMPH{basic curve}, which will be key for our reductions. Our strategy is to place $O(n)$ points (depending on the input arrays) around copies of the basic curve, for which there exists a solution to \tencenter with radius \radius, where $\eps:={(10\cdot n)^{-100}}$, if Gap Convolution-3SUM is a YES-case and there is no solution to \tencenter with radius \radius if Gap Convolution-3SUM is a NO-case. To place the points, we proceed in several stages in which we add points of different types, each type serving a distinct purpose in controlling the admissible clusterings. We assume that $n\geq 100$, as otherwise the only possible solution in the YES case is $i=j=k=0$, which we can test in constant time. 

 We call edges and vertices of the hexagonal grid \EMPH{hexagonal edges} and \EMPH{hexagonal vertices}, respectively. The point in the center of a hexagonal grid cell is called \EMPH{hexagonal center vertex}. Each basic curve has an associated \EMPH{orientation}, either clockwise or counterclockwise, that we use in the construction of points along hexagonal edges. 

The first step is to ensure that the centers of disks with radius \radius appearing in the solution of the \kcenter problem lie (close to) the hexagonal (center) vertices. To guarantee this, we construct a set of \EMPH{anchor points} around each of these vertices. These represent the most primitive type of point in our point set. 

Already at this stage, we can remark that the technical adaptations to the more conceptual approach for \tencenter to tackle \kcenter amounts to 1) changing the basic curve and 2) ensuring that the clustering behaves as expected along the new curve. In \Cref{subsec:kcenter-Ddim}, we show that a further modification of the basic curves allows us to encode more dimensions, culminating in the proof of~\Cref{thm:k-center}.

\paragraph{Anchor Points.}
We construct anchor points that ensure that the centers of the \kcenter solution all lie roughly at a hexagonal (center) vertex. 
To this end, let $\mathcal{H}$ be the set of the hexagonal (center) vertices, where centers of disks of a solution to the \kcenter problem roughly lie. The cardinality of $\mathcal{H}$ is $k$. For concreteness, in this section, the \emph{rough} placement of a point at some location is always such that the distance between the point and the location is either at most $0.1n\eps$ or at most $0.3n\eps$.    
Defining the unit vectors
\begin{align*}
    &v_1=(1,0) && v_2=(\sqrt{3}/2, 1/2)&& v_3=(1/2, \sqrt{3}/2) && v_4=(0, 1) &&v_5=(-1/2, \sqrt{3}/2) && v_6=(-\sqrt{3}/2, 1/2),
\end{align*}
as well as $v_k=-v_{k-6}$ for $k=7, \dots, 12$ and $v_k=v_{k \mod 12}$ for $k>12$, yields  
\begin{align*}
    \mathcal{H}&\subset \{2a_1\cdot v_1+2a_3\cdot v_3+ 2a_5\cdot v_5\mid a_1,a_3,a_5\in \mathbb{Z}\}.
\end{align*}
For every point $h\in \mathcal{H}$, we place six anchor points around it (see \Cref{fig:K-cen_anchor_unique_solution}) such that the set of anchor points $\mathcal{A}$ is
\begin{align*}
    \mathcal{A}&=\{h+\da\cdot v_{2k}\mid h\in \mathcal{H}, k\in \{1, \dots, 6\}\}, \text{ with } \Delta=1-(n/10+1)\eps.
\end{align*}


\begin{figure}
    \centering
        \includegraphics[page=1]{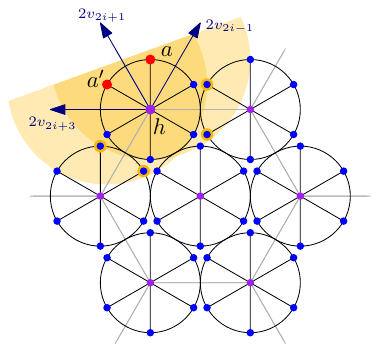}
    \caption{The hexagonal (center) vertices are drawn in purple and the anchor points are blue. The vertex $h$ is on the boundary of the convex hull of $\mathcal{H}$. Two disks of radius $2(\radius)$, centered at points of the convex hull of the hexagonal vertices, are shown in orange. The shaded points are those that can be covered by either disk.}
    \label{fig:K-cen_anchor_unique_solution}
\end{figure}

We show that there is only one way to cluster $\mathcal{A}$ into $k$ clusters with radius \radius, namely with disks that are centered roughly at the points in $\mathcal{H}$. In order to prove this, we start by showing that there does not exist a disk of radius \radius that contains more than six points of $\mathcal{A}$.

\begin{lemma}\label{lem:K-cen_anchor_at_most_6}
    Let $\mathcal{A}$ be the set of all anchor points defined by $\mathcal{H}$. Then, any disk~$D$ of radius \radius contains at most six points of $\mathcal{A}$. 
\end{lemma}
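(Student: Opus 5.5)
The approach is to forget which lattice points actually lie in $\mathcal{H}$ and argue for the full periodic set. Let $\mathcal{H}_\infty=\{2a_1v_1+2a_3v_3+2a_5v_5\}$, which is the triangular lattice of spacing~$2$, and let $\mathcal{A}_\infty=\{h+\da\cdot v_{2k}\mid h\in\mathcal{H}_\infty,\ k\in\{1,\dots,6\}\}$. Since $\mathcal{A}\subseteq\mathcal{A}_\infty$, it suffices to show that a disk of radius \radius contains at most six points of $\mathcal{A}_\infty$. Throughout, $\eps$ is so small that \radius and $\da$ can be replaced by $1$ up to an $O(n\eps)$ error. All the geometric slack below is of constant order (about $0.05$ or more), so these errors never matter.

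\textbf{Regrouping the anchors.} The directions $v_{2k}$ point from $h$ towards the centroids of the six lattice triangles incident to $h$. Each such centroid $c$ is at distance $2/\sqrt3\approx1.155$ from $h$. Hence every anchor point lies on a segment from a lattice point $h$ to the centroid $c$ of an incident triangle, at distance $\delta:=2/\sqrt3-\da\approx0.155$ from $c$. So $\mathcal{A}_\infty$ splits into \emph{triples}, one per lattice triangle, and the three points of the triple of $c$ lie at distance $\delta$ from $c$ in the directions of its three corners. The centroids form a honeycomb with edge length $2/\sqrt3$. A disk containing seven anchor points must therefore meet at least three triples. It suffices to show that a disk of radius \radius contains at most six anchor points, with equality only when the points come from the (at most) six triples around a single lattice point, namely the six anchors $h+\da v_{2k}$.

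\textbf{Case analysis on the disk center.} By the symmetry of $\mathcal{A}_\infty$ under the lattice's translations, rotations by $60^\circ$ and reflections, I may assume the disk center $p$ lies in a fundamental $30$-$60$-$90$ triangle $T$. The vertices of $T$ are a lattice point $h$, the midpoint of a lattice edge at $h$, and the centroid $c$ of a triangle at $h$. Only finitely many anchors (a few dozen) lie within distance $1+\eps$ of $T$, and I list them explicitly in coordinates. For each candidate set of anchors I then check whether it is contained in a disk of radius \radius, that is, whether its minimum enclosing disk has radius below \radius.

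Two distance facts drive the check. First, the six anchors around $h$ lie at distance $\da<\radius$ from $h$ and fit in a disk centered at $h$. Any anchor not among these six, but lying in a triple adjacent to $c$, is far from $c$. For example, take a neighboring centroid $c'$ whose triangle shares the corner $h$ with $c$. The anchor of $c'$ towards $h$ has distance $\sqrt{(2/\sqrt3)^2+\da^2-(2/\sqrt3)\da}\approx1.085>\radius$ from $c$, by the law of cosines with the $60^\circ$ angle between $c-h$ and $c'-h$. Second, two anchors from triples whose centroids are at honeycomb distance $\ge2$ are at distance at least $2-2\delta$ apart in the relevant configurations. More to the point, any set containing anchors ``on opposite sides'' of $h$ together with an extra anchor not adjacent to $h$ has circumradius exceeding $1$.

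\textbf{Main obstacle.} The hardest step is when $p$ is near the centroid $c$ or near an edge midpoint. There the disk is close to two or three triples simultaneously, and one must rule out collecting, say, all three points of $c$'s triple plus four further points from the adjacent triples. My first attempt is to bound, for $p\in T$, the number of anchors within distance $1$ of $p$ via the inequality $|p-a|\ge|c'-p|-\delta$, where $c'$ is the centroid of $a$'s triple. This limits the reachable triples to those with $|c'-p|\le1+\delta$. Within each such triple I then count only the anchors on the side facing $p$. If this coarse count does not close, I fall back on the explicit coordinate computation of minimum enclosing disks. For all seven-point subsets this yields circumradius at least roughly $1.05$, which exceeds \radius with room to spare.
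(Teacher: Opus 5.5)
\textbf{There is a genuine gap.} Your reduction to $\mathcal{A}_\infty$ and the description of the anchors as triples around centroids are correct. The problem is that the quantitative premise of your case analysis is false: the lemma is tight up to a margin of order $\eps$.

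Write $r=1-\eps+\eps^{1.7}$. Take a lattice point $h$ and set $h'=h+2v_1$, $h_t=h+2v_3$, $h_b=h+2v_{11}$, $m=h+v_1$. Consider the six anchors $h+\Delta v_2$, $h+\Delta v_{12}$, $h'+\Delta v_6$, $h'+\Delta v_8$, $h_t+\Delta v_{10}$, $h_b+\Delta v_4$, which are the two triples of the triangles on either side of the edge $hh'$. They all lie within distance $\sqrt3-\Delta\approx 0.73$ of $m$.

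Now add $h+\Delta v_4$. This seventh anchor is at distance exactly $\|h-h_b\|=2$ from $h_b+\Delta v_4$. The other five points lie strictly inside the disk having this pair as a diameter. So this seven-point set has minimum enclosing radius exactly $1$, not ``roughly $1.05$''. The lemma holds only because $1-r=\eps-\eps^{1.7}>0$.

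Several parts of your plan fail on this example:
\begin{itemize}
\item Replacing $r$ and $\Delta$ by $1$ up to an $O(n\eps)$ error, justified by ``constant-order slack'', breaks down at exactly the configuration that matters. A closed disk of radius $1$ does contain these seven anchors, and your allowed error is larger than the true margin.
\item The stronger statement you say suffices is false. A disk of radius about $0.73$ centered at $m$ contains six anchors coming from four different lattice points, not the six anchors $h+\Delta v_{2k}$ of a single $h$.
\item Your second ``distance fact'' excludes nothing, since $2-2\delta\approx 1.69$ is well below $2r$.
\end{itemize}

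\textbf{The missing idea} is that anchors come in only six directions, which makes the triple decomposition and the enumeration of enclosing disks unnecessary. Among seven anchor points in a disk, pigeonhole gives two of the form $a=h+\Delta v_{2i}$ and $a'=h'+\Delta v_{2i}$ with the same $i$. Since $a\neq a'$, we have $h\neq h'$. Then $a-a'=h-h'$ is a nonzero vector of the triangular lattice of spacing $2$, so $\|a-a'\|\ge 2>2r$, and no disk of radius $r$ contains both.

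This is the paper's proof. It handles the tight example above exactly, through the two $v_4$-anchors, with no approximation needed.
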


\begin{proof}
    Assume there exists a disk~$D$ with more than six anchor points. Then by construction of the anchor points, there are two anchor points $a$, $a'$ in~$D$ such that $a=h+\Delta v_{2i}$ and $a'=h'+\Delta v_{2i}$ with $h\neq h'$. Hence, $\|a-a'\|=\|h- h'\|\geq 2> 2\cdot (\radius)$. This is a contradiction to $a$ and $a'$ being contained in $D$.
\end{proof}

The next lemma is used to show that the only solution to \kcenter is to center the disks roughly at the points in~$\mathcal{H}$ and will be used in the \sixcenter construction again.

\begin{lemma}\label{lem:fixing_center}
    Let $h\in \mathbb{R}^2$, $D\subseteq\mathbb{R}^2$, $\delta\in (0,0.01)$ such that $\|d-h\|=1-\delta$ for all $d\in D$ and the angle between any two consecutive points $d, d'\in D$ is $\angle dhd'\leq \frac{3}{4}\pi$. Then, any disk of radius~$\radius$ containing $D$ has a center $c$ with $\|c-h\| < 2.7 \cdot \delta$.
\end{lemma}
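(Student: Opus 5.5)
Write $r=\radius$ and $c=h+tu$, where $u$ is a unit vector and $t=\|c-h\|\ge 0$; assume toward a contradiction that $t\ge 2.7\delta$. The idea is that some point of $D$ lies roughly opposite to the displacement $u$, and that point is then at distance more than $r$ from $c$.

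First I would make this precise. The points of $D$ lie on the circle of radius $1-\delta$ around $h$, and consecutive angular gaps are at most $\frac34\pi$. So some $d\in D$ has direction $w=(d-h)/(1-\delta)$ with angle to $-u$ at most $\frac38\pi$. Otherwise the open arc of half-width $\frac38\pi$ around $-u$ would contain no point of $D$. That would give a gap strictly larger than $\frac34\pi$, a contradiction. (I read ``consecutive'' as cyclically consecutive, so this covers the wraparound gap too.) Hence $\langle w,u\rangle\le -\cos(\frac38\pi)$, and $\cos(\frac38\pi)\approx 0.3827$.

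Next I would compute the distance:
\[
\|d-c\|^2=(1-\delta)^2 - 2t(1-\delta)\langle w,u\rangle + t^2 \ge (1-\delta)^2 + 2t(1-\delta)\cos(\tfrac38\pi) + t^2 .
\]
This lower bound is increasing in $t$, so it suffices to plug in $t=2.7\delta$. Since $\delta<0.01$, we get $2.7(1-\delta)\cos(\frac38\pi)\ge 2.7\cdot 0.99\cdot 0.3827\approx 1.023>1$. Therefore
\[
\|d-c\|^2 > (1-\delta)^2+2\delta+t^2 = 1+\delta^2+t^2 > 1 .
\]
On the other hand, containment requires $\|d-c\|\le r<1$, because $\eps^{1.7}<\eps$. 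This is a contradiction, so $t<2.7\delta$.

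The main obstacle is only the numerical margin: the constant $2.7$ is just big enough that $2.7\cdot 0.99\cos(3\pi/8)>1$. I would check this inequality carefully; everything else is routine. I would also note that the argument never uses $r$ beyond $r<1$, so the bound does not depend on $\eps$. If the intended meaning of ``consecutive'' excluded the wraparound, the set $D$ used later (six anchor points spaced $\pi/3$ apart) still satisfies the cyclic condition, so the argument applies there.
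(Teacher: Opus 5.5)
Your proof is correct and is essentially the paper's argument. Both choose a point $d\in D$ within angle $\frac{3}{8}\pi$ of the direction opposite to $c-h$, expand $\|d-c\|^2$ by the law of cosines, and use $r\le 1$. The only difference is the last step: the paper solves the resulting quadratic inequality to get $\|c-h\|\le \delta/(\cos(\frac{3}{8}\pi)(1-\delta))<2.7\delta$, while you substitute the threshold $t=2.7\delta$ and use monotonicity in $t$, which reduces to the same numerical check $2.7\cdot 0.99\cdot\cos(\frac{3}{8}\pi)>1$.
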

\begin{proof}
    Consider a segment of angle $3\pi/4$ of the disk of radius~$(1-\delta)$ centered at $h$ that points away from~$c$.
    By the definition of the point set $D$, this segment contains a point $d\in D$. Observe that $\angle dhc\in [\pi-\frac{3}{8}\pi, \pi]$. 
    Using the cosine formula, we get that 
    \[\cos(\angle dhc)=\frac{\|h-c\|^2+\|h-d\|^2-\|d-c\|^2}{2\cdot \|h-c\|\cdot\|h-d\|}.\]
    Hence, 
    \begin{align*}
    0=&\|h-c\|^2-\left(2\cdot \|h-d\|\cdot\cos(\angle dhc)\right)\cdot\|h-c\|+\left(\|h-d\|^2-\|d-c\|^2\right)\\
    &\geq \|h-c\|^2+2\cos(\frac{3}{8}\pi)(1-\delta)\cdot \|h-c\|-2\delta+\delta^2.
    \end{align*}
    Using $\delta\leq 0.01$ and $\sqrt{a^2+b}=a\cdot \sqrt{1+\frac{b}{a^2}}\leq a\cdot\left(1+\frac{b}{2a^2}\right)=a+\frac{b}{2a}$ for all $a>0$, $b\geq 0$, it follows that
    \begin{align*}
        \|h-c\|\leq& -\cos(\frac{3}{8}\pi)(1-\delta)+\sqrt{\left(\cos(\frac{3}{8}\pi)(1-\delta)\right)^2+2\delta-\delta^2}\\
        \leq& \frac{2\delta-\delta^2}{2\cos(\frac{3}{8}\pi)(1-\delta)}\leq \frac{\delta}{\cos(\frac{3}{8}\pi)(1-\delta)}<2.7 \cdot \delta.
    \end{align*}
\end{proof}

\begin{lemma}\label{lem:K-cen_anchor_unique_solution}
    In every solution to \kcenter of radius \radius for the set~$\mathcal{A}$ of anchor points defined by~$\mathcal{H}$ with $|\mathcal{H}|=k$, for every point $h$ in~$\mathcal{H}$ there is a disk centered at a point with distance at most $0.3n\eps$ to~$h$. Further, let $\mathcal{C}$ be a point set of $k$ points such that for each point $h$ in $\mathcal{H}$ there is a point in $\mathcal{C}$ within distance $0.1n\eps$. Then, $\mathcal{C}$ is a \kcenter solution for $\mathcal{A}$ of radius at most \radius.
\end{lemma}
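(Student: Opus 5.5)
The plan is to prove the two parts separately, using counting for the first and \Cref{lem:fixing_center} for the localization.

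\textbf{First part.} Suppose $k$ disks of radius \radius cover $\mathcal{A}$. We have $|\mathcal{A}|=6k$, because the six anchor points around different hexagonal vertices are distinct: points attached to different $h$ are at distance at least $2-2\Delta>0$. By \Cref{lem:K-cen_anchor_at_most_6} each disk contains at most six anchor points. Hence every disk contains exactly six anchor points, and each anchor point lies in exactly one disk. Next I will show that a disk containing six anchor points must contain exactly the six points $h+\Delta v_{2i}$, $i=1,\dots,6$, of a single $h$.

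The proof of \Cref{lem:K-cen_anchor_at_most_6} already shows that a disk cannot contain two anchor points with the same direction index $2i$ but different $h$. So the six points in a disk use all six directions $v_2,v_4,\dots,v_{12}$. Suppose they come from at least two different hexagonal vertices. Then some two of them, $a=h+\Delta v_{2i}$ and $a'=h'+\Delta v_{2j}$ with $h\neq h'$, should be far apart. I would bound $\|a-a'\|$ from below via $\|h-h'\|\ge 2$ and the angle between $v_{2i}$ and $v_{2j}$, and look for a contradiction with the diameter $2(\radius)<2$. The cleanest version is a pigeonhole on opposite directions. The disk contains both $h_1+\Delta v_{2i}$ and $h_2+\Delta v_{2i+6}$ for every $i$, and these are the points $h_1+\Delta v$ and $h_2-\Delta v$. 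Their distance is $\|h_1-h_2+2\Delta v\|$. If $h_1=h_2$, this equals $2\Delta=2-(n/5+2)\eps$, which is smaller than the diameter and so allowed. If $h_1\ne h_2$, then $h_1-h_2$ is a lattice vector of length at least 2, and for at least one of the three opposite pairs (directions $v_2,v_6,v_{10}$ and their negatives) the inner product $\langle h_1-h_2,v\rangle$ has the right sign, giving distance greater than $2$. Choosing that pair requires arranging that all three opposite pairs come from the same two vertices. Handling this case analysis cleanly is the main obstacle. A fallback is a direct computation: since the hexagonal grid has spacing 2, any anchor point of $h'\ne h$ is at distance about $\ge 2\Delta\cdot$const from the antipodal anchor of $h$, and I would check the finitely many lattice offsets of length $2$ or $2\sqrt3$ numerically. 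Longer offsets are trivially too far apart.

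\textbf{Localization.} Once a disk contains $\{h+\Delta v_{2i}\}_{i}$, I apply \Cref{lem:fixing_center} with $\delta=(n/10+1)\eps<0.01$. Consecutive points are at angle $\pi/3\le 3\pi/4$. The lemma gives $\|c-h\|<2.7(n/10+1)\eps\le 0.3n\eps$, using $n\ge 100$. Since the disks cover all $6k$ anchor points and each covers one full group, every $h$ gets its own disk.

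\textbf{Second part.} Take $c\in\mathcal{C}$ with $\|c-h\|\le 0.1n\eps$. For each anchor point $a$ of $h$, the triangle inequality gives
\[
\|a-c\|\le \Delta+0.1n\eps = 1-\eps \le \radius,
\]
so $\mathcal{C}$ covers $\mathcal{A}$.
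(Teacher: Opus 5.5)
\textbf{There is a genuine gap in your middle step.} Your counting step is fine: each of the $k$ disks holds exactly six of the $6k$ anchor points, and each anchor lies in exactly one disk. The localization via \Cref{lem:fixing_center} and the second part are also fine, and both agree with the paper. The problem is the claim that a disk of radius $r=1-\varepsilon+\varepsilon^{1.7}$ containing six anchor points must contain exactly the six anchors of one $h$. This is false for subsets of the lattice $\{2a_1v_1+2a_3v_3+2a_5v_5\}$, which is where $\mathcal{H}$ lives: it may contain hexagonal center vertices, and the lemma is meant for arbitrary such sets.

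For a counterexample, take $\mathcal{H}\supseteq\{(0,0),(2,0),(1,\sqrt3),(1,-\sqrt3)\}$ and the disk of radius $r$ centered at $m=(1,0)$. The anchors $\Delta v_2$, $\Delta v_{12}$ of $(0,0)$ and $(2,0)+\Delta v_6$, $(2,0)+\Delta v_8$ of $(2,0)$ are at distance $\sqrt{1-\sqrt3\Delta+\Delta^2}\approx 0.52$ from $m$. The anchors $(1,-\sqrt3)+\Delta v_4$ and $(1,\sqrt3)+\Delta v_{10}$ are at distance $\sqrt3-\Delta\approx 0.73$. So this disk contains six anchors, one per direction, coming from four different vertices. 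On each of the three axes the opposite anchors come from different vertices with $\langle h_1-h_2,v\rangle\in\{-\sqrt3,-2\sqrt3\}$. Hence the ``right sign'' pair your pigeonhole needs does not exist, and a finite check of offsets of length $2$ and $2\sqrt3$ would simply turn up this configuration.

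Such a disk is admissible locally. It is excluded only because the remaining anchors could then not be split into groups of six that the other disks cover. So no argument that examines one disk at a time can prove the lemma. If $\mathcal{H}$ has no three pairwise adjacent lattice points, as for pure honeycomb vertex sets, the local claim may be rescued. That would be a weaker statement, though, and would still need the case analysis you postponed.

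\textbf{The missing ingredient is the paper's global, extremal argument, by induction on $|\mathcal{H}|$.}
\begin{itemize}
\item Pick a vertex $h$ of the convex hull of $\mathcal{H}$ and an index $i$ such that the $120^\circ$ cone at $h$ spanned by $v_{2i-1},v_{2i+1},v_{2i+3}$ contains no other point of $\mathcal{H}$. Set $a=h+\Delta v_{2i}$ and $a'=h+\Delta v_{2i+2}$.
\item Let $D_a,D_{a'}$ be the disks of radius $2r$ around $a,a'$. Then $D_a\cap D_{a'}$ contains exactly the six anchors of $h$, and $D_a\cup D_{a'}$ contains at most $10$ anchors.
\item A solution disk containing $a$ lies in $D_a$, and likewise for $a'$. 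If $a$ and $a'$ were in different solution disks, your counting would give $12$ distinct anchors in $D_a\cup D_{a'}$, a contradiction.
\item Hence one disk contains both $a$ and $a'$. Its six anchors lie in $D_a\cap D_{a'}$ and are exactly those of $h$. \Cref{lem:fixing_center} then localizes its center, and you recurse on $\mathcal{H}\setminus\{h\}$ with the remaining $k-1$ disks.
\end{itemize}
Replacing your local uniqueness claim with this step repairs the proof.
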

\begin{proof}
    The second part of the lemma follows directly by construction of the anchor points. 
    We prove the first part by induction on the number of points in $\mathcal{H}$. 
    If $\mathcal{H}=\{h\}$, then all points have to be in the same cluster. In this case we can apply \Cref{lem:fixing_center} and it follows that $\|c-h\|\leq 2.7(\frac{n}{10}+1)\eps\leq 2.7\cdot 1.1\cdot \frac{n}{10} \eps \leq 0.3n\eps$, since $n\geq 100$.
    Now let $|\mathcal{H}|=k\geq 1$. Let $h$ be a vertex of the convex hull of $\mathcal{H}$. 
    Since $h$ is a vertex of the convex hull, there exists an index $i$ such that $\mathcal{H}\cap \{h+(a2v_{2i-1}+b2v_{2i+1}+c2v_{2i+3})\mid a,b,c\in \mathbb{N}_0\}=\{h\}$. Define $a=h+\da \cdot v_{2i}$ and $a'=h+\da \cdot v_{2(i+1)}$. Then, $a$ and $a'$ are anchor points defined by $h$ and the disk $D_a$ (resp. $D_{a'}$) of radius $2\cdot (\radius)$ centered at~$a$ (resp.~$a'$) contains at most eight anchor points (see \Cref{fig:K-cen_anchor_unique_solution} and the proof of \Cref{lem:K-cen_anchor_at_most_6}). 
    Further, there are exactly the six anchor points defined by~$h$ in $D_a\cap D_{a'}$ and thus exactly $10$ anchor points in $D_a\cup D_{a'}$ (the highlighted points in \Cref{fig:K-cen_anchor_unique_solution}). By a counting argument and \Cref{lem:K-cen_anchor_at_most_6}, we know that in an optimal solution every disk contains exactly six anchor points and every anchor point is contained in exactly one disc. 
    Assume that $a$ and $a'$ are not contained in the same disc. Then, the union of the disk containing $a$ and the disk containing $a'$ contains $12$ anchor points, which leads to a contradiction. 
    Therefore, $a$ and $a'$ have to lie in the same disc. Hence, all six anchor points defined by~$h$ lie in the same disk and this disk has to be centered at a point within distance at most $0.3n\eps$ to $h$ by the calculations in the base case. By induction, it follows that in the solution to the anchor points defined by $\mathcal{H}\setminus \{h\}$ all disks are centered at points with distance at most $0.3n\eps$ to a point in $\mathcal{H}\setminus \{h\}$, which concludes the proof.
\end{proof}

\paragraph{Reduction from Gap Convolution-3SUM.}
Let $X=X[-n\dots n]$ be an instance of Gap Convolution-3SUM. Then, the task is to find indices with $i, j, k$ such that $i+j+k=0$ and $X[i]+X[j]+X[k]=0$ or to decide that no such indices with $i, j, k\in \{-\lfloor n/100\rfloor, \dots, \lfloor n/100\rfloor\}$ exist.

We define two basic curves, where each is the boundary of a hexagon of side length~$2$. They are arranged such that they share a (horizontal) hexagonal edge; see \Cref{subfig:2D_SumSet_structure}. We denote the two curves with~$S_u$ and~$S_v$. Intuitively, the two curves correspond to the indices~$i$ and~$j$. The orientation of $S_u$ is counterclockwise, while that of $S_v$ is clockwise to guarantee that the orientation on the shared edge is the same in both hexagons. We call a hexagonal edge $(a,b)$ a \EMPH{non-shared edge} if it is only part of one basic curve. Otherwise, we call it \EMPH{shared edge}. The shared edge intuitively corresponds to the index~$k$. 

By placing the six anchor points around each hexagonal vertex of the basic curves, we ensure that the centers of the \tencenter solution lie roughly at the hexagonal vertices of the two basic curves, by \Cref{lem:K-cen_anchor_unique_solution} (see \Cref{subfig:2D_SumSet_disks}).

\begin{figure}[htb]
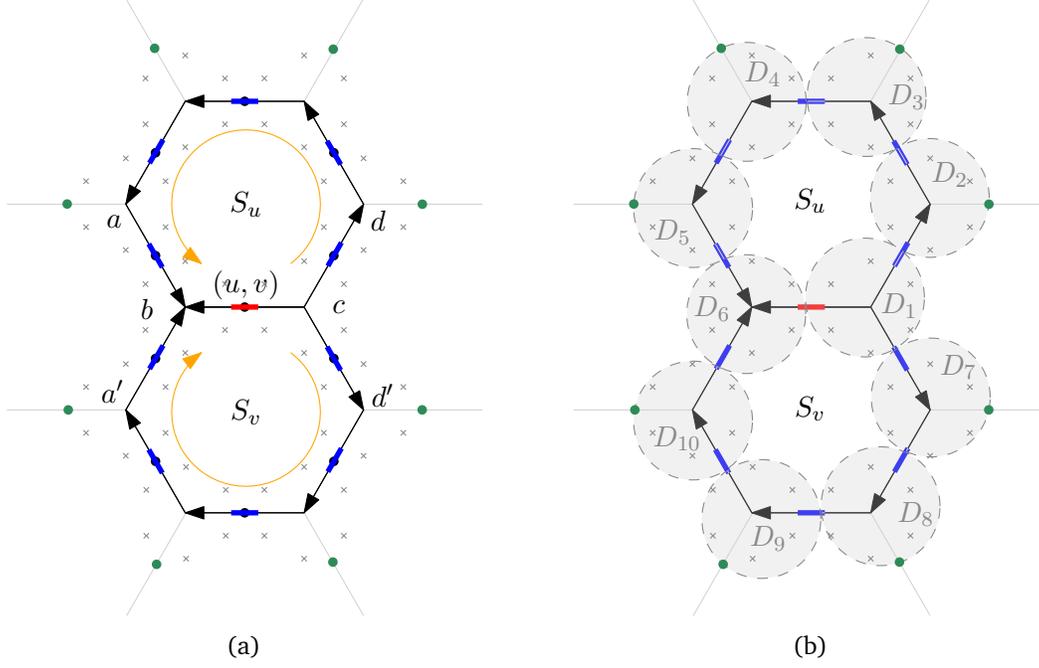

    \centering
     \begin{subfigure}[b]{0.45\textwidth}
        \centering
        \includegraphics[page=4]{figs/2D_SumSet_2.pdf}
        \subcaption[]{}
        \label{subfig:2D_SumSet_structure}
     \end{subfigure}
    \begin{subfigure}[b]{0.45\textwidth}
        \centering
        \includegraphics[page=5]{figs/2D_SumSet_2.pdf}
        \subcaption[]{}
        \label{subfig:2D_SumSet_disks}
     \end{subfigure}
    \caption{A schematic view of the reduction from Gap Convolution-3SUM to \tencenter. The points of type (1) are drawn in blue on the edges of $S_u$, $S_v$, of type (2) in red on the shared edge, and of type (3) as green circles. 
    The anchor points are drawn as gray crosses.}
    \label{fig:2dsimplified}
\end{figure}

The points placed on the edges of the basic curves are defined using a specially crafted set of points associated to an array. We will make use of the definition of this point set throughout the paper.

For a given array $X=X[-n\ldots n]$ of $(2n+1)$ integers, a central point $s\in \real^d$, a vector $\Vec{w} \in \real^d$, and an offset constant $\alpha\geq 0$, we define a set of $2(2n+1)$ points in $\real^d$, denoted by $ \pset_n(X,s,\Vec{w},\alpha)[(-2n)\ldots (2n+1)]$ where for every $i\in I_n:=\{-2n, \dots, 2n+1\}$:

\begin{equation}\label{eq:Pn-def}
    \pset_n(X,s,\Vec{w},\alpha)[i] \coloneq s + \left( 3 \lfloor i/2 \rfloor \cdot \eps + X[\lfloor i/2 \rfloor] \cdot \eps^{1.5} - (-1)^i \alpha \eps \right) \cdot \Vec{w}. 
\end{equation}
In the \tencenter, \kcenter, and \sixcenter reduction, the vector is always defined as a unit vector and $\alpha=1$.
\Cref{fig:coveringbasic} (left) illustrates the definition of the point sets $\pset_n$ from three different arrays and parameters in the plane (see also \Cref{lm:covering-basic_new}).


\begin{observation}\label{obs:P_n_and_reverse}
    For a given array $A=A[-n\dots n]$, define an array $A'=A'[-n\dots n]$ such that ${A'[i]=-A[-i]}$. Then, for any point $s\in \mathbb{R}^2$, vector $\Vec{w}$, and $\alpha\geq 0$, it holds that for all $i\in I_n$
    \[\pset_n(A, s, \Vec{w}, \alpha)[i]=\pset_n(A', s, -\Vec{w}, \alpha)[-i+1].\]
\end{observation}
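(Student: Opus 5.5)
This is a direct computation from the definition \eqref{eq:Pn-def}: substitute $-i+1$ into the right-hand side, simplify the floor and the parity sign, and compare term by term with the left-hand side. Fix $i\in I_n$. First note that $i\mapsto -i+1$ is a bijection of $I_n=\{-2n,\dots,2n+1\}$ onto itself. So the right-hand side is defined for every $i\in I_n$, and $A'$ is indexed over $\{-n,\dots,n\}$ as needed.

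The key step is the floor identity $\lfloor (-i+1)/2\rfloor = -\lfloor i/2\rfloor$. I will check it by parity. If $i=2m$, then $\lfloor(-2m+1)/2\rfloor=\lfloor -m+1/2\rfloor=-m$. If $i=2m+1$, then $\lfloor(-2m)/2\rfloor=-m$. In both cases the value is $-m=-\lfloor i/2\rfloor$. Since $-i+1$ and $i$ have opposite parity, we also get $(-1)^{-i+1}=-(-1)^i$.

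Now write $m=\lfloor i/2\rfloor$ and expand the right-hand side:
\[
\pset_n(A',s,-\Vec{w},\alpha)[-i+1]= s+\bigl(3(-m)\eps + A'[-m]\eps^{1.5} + (-1)^i\alpha\eps\bigr)\cdot(-\Vec{w}).
\]
By definition $A'[-m]=-A[m]$, so the coefficient equals $-\bigl(3m\eps + A[m]\eps^{1.5} - (-1)^i\alpha\eps\bigr)$. Multiplying by $-\Vec{w}$ gives exactly $s+\bigl(3m\eps+A[m]\eps^{1.5}-(-1)^i\alpha\eps\bigr)\Vec{w}=\pset_n(A,s,\Vec{w},\alpha)[i]$, which is the claim.

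There is no real obstacle here. The only place an error could slip in is the sign bookkeeping: the sign flip of the parity term and the sign flip of $\Vec{w}$ have to cancel correctly against the definition of $A'$. I will handle this by fixing the case split on the parity of $i$ at the start.
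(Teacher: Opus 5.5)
Your computation is correct. It is the direct check from the definition \eqref{eq:Pn-def}, which the paper leaves implicit by stating this as an observation without proof: the identity $\lfloor(-i+1)/2\rfloor=-\lfloor i/2\rfloor$, the parity flip $(-1)^{-i+1}=-(-1)^i$, and $A'[-m]=-A[m]$ together negate the scalar coefficient, and this cancels against $-\Vec{w}$.
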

Throughout this paper, we use the following notation. For any $i\in I_n$, we define 
\begin{align*}
    \pset_n(A, s, \Vec{w}, \alpha)[\leq i]&:= \{\pset_n(A, s, \Vec{w}, \alpha)[i']\mid i'\leq i, i'\in I_n\}, \text{ and}\\
    \pset_n(A, s, \Vec{w}, \alpha)[> i]&:= \{\pset_n(A, s, \Vec{w}, \alpha)[i']\mid i'> i, i'\in I_n\}.
\end{align*}

Using the definition of $\pset_n(X,s,\Vec{w},\alpha)$, we describe the points placed on the edges of $S_u$ and $S_v$. The basic curve~$S_u$ (resp. $S_v$) corresponds to an array~$X_u$ (resp. $X_v$) and the shared edge of the curves $S_u$ and $S_v$ to an array $X_{(u,v)}$. In the \kcenter reduction, these arrays differ. However, in the \tencenter reduction all sets are the same and we define $X_u=X_v=X_{(u,v)}=X$.

Let $(a,b)$ be a non-shared edge that is part of the basic curve $S_u$. Then, we add the following points on $(a, b)$, which we refer to as points of type (1):
\begin{enumerate}
    \item[(1)] $\PSEdge{a}{b}=\pset_n(X_{u},p,\overrightarrow{ap},1)$, where $p$ is the midpoint of $(a,b)$.
\end{enumerate}

Similar, we add points on the non-shared edges of $S_v$.
The orientations of the shared edge $(a,b)$ of the basic curves $S_u$ and $S_v$ agree. We add points on the edge $(b,a)$, i.e., points are ordered in the opposite direction of the shared edge $(a, b)$. For the shared edge $(b,a)$, we add a point set of type (2):
\begin{enumerate}
    \item[(2)] $\PSEdge{b}{a}=\pset_{n}(X_{(u,v)},p,\overrightarrow{bp},1)$, where $p$ is the midpoint of $(b, a)$. 
\end{enumerate}

\begin{figure}
    \centering
    \includegraphics[page=1, width=0.4\textwidth]{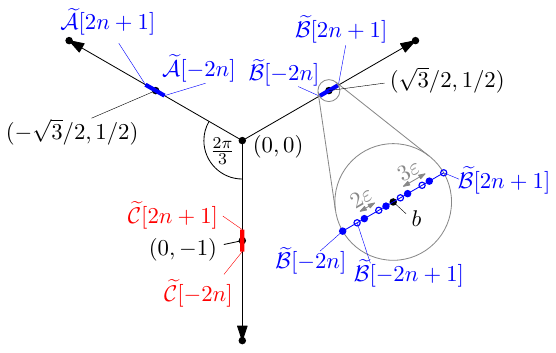}\hspace{1cm}
    \includegraphics[page=2, width=0.4\textwidth]{figs/Gadget_6Center_Plane1.pdf}
    \caption{(Left): The point sets $\widetilde{\mathcal{A}},\widetilde{\mathcal{B}},$ and $\widetilde{\mathcal{C}}$ associated to the three input arrays of \Cref{lm:covering-basic_new} (Right): A zoomed in version of the point sets, illustrating \Cref{lm:covering-basic_new}, by showing how an inequality constraint is captured by a covering condition for the constructed point sets.}
    \label{fig:coveringbasic}
\end{figure}
    
The following lemma shows how points on the shared edge together with points on their neighboring edges can be covered by a disk of radius $r=\radius$, see \Cref{fig:coveringbasic}.

\begin{lemma}\label{lm:covering-basic_new}
    For any $A[-n\dots n]$, $B[-n\dots n]$, $C[-n\dots n]$ with entries in $\{-n^2, \dots, n^2\}$, let $(z,a)$, $(z,b)$, $(z,c)$ be the three unit edges such that the angle between each pair is $2\pi/3$. Define
    \begin{align*}
        \widetilde{\mathcal{A}}:= \pset_n(A, a, \overrightarrow{za}, 1),&& \widetilde{\mathcal{B}}:= \pset_n(B, b, \overrightarrow{zb}, 1), &&\widetilde{\mathcal{C}}:= \pset_n(C, c, \overrightarrow{zc}, 1).
    \end{align*}
    For any $i, j, k\in I_n$:
    \begin{enumerate}
        \item If $i+j+k>0$, there does not exist a radius~$r:=\radius$ disk containing $\widetilde{\mathcal{A}}[i], \widetilde{\mathcal{B}}[j]$, and $\widetilde{\mathcal{C}}[k]$. \label{lm:item_i+j+k>0}
        \item If $i+j+k=0$ and if $i$ or $j$ or $k$ is odd or $A[i/2]+B[j/2]+C[k/2]>0$, then there does not exist a radius~$r$ disk containing $\widetilde{\mathcal{A}}[i], \widetilde{\mathcal{B}}[j]$, and $\widetilde{\mathcal{C}}[k]$. \label{lm:item_i+j+k=0_not}
        \item If $i+j+k=0$ and $i$ and $j$ and $k$ are even and $A[i/2]+B[j/2]+C[k/2]\leq 0$, then there exists a radius~$r$ disk containing $\widetilde{\mathcal{A}}[\leq i], \widetilde{\mathcal{B}}[\leq j]$, and $\widetilde{\mathcal{C}}[\leq k]$ and having its center within distance $3\eps(\max\{|i|, |j|\}+1)$ of $z$. \label{lm:item_i+j+k=0_yes}
    \end{enumerate}
\end{lemma}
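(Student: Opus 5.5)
The plan is to place $z$ at the origin and write $u_a,u_b,u_c$ for the unit vectors $\overrightarrow{za},\overrightarrow{zb},\overrightarrow{zc}$. Because the three edges make pairwise angles $2\pi/3$, we have $u_a+u_b+u_c=0$. By \eqref{eq:Pn-def} every point has the form $\widetilde{\mathcal{A}}[i]=(1+t_i)u_a$, and similarly for $\widetilde{\mathcal{B}}$ and $\widetilde{\mathcal{C}}$. Evaluating the definition by parity of $i$ gives
\[
t_i=\bigl(1.5\,i-1+\tfrac12[i\text{ odd}]\bigr)\eps+A[\lfloor i/2\rfloor]\,\eps^{1.5}.
\]
So every $|t_i|=O(n\eps)$. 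Since $3\eps\gg n^2\eps^{1.5}$, $t_i$ is strictly increasing in $i$. The whole argument compares squared distances to $r^2=(1-\eps+\eps^{1.7})^2=1-2\eps+2\eps^{1.7}+O(\eps^2)$. All error terms will be $O(n^2\eps^2)$, which is negligible against $\eps^{1.7}$ because $\eps=(10n)^{-100}$.

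\textbf{Step 1 (the center is close to $z$).} Let $c$ be the center of a radius-$r$ disk containing one point from each of the three rays. Since $u_a+u_b+u_c=0$, some $u_x$ satisfies $\langle u_x,c\rangle\le -|c|/2$. For that ray,
\[
\|(1+t)u_x-c\|^2\ \ge\ (1+t)^2+(1+t)|c|+|c|^2 .
\]
This exceeds $r^2$ unless $|c|=O(n\eps)$. So from now on $|c|=O(n\eps)$ is free.

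\textbf{Step 2 (items 1 and 2).} For each of the three points, expand the containment condition $\|(1+t_x)u_x-c\|^2\le r^2$ as
\[
1+2t_x-2\langle u_x,c\rangle+O(n^2\eps^2)\le r^2 .
\]
Summing the three inequalities cancels the $\langle u_x,c\rangle$ terms, because $\sum_x u_x=0$. This leaves
\[
2(t_i+t_j+t_k)\le -6\eps+6\eps^{1.7}+O(n^2\eps^2).
\]
On the other hand, the formula for $t$ gives
\[
t_i+t_j+t_k=1.5(i+j+k)\eps-3\eps+\tfrac12(\#\text{odd indices})\,\eps+\eps^{1.5}\,\Sigma X ,
\]
where $\Sigma X$ denotes $A[\lfloor i/2\rfloor]+B[\lfloor j/2\rfloor]+C[\lfloor k/2\rfloor]$. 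Each forbidden case now violates the summed inequality:
\begin{itemize}
\item If $i+j+k\ge 1$, the left-hand side exceeds the bound by about $1.5\eps-n^2\eps^{1.5}>0$.
\item If $i+j+k=0$ and some index is odd, the excess is at least $\eps/2-3n^2\eps^{1.5}$.
\item If all indices are even and $\Sigma X\ge 1$, the excess is at least $\eps^{1.5}$.
\end{itemize}
In each case the excess dominates $3\eps^{1.7}+O(n^2\eps^2)$, which gives a contradiction.

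\textbf{Step 3 (item 3, explicit construction).} Now assume $i,j,k$ are even with $i+j+k=0$ and $\Sigma X\le 0$. Let $\bar t=(t_i+t_j+t_k)/3$. Then $\bar t=-\eps+\eps^{1.5}\Sigma X/3\le -\eps$.

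Because the linear map $c\mapsto(\langle u_a,c\rangle,\langle u_b,c\rangle,\langle u_c,c\rangle)$ has exactly the sum-zero vectors as its image, we can choose $c$ with $\langle u_x,c\rangle=t_x-\bar t$ for each $x$. Then $|c|$ is $O(\max_x |t_x-\bar t|)$. Estimating this carefully, and using $k=-(i+j)$, should give $|c|\le 3\eps(\max\{|i|,|j|\}+1)$.

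For the extreme points $\widetilde{\mathcal{A}}[i]$, $\widetilde{\mathcal{B}}[j]$, $\widetilde{\mathcal{C}}[k]$, the expansion from Step 2 gives squared distance
\[
1+2\bar t+O(n^2\eps^2)\le 1-2\eps+O(n^2\eps^2)\le r^2 .
\]
For a smaller index on the same ray the parameter $t$ is smaller. Along each ray the squared distance is $1+2t-2\langle u_x,c\rangle$ up to $O(n^2\eps^2)$, which is increasing in $t$ in the relevant range $|t|=O(n\eps)$. Hence $\widetilde{\mathcal{A}}[\le i]$, $\widetilde{\mathcal{B}}[\le j]$ and $\widetilde{\mathcal{C}}[\le k]$ are all covered.

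\textbf{Main obstacle.} The delicate part is bookkeeping the second-order error terms so that they remain $O(n^2\eps^2)$ and are uniformly dominated by $\eps^{1.7}$ (and a fortiori by $\eps^{1.5}$). This needs the a priori bound $|c|=O(n\eps)$ from Step 1 to be established first. Getting the explicit constant $3\eps(\max\{|i|,|j|\}+1)$ in item 3 also requires care with the exact geometry of the three unit vectors.
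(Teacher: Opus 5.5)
Your proof is correct. For items 1 and 2 it takes a heavier route than the paper; for item 3 it is essentially the paper's argument.

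\textbf{Items 1 and 2.} The paper never expands squared distances there. It uses the projection bound $\|\widetilde{\mathcal{A}}[i]-d\|\ge\langle\widetilde{\mathcal{A}}[i]-d,u_a\rangle$ (Cauchy--Schwarz against a unit vector). This gives, for each ray, the exact linear inequality $1-\eps+\eps^{1.7}\ge 1+t_i-\langle u_a,d\rangle$. Summing the three and using $u_a+u_b+u_c=0$ yields your summed inequality with no second-order error at all. So your Step 1 localization, and the $O(n^2\eps^2)$ bookkeeping you flag as the main obstacle, are unnecessary for the impossibility parts. What your route buys is a single first-order identity, squared distance $=1+2t_x-2\langle u_x,c\rangle+O(n^2\eps^2)$, that serves both directions.

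\textbf{Item 3.} Both proofs use an explicit center, a first-order expansion, and monotonicity along the rays. The paper writes explicit coordinates for its center $d$, with $\langle d,u_a\rangle=t_i+\eps$ and $\langle d,u_b\rangle=t_j+\eps$. This puts all the slack $-\Sigma X\,\eps^{1.5}$ on the $\widetilde{\mathcal{C}}$ point. Your linear-algebra choice spreads the slack evenly over the three points; both work.

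\textbf{Two small completions.}
\begin{itemize}
\item The distance bound you left as ``should give'' closes quickly. With $s_x=t_x-\bar t$, the solution is $c=\frac{2}{3}\sum_x s_x u_x$, and $|c|^2=\frac{2}{3}\sum_x s_x^2$. The leading part is $3\eps^2(i^2+ij+j^2)\le 9\eps^2\max\{|i|,|j|\}^2$, using $k=-(i+j)$. The $\eps^{1.5}$ parts contribute at most $O(n^2\eps^{1.5})\le\eps$. This gives the same $\sqrt{3i^2+3j^2+3ij}\,\eps+\eps$ estimate as the paper, hence $|c|\le 3\eps(\max\{|i|,|j|\}+1)$.
\item Argue monotonicity from the exact quadratic $s^2-2s\langle u_x,c\rangle+|c|^2$ in $s=1+t$, which is increasing for $s>|c|$. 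Do not use the approximate expression, because its error term itself depends on $t$.
\end{itemize}
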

\begin{proof}
After rotation and translation, we can assume that $z=(0,0)$, $a=(-\sqrt{3}/2, 1/2)$, $b=(\sqrt{3}/2, 1/2)$, and $c=(0, -1)$. 
Consider a disk with center~$d$ with radius~$r=\radius$ containing $\widetilde{\mathcal{A}}[i]$, $\widetilde{\mathcal{B}}[j]$, and $\widetilde{\mathcal{C}}[k]$. 
Then, it holds that
\begin{align*}
    &\radius\geq \|\widetilde{\mathcal{A}}[i]-d\|\geq (\widetilde{\mathcal{A}}[i]-d)\cdot \Vec{a}=\widetilde{\mathcal{A}}[i]\Vec{a}-d\Vec{a}=1+3\lfloor i/2\rfloor \eps+A[\lfloor i/2\rfloor] \eps^{1.5}-(-1)^i\eps-d\Vec{a}\\
    \Leftrightarrow\ &0\geq (3\lfloor i/2\rfloor+1-(-1)^i) \cdot\eps+A[\lfloor i/2\rfloor]\cdot \eps^{1.5}-\eps^{1.7}-d\Vec{a}.
\end{align*}
Similarly, it follows that 
\begin{align*}
    0\geq (3\lfloor j/2\rfloor+1-(-1)^j)\cdot\eps+B[\lfloor j/2\rfloor]\cdot \eps^{1.5}-\eps^{1.7}-d\Vec{b}, \\
    0\geq (3\lfloor k/2\rfloor+1-(-1)^k) \cdot\eps +C[\lfloor k/2\rfloor]\cdot \eps^{1.5}-\eps^{1.7}-d\Vec{c}.
\end{align*}
Summing up the three inequalities and noting that $\Vec{a}+\Vec{b}+\Vec{c}=\Vec{0}$, we obtain 
\begin{align}\label{eq:basic_1}
    0\geq \left(\sum_{m\in \{i,j,k\}}3\lfloor m/2\rfloor+1-(-1)^m\right)\cdot\eps
    +\left(A[\lfloor i/2\rfloor]+B[\lfloor j/2\rfloor]+C[\lfloor k/2\rfloor]\right)\cdot \eps^{1.5}-3\eps^{1.7}=:f.
\end{align}
Note that $3\cdot \lfloor m/2\rfloor+1-(-1)^m$ is equal to $(3/2)m$ if $m$ is even and equal to $(3/2)m+(1/2)$ if $m$ is odd. Therefore, if $i+j+k>0$, then $i+j+k\geq 1$ and 
    $f \geq (3/2)\eps-3n^2\eps^{1.5}-3\eps^{1.7}>0$,
which contradicts (\ref{eq:basic_1}) and hence shows claim \ref{lm:item_i+j+k>0}.
Similarly, if $i+j+k=0$ and $i$ or $j$ or $k$ is odd, then 
$f\geq (1/2)\eps-3n^2\eps^{1.5}-3\eps^{1.7}>0$, which contradicts (\ref{eq:basic_1}). If $i+j+k=0$ and $i, j, k$ are even, then $3(\lfloor i/2\rfloor+\lfloor j/2\rfloor+\lfloor k/2\rfloor+1)-(-1)^{i}-(-1)^{j}-(-1)^{k}=0$. If  $A[\lfloor i/2\rfloor]+B[\lfloor j/2\rfloor]+C[\lfloor k/2\rfloor]>0$, then $f\geq \eps^{1.5}-3\eps^{1.7}>0$, which again contradicts (\ref{eq:basic_1}). Hence, claim \ref{lm:item_i+j+k=0_not}.\ follows.

For claim \ref{lm:item_i+j+k=0_yes}., we consider the disk with center $d=(d_x, d_y)$, where
\begin{align*}
    d_x&:=(\sqrt{3}/2)\cdot \left((j-i)\eps+(2/3)(B[j/2]-A[i/2])\eps^{1.5}\right)\text{ and }\\
    d_y&:=(1/2)\cdot \left(3(i+j)\eps+2(A[i/2]+B[j/2])\eps^{1.5}\right).
\end{align*}
Since $i$, $j$, $k$ are even, it holds that
\begin{align*}
    \widetilde{A}[i]&=(1+3(i/2)\eps-\eps+A[i/2]\eps^{1.5})\cdot (-\sqrt{3}/2, 1/2), \\
    \widetilde{B}[j]&=(1+3(j/2)\eps-\eps+B[j/2]\eps^{1.5})\cdot (\sqrt{3}/2, 1/2)\text{, and}\\
    \widetilde{C}[k]&=(1+3(k/2)\eps-\eps+C[k/2]\eps^{1.5})\cdot (0, -1).
\end{align*}
Therefore, we have
\begin{align*}
    \|\widetilde{A}[i]-d\|^2=&(3/4)\cdot \left(1+(i/2)\eps-\eps+j\eps+((1/3)A[i/2]+(2/3)B[j/2])\eps^{1.5}\right)^2\\
    &+(1/4)\cdot \left(1-3(i/2)\eps-\eps-3j\eps-(A[i/2]+2B[j/2])\eps^{1.5}\right)^2\\
    =&1-2\eps+\eta \text{\hspace{2cm} for } |\eta|\leq100n^4\eps^2<\eps^{1.7}.
\end{align*}
It holds that $r^2=(\radius)^2\geq 1-2\eps+2\eps^{1.7}$. Hence, a radius~$r$ disk around~$d$ covers $\widetilde{A}[i]$. 
Analogously, it follows that
\begin{align*}
    \|\widetilde{B}[j]-d\|^2=&(3/4)\cdot \left(1+(j/2)\eps-\eps+i\eps+((1/3)B[j/2]+(2/3)A[i/2])\eps^{1.5}\right)^2\\
    &+(1/4)\cdot \left(1-3(j/2)\eps-\eps-3i\eps-(B[j/2]+2A[i/2])\eps^{1.5}\right)^2\leq r^2.
\end{align*}
Further, it holds that
\begin{align*}
    \|\widetilde{C}[k]-d\|^2&={d_x}^2+\left(1+3(i/2+j/2+k/2)\eps-\eps+(A[i/2]+B[j/2]+C[k/2])\eps^{1.5}\right)^2\\
    &\leq {d_x}^2+(1-\eps)^2=1-2\eps+\eta'\leq r^2\text{\hspace{2cm} for } |\eta'|\leq100n^4\eps^2<\eps^{1.7}
\end{align*}
Observe that for all $i'\leq i$ both the $x$ and $y$ coordinate of the point $\widetilde{A}[i']$ is closer to $d_x$ and $d_y$ compared to the point $\widetilde{A}[i]$. Hence, $\widetilde{A}[\leq i]$ is contained in the radius~$r$ disk around $d$. Similar it follows that $\widetilde{B}[\leq j]$ and $\widetilde{C}[\leq k]$ are also contained in the radius~$r$ disk around $d$.
The distance $\|d-z\|=\|d\|$ is bounded by
\begin{align*}
    \|d\|&\leq \|\left((\sqrt{3}/2)(i-j)\eps, (3/2)(i+j)\eps\right)\|&&+\|\left((1/\sqrt{3}(A[i/2]+B[j/2])\eps^{1.5}, (A[i/2]+B[j/2])\eps^{1.5}\right)\|\\
    &=\sqrt{(3/4)(i-j)^2+(9/4)(i+j)^2}\cdot \eps &&+ \sqrt{\left((1/\sqrt{3})2n^2\eps^{1.5}\right)^2+\left(2n^2\eps^{1.5}\right)^2}\\
    &\leq \sqrt{3i^2+3j^2+3ij}\cdot \eps&&+10n^4\eps^{1.5}\\
    &\leq 3\max\{|i|, |j|\} \cdot \eps&&+\eps.
\end{align*}
\end{proof}

\begin{remark}\label{rem:covering_basic_2eps1.7}
    The statement of \Cref{lm:covering-basic_new} in its entirety also holds when the radius is replaced by $1-\eps+2\eps^{1.7}$.
\end{remark}

The next lemma follows directly by \Cref{obs:P_n_and_reverse} and \Cref{lm:covering-basic_new}. This is used to show a connection between the points on the edges $(a, b)$, $(a', b)$, and $(c, b)$ in \Cref{fig:2dsimplified}.
\begin{lemma}\label{lm:D4-covering}
    For any $A[-n\dots n]$, $B[-n\dots n]$, $C[-n\dots n]$ with entries in $\{-n^2, \dots, n^2\}$, let $(a, z)$, $(b, z)$, $(c, z)$ be the three unit edges such that the angle between each pair is $2\pi/3$. Define
    \begin{align*}
        \widetilde{\mathcal{A}}:= \pset_n(A, a, \overrightarrow{az}, 1),&& \widetilde{\mathcal{B}}:= \pset_n(B, b, \overrightarrow{bz}, 1), &&\widetilde{\mathcal{C}}:= \pset_n(C, c, \overrightarrow{cz}, 1).
    \end{align*}
    For any $i, j, k\in I_n$:
    \begin{enumerate}
        \item If $i+j+k<0$, there does not exist a radius~$r:=\radius$ disk containing $\widetilde{\mathcal{A}}[i+1], \widetilde{\mathcal{B}}[j+1]$, and $\widetilde{\mathcal{C}}[k+1]$. 
        \item If $i+j+k=0$: If $i$ or $j$ or $k$ is odd or $A[i/2]+B[ j/2]+C[k/2]<0$, then there does not exist a radius~$r$ disk containing $\widetilde{\mathcal{A}}[i+1], \widetilde{\mathcal{B}}[j+1]$, and $\widetilde{\mathcal{C}}[k+1]$. 
        \item If $i+j+k=0$ and $i$ and $j$ and $k$ are even and $A[ i/2 ]+B[ j/2]+C[k/2]\geq 0$, then there exists a radius~$r$ disk containing $\widetilde{\mathcal{A}}[>i], \widetilde{\mathcal{B}}[> j]$, and $\widetilde{\mathcal{C}}[>k]$ and having its center within distance $3\eps(\max\{|i|, |j|\}+1)$ of $z$. 
    \end{enumerate}
\end{lemma}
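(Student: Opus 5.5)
We reduce \Cref{lm:D4-covering} to \Cref{lm:covering-basic_new} by reversing every array via \Cref{obs:P_n_and_reverse}. Define $A'[m]:=-A[-m]$, $B'[m]:=-B[-m]$, $C'[m]:=-C[-m]$. These arrays again have entries in $\{-n^2,\dots,n^2\}$. By the observation,
\[
\pset_n(A,a,\overrightarrow{az},1)[i']=\pset_n(A',a,\overrightarrow{za},1)[-i'+1]
\]
for all $i'\in I_n$, and likewise for $B$ and $C$. Since $(z,a),(z,b),(z,c)$ are three unit edges with pairwise angle $2\pi/3$, the configuration $\widetilde{\mathcal{A}}'=\pset_n(A',a,\overrightarrow{za},1)$, $\widetilde{\mathcal{B}}'$, $\widetilde{\mathcal{C}}'$ is exactly the one in \Cref{lm:covering-basic_new}. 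The same geometric points now carry new names: $\widetilde{\mathcal{A}}[i']=\widetilde{\mathcal{A}}'[-i'+1]$.

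\textbf{Claims 1 and 2.} Take $i'=i+1$, which corresponds to index $-i$ in the primed set, and similarly for $j$ and $k$. A disk containing $\widetilde{\mathcal{A}}[i+1],\widetilde{\mathcal{B}}[j+1],\widetilde{\mathcal{C}}[k+1]$ therefore contains $\widetilde{\mathcal{A}}'[-i],\widetilde{\mathcal{B}}'[-j],\widetilde{\mathcal{C}}'[-k]$.
\begin{itemize}
\item If $i+j+k<0$, then $(-i)+(-j)+(-k)>0$, and claim 1 of \Cref{lm:covering-basic_new} rules out such a disk.
\item If $i+j+k=0$, the parities of $-i,-j,-k$ equal those of $i,j,k$. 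When all three are even,
\[
A'[-i/2]+B'[-j/2]+C'[-k/2]=-(A[i/2]+B[j/2]+C[k/2]),
\]
so the condition $A[i/2]+B[j/2]+C[k/2]<0$ becomes ``$>0$'' for the primed arrays. Claim 2 of \Cref{lm:covering-basic_new} then gives the second statement.
\end{itemize}

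\textbf{Claim 3.} Now $i,j,k$ are even with $i+j+k=0$, and the primed sum is $\le 0$. Claim 3 of \Cref{lm:covering-basic_new} gives a radius-$r$ disk containing $\widetilde{\mathcal{A}}'[\le -i]$, $\widetilde{\mathcal{B}}'[\le -j]$, $\widetilde{\mathcal{C}}'[\le -k]$. Its center lies within $3\eps(\max\{|-i|,|-j|\}+1)=3\eps(\max\{|i|,|j|\}+1)$ of $z$. Under the index map $i'\mapsto -i'+1$, the set $\widetilde{\mathcal{A}}'[\le -i]$ is exactly $\{\widetilde{\mathcal{A}}[i'] : -i'+1\le -i\}=\widetilde{\mathcal{A}}[>i]$, and likewise for $B$ and $C$. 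This is precisely the third statement.

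\textbf{Index bookkeeping.} The only step needing care is that all indices stay in $I_n=\{-2n,\dots,2n+1\}$. The map $i'\mapsto -i'+1$ is a bijection of $I_n$, so $i+1\in I_n$ if and only if $-i\in I_n$, and the statements make sense in both lemmas. When $i+1\notin I_n$ the corresponding claim is vacuous, or concerns the empty set $[>i]$. The paper asserts that the statement follows directly from the observation and \Cref{lm:covering-basic_new}, so no separate geometric argument is expected.
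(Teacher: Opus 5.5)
Your proof is correct and is essentially the paper's own argument: reverse the arrays via $A'[m]=-A[-m]$ (and likewise for $B,C$), use \Cref{obs:P_n_and_reverse} to identify $\widetilde{\mathcal{A}}[i+1]$ with $\widetilde{\mathcal{A}}'[-i]$ and $\widetilde{\mathcal{A}}[>i]$ with $\widetilde{\mathcal{A}}'[\le -i]$, and then apply the three items of \Cref{lm:covering-basic_new} to the index triple $(-i,-j,-k)$. Your index bookkeeping is a little more careful than the paper's write-up, which cites item 2 instead of item 3 of \Cref{lm:covering-basic_new} for the existence claim.
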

\begin{proof}
    Define $\mathcal{A'}:=\pset_n(A', a, \Vec{za}, 1)$, $\mathcal{B'}:=\pset_n(B', b, \Vec{zb}, 1)$, and $\mathcal{C'}:=\pset_n(C', c, \Vec{zc}, 1)$, where $A'[i]=-A[-i]$, $B'[i]=-B[-i]$, and $C'[i]=-C[-i]$ for all $i\in I_n$. 
    Then, we can use \Cref{lm:covering-basic_new} for the point sets $\mathcal{A'}, \mathcal{B'}, \mathcal{C'}$. By \Cref{obs:P_n_and_reverse} it holds that $\mathcal{A'}[-i+1]=\mathcal{A}[i]$, $\mathcal{B'}[-i+1]=\mathcal{B}[i]$, and $\mathcal{C'}[-i+1]=\mathcal{C}[i]$ for all $i\in I_n$.
    \begin{enumerate}
        \item If $i+j+k<0$, then by \ref{lm:item_i+j+k>0} of \Cref{lm:covering-basic_new} it follows that no disk of radius $\radius$ containing $\mathcal{A'}[-i]=\mathcal{A}[i+1]$, $\mathcal{B'}[-j]=\mathcal{B}[j+1]$, and $\mathcal{C'}[-k]=\mathcal{C}[k+1]$ exists.
        \item Assume that $i+j+k=0$. If at least one of $i, j, k$ is odd or  $A[i/2]+B[j/2]+C[k/2]<0$, i.e., $A'[-i/2]+B'[-j/2]+C'[-k/2]>0$, then by \ref{lm:item_i+j+k=0_not} of \Cref{lm:covering-basic_new} the second item follows. 
        \item Assume that $i+j+k=0$ and $i, j, k$ are even and $A[i/2]+B[j/2]+C[k/2]\geq 0$. 
        Then, $A'[-i/2]+B'[-j/2]+C'[-k/2]\leq 0$. Hence, by \ref{lm:item_i+j+k=0_not} of \Cref{lm:covering-basic_new}, there exists a disk of radius $\radius$ containing $\mathcal{A'}[\leq (-i)]=\mathcal{A}[> i]$, $\mathcal{B'}[\leq (-j)]=\mathcal{B}[> j]$, and $\mathcal{C'}[\leq (-k)]=\mathcal{C}[>k]$ and having its center within distance $3\eps(\max\{|i|, |j|\}+1)$ of $z$.
    \end{enumerate}
    This concludes the proof.
\end{proof}

The next lemma is essentially a consequence of \Cref{lm:covering-basic_new} and \Cref{lm:D4-covering}.

\begin{lemma}\label{lem:shared_edge}
    Let $(a,b)$, $(b,c)$, and $(c, d)$ be hexagonal edges of $S_u$, and $(a', b)$, $(b,c)$, and $(c, d')$ hexagonal edges of $S_v$, such that $(b, c)$ is shared (see \Cref{subfig:2D_SumSet_structure}). Further, let $i$, $j\in \{0, \dots, 2n+1\}$. Then, there exist two disks $B_b$, $B_c$ of radius \radius centered at a point within distance $3\eps (\max\{|i|, |j|\}+1)$ to $b$, resp. to $c$, that cover 
    \[\PSEdge{a}{b}[>i)]\cup \PSEdge{a'}{b}[>j]\cup \PSEdge{c}{b}\cup \PSEdge{c}{d}[\leq i]\cup \PSEdge{c}{d'}[\leq j]\]
    if and only if $i$ and $j$ are even and $X_u[i/2]+X_v[j/2]+X_{(u,v)}[-(i+j)/2]=0$.
\end{lemma}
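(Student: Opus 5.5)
The plan is to reduce everything to \Cref{lm:covering-basic_new} applied at the vertex $c$ and \Cref{lm:D4-covering} applied at the vertex $b$. First I would check that the local geometry matches those lemmas. At each hexagonal vertex three hexagonal edges of length $2$ meet at pairwise angles $2\pi/3$, so their midpoints lie at distance $1$ from the vertex.

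At $c$, the three edges go to $b$, $d$ and $d'$. By definition $\PSEdge{c}{d}=\pset_n(X_u,p,\overrightarrow{cp},1)$, where $p$ is the midpoint, so indices increase away from $c$. The same holds for $\PSEdge{c}{d'}$ (with $X_v$) and for $\PSEdge{c}{b}$ (with $X_{(u,v)}$). Hence these three sets are exactly $\widetilde{\mathcal A},\widetilde{\mathcal B},\widetilde{\mathcal C}$ of \Cref{lm:covering-basic_new} with $z=c$ and $A=X_u$, $B=X_v$, $C=X_{(u,v)}$.

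At $b$, the three edges come from $a$, $a'$ and $c$. The sets $\PSEdge{a}{b}$, $\PSEdge{a'}{b}$ and $\PSEdge{c}{b}$ have indices increasing towards $b$. This is precisely the setting of \Cref{lm:D4-covering} with $z=b$ and the same three arrays. The fact that $\PSEdge{c}{b}$ serves both lemmas is what couples the two disks.

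For the ``if'' direction, let $i,j$ be even with $X_u[i/2]+X_v[j/2]+X_{(u,v)}[-(i+j)/2]=0$, and set $k:=-(i+j)$. Then $k$ is even, $i+j+k=0$, and the array sum is both $\le 0$ and $\ge 0$. Item 3 of \Cref{lm:covering-basic_new} at $z=c$ gives a disk $B_c$ covering $\PSEdge{c}{d}[\le i]\cup\PSEdge{c}{d'}[\le j]\cup\PSEdge{c}{b}[\le k]$. Item 3 of \Cref{lm:D4-covering} at $z=b$ gives a disk $B_b$ covering $\PSEdge{a}{b}[>i]\cup\PSEdge{a'}{b}[>j]\cup\PSEdge{c}{b}[>k]$. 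Both lemmas provide the center-distance bound $3\eps(\max\{|i|,|j|\}+1)$, and together the two disks cover all of $\PSEdge{c}{b}$.

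For the ``only if'' direction, suppose $B_b$ and $B_c$ cover the displayed set. Let $k$ be the largest index such that $\PSEdge{c}{b}[\le k]\subseteq B_c$; then $\PSEdge{c}{b}[k+1]\notin B_c$, so it lies in $B_b$.
\begin{itemize}
\item The disk $B_c$ contains $\PSEdge{c}{d}[i]$, $\PSEdge{c}{d'}[j]$ and $\PSEdge{c}{b}[k]$. Items 1 and 2 of \Cref{lm:covering-basic_new} then force $i+j+k\le 0$. If equality holds, they also force $i,j,k$ to be even and $X_u[i/2]+X_v[j/2]+X_{(u,v)}[k/2]\le 0$.
\item The disk $B_b$ contains $\PSEdge{a}{b}[i+1]$, $\PSEdge{a'}{b}[j+1]$ and $\PSEdge{c}{b}[k+1]$. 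Items 1 and 2 of \Cref{lm:D4-covering} then give $i+j+k\ge 0$, and in case of equality the reverse inequality on the array sum.
\end{itemize}
Combining the two, $i+j+k=0$, all three indices are even, and $X_u[i/2]+X_v[j/2]+X_{(u,v)}[-(i+j)/2]=0$. I note that the proximity of the centers to $b$ and $c$ is not even needed for this direction.

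The main obstacle I expect is the boundary cases, where the index $k$ or $i+1$, $j+1$, $k+1$ falls outside $I_n$. This happens when $B_c$ contains no point of $\PSEdge{c}{b}$ or all of it, or when $i=2n+1$ or $j=2n+1$, so that $\PSEdge{a}{b}[>i]$ is empty. There the lemmas cannot be applied verbatim to the triple.

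First I would try to rerun the summation inequality from the proof of \Cref{lm:covering-basic_new} using only the points that actually exist. The idea is to use the extreme point of the relevant set, or the projection bound $d\cdot\vec w$ directly for the missing point, and exploit the fact that the center lies within $O(n\eps)$ of $b$ or $c$. This should show that a pure prefix or suffix would require $i+j$ to be out of range, giving a contradiction.

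I also expect the statement to implicitly require $-(i+j)\in I_n$ for $X_{(u,v)}[-(i+j)/2]$ to be defined. In the application $i,j$ come from the gap regime, where indices are at most about $n/50$ in absolute value, so I would make this restriction explicit if the boundary argument does not go through cleanly.
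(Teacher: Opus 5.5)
Your proof is essentially the paper's: both directions come from \Cref{lm:covering-basic_new} at $c$ and \Cref{lm:D4-covering} at $b$, coupled through $\PSEdge{c}{b}$. The paper differs only in two small ways. It pins the split of $\PSEdge{c}{b}$ directly at $-(i+j)$ instead of defining $k$ from $B_c$ and sandwiching. It also cites the proximity of the centers to get $\PSEdge{a}{b}[>i]\cup\PSEdge{a'}{b}[>j]\subseteq B_b$ and $\PSEdge{c}{d}[\le i]\cup\PSEdge{c}{d'}[\le j]\subseteq B_c$, a step you use silently; without proximity you would need a separate pairwise-distance and relabelling argument. Your boundary concern is justified, and the paper ignores it: the lemma needs $i+j\le 2n$, since for $i=j=2n$ admissible disks $B_b,B_c$ exist although $X_{(u,v)}[-2n]$ is undefined. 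Under that restriction the degenerate values of $k$ are excluded by the same summation inequalities: $\PSEdge{c}{b}[2n+1]$ cannot lie in $B_c$ because $i+j+2n+1>0$, and $\PSEdge{c}{b}[-2n]$ cannot lie in $B_b$ because $i+j-2n-1<0$.
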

\begin{proof}
    First assume that $i$ and $j$ are even and $X_u[i/2]+X_v[j/2]+X_{(u,v)}[-(i+j)/2]=0$. Then, by \Cref{lm:D4-covering} there exists a disk $B_b$ within distance $3\eps (\max\{|i|, |j|\}+1)$ to $b$ that covers $\PSEdge{a}{b}[>i]\cup {\PSEdge{a'}{b}[>j]}\cup \PSEdge{c}{b}[>-(i+j)]$. Further, by \Cref{lm:covering-basic_new} there exists a disk~$B_c$ within distance $3\eps (\max\{|i|, |j|\}+1)$ to $c$ that covers $\PSEdge{c}{b}[\leq -(i+j)]\cup \PSEdge{c}{d}[\leq i]\cup \PSEdge{c}{d'}[\leq j]$.

    It remains to prove the other direction. First, observe that points in $\PSEdge{a}{b}[> i]\cup \PSEdge{a'}{b}[> j]$ must be covered by disk~$B_b$. Similarly, points in $\PSEdge{c}{d}[\leq i]\cup \PSEdge{c}{d'}[\leq j]$ must be covered by disk $B_c$. 
    By \Cref{lm:D4-covering}, it holds that there does not exist a disk of radius \radius that contains $\PSEdge{a}{b}[i+1]\cup \PSEdge{a'}{b}[j+1] \cup \PSEdge{c}{b}[k]$ for any $k< -i-j+1$. By \Cref{lm:covering-basic_new}, it holds that there does not exist a disk of radius \radius that contains $\PSEdge{c}{d}[i]\cup \PSEdge{c}{d'}[j] \cup \PSEdge{c}{b}[k]$ for any $k> -i-j$. Hence, $\PSEdge{c}{b}[>-(i+j)]$ is contained in $B_b$ and $\PSEdge{c}{b}[\leq -(i+j)]$ is contained in $B_c$.
    Therefore, by \Cref{lm:covering-basic_new} and \Cref{lm:D4-covering}, it holds that $i$ and $j$ are even and $X_u[i/2]+X_v[j/2]+X_{(u,v)}[-(i+j)/2]=0$.
\end{proof}

We want to enforce that there exists an even index $i\in I_n$ such that in every \tencenter solution of radius \radius for every non-shared edge $(a,b)$ of $S_u$, the points in $P(a,b)[\leq i]$ and the points in $P(a,b)[>i]$ are contained in different disks; and similarly for an even index $j$ for $S_v$. 
In order to achieve this, we add points of type (3).

For every hexagonal vertex $b$ that is incident to two non-shared edges $(a, b)$ and $(b,c)$ and no further edge, we add one point, which we call \EMPH{consistency point}:

\begin{enumerate}
    \item [(3)] The consistency point $c_{b}$ has distance $1-\eps$ to $b$ and is placed on the third incident hexagonal edge to $b$ that is not $(a, b)$ or $(b, c)$ (in \Cref{subfig:2D_SumSet_disks} the consistency points are drawn as green circles).
\end{enumerate}
Using \Cref{obs:P_n_and_reverse} and \Cref{lm:covering-basic_new}, we show that the consistency points fulfill their purpose.
\begin{lemma}\label{lem:2D_consistency_new}
    For any array $A[-n\dots n]$ with entries in $\{-n^2,\dots, n^2\}$, let $(z,a)$, $(z,b)$, $(z,c)$ be three unit edges such that the angle between each pair is $2\pi/3$. Define 
    $\widetilde{\mathcal{A}}_1=\pset_n(A, a, \overrightarrow{az},1)$, $\widetilde{\mathcal{A}}_2=\pset_n(A, b, \overrightarrow{zb},1)$, and $\widetilde{c}=c-\eps\overrightarrow{zc}$.
    \begin{enumerate}
        \item For any $i, j\in I_n$ with $i\leq j$ it holds that there does not exist a radius~$r$ disk containing $\widetilde{\mathcal{A}}_1[i]$, $\widetilde{\mathcal{A}}_2[j]$, and $\widetilde{c}$. 
        \item For any $i\in I_n$, it holds that if $i$ is odd, then there does not exist a radius~$r$ disk containing  $\widetilde{\mathcal{A}}_1[i+1]$, $\widetilde{\mathcal{A}}_2[i]$, and $\widetilde{c}$. 
        \item For any $i\in I_n$, it holds that if $i$ is even, then there exists a radius~$r$ disk containing $\widetilde{\mathcal{A}}_1[>i]$, $\widetilde{\mathcal{A}}_2[\leq i]$, and $\widetilde{c}$ and having its center within distance $3\eps (|i|+1)$ of $z$.
    \end{enumerate}
\end{lemma}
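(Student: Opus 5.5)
The plan is to reduce all three items to \Cref{lm:covering-basic_new}, applied to three suitably chosen arrays. The point $\widetilde{c}$ will be realized as a single entry of a third point set, and the reversed set $\widetilde{\mathcal{A}}_1$ will be rewritten via \Cref{obs:P_n_and_reverse}.

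First, set $A'[m]:=-A[-m]$. By \Cref{obs:P_n_and_reverse}, $\widetilde{\mathcal{A}}_1[i]=\pset_n(A,a,\overrightarrow{az},1)[i]=\pset_n(A',a,\overrightarrow{za},1)[-i+1]$. Hence the set $\widetilde{\mathcal{A}}:=\pset_n(A',a,\overrightarrow{za},1)$ has exactly the form required in \Cref{lm:covering-basic_new}, with $\widetilde{\mathcal{A}}_1[>i]=\widetilde{\mathcal{A}}[\leq -i]$. Put $\widetilde{\mathcal{B}}:=\widetilde{\mathcal{A}}_2=\pset_n(A,b,\overrightarrow{zb},1)$.

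For $\widetilde{c}$, note that $\pset_n(C,c,\overrightarrow{zc},1)[0]=c+(C[0]\eps^{1.5}-\eps)\overrightarrow{zc}$. If we take $C$ to be the all-zero array, this equals $c-\eps\overrightarrow{zc}=\widetilde{c}$. So $\widetilde{c}=\widetilde{\mathcal{C}}[0]$ with $\widetilde{\mathcal{C}}:=\pset_n(0,c,\overrightarrow{zc},1)$, and the entries of $A'$ and of the zero array stay in $\{-n^2,\dots,n^2\}$.

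With this translation, a disk containing $\widetilde{\mathcal{A}}_1[i']$, $\widetilde{\mathcal{A}}_2[j]$ and $\widetilde{c}$ is a disk containing $\widetilde{\mathcal{A}}[-i'+1]$, $\widetilde{\mathcal{B}}[j]$ and $\widetilde{\mathcal{C}}[0]$, whose index sum is $j-i'+1$. The three items then follow from the corresponding items of \Cref{lm:covering-basic_new}.

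For item 1, $i'\le j$ gives $j-i'+1\ge 1>0$, so item \ref{lm:item_i+j+k>0} of \Cref{lm:covering-basic_new} rules out such a disk.

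For item 2, take $i'=i+1$ and $j=i$. The index sum is $0$ and the index $j=i$ is odd, so item \ref{lm:item_i+j+k=0_not} rules out such a disk.

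For item 3, let $i$ be even and apply item \ref{lm:item_i+j+k=0_yes} with indices $(-i,i,0)$. All three indices are even and sum to $0$. The value condition is
\[A'[-i/2]+A[i/2]+0=-A[i/2]+A[i/2]=0\le 0,\]
which holds. The lemma therefore yields a disk containing $\widetilde{\mathcal{A}}[\le -i]=\widetilde{\mathcal{A}}_1[>i]$, $\widetilde{\mathcal{B}}[\le i]=\widetilde{\mathcal{A}}_2[\le i]$ and $\widetilde{\mathcal{C}}[\le 0]\ni\widetilde{c}$. Its center lies within distance $3\eps(\max\{|-i|,|i|\}+1)=3\eps(|i|+1)$ of $z$.

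The only delicate step is the index bookkeeping in the reversal identity: checking that the shift $-i+1$ turns $[>i]$ into $[\le -i]$, and that the zero-array trick places $\widetilde{c}$ exactly at index $0$ with the right parity. Everything else is a direct invocation of \Cref{lm:covering-basic_new}.
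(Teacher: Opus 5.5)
Your proposal is correct and matches the paper's proof: the paper also rewrites $\widetilde{\mathcal{A}}_1$ via \Cref{obs:P_n_and_reverse} with $A'[m]=-A[-m]$, realizes $\widetilde{c}$ as entry $0$ of the point set built from the all-zero array, and then applies \Cref{lm:covering-basic_new} with $k=0$. Your explicit bookkeeping just spells out what the paper leaves implicit: $\widetilde{\mathcal{A}}_1[>i]=\widetilde{\mathcal{A}}[\le -i]$, the index sums $j-i+1$, $0$, $0$ for the three items, and the value check $A'[-i/2]+A[i/2]=0$.
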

\begin{proof}
    Let $\widetilde{\mathcal{A}}:=\pset_n(A', a, \overrightarrow{za}, 1)$, $\widetilde{B}:=\widetilde{\mathcal{A}}_2$, $\widetilde{\mathcal{C}}:=\pset_n(C, c, \overrightarrow{zc}, 1)$, where $A'=A'[-n, \dots, n]$ with $A'[i]=-A[-i]$ and $C=C[-n, \dots, n]$ with $C[i]=0$ for all $i\in I_n$.
    Observe that $\widetilde{c}=\widetilde{C}[0]$ and $\widetilde{\mathcal{A}}$ is the symmetric version of $\widetilde{\mathcal{A}}_1$, i.e., $\widetilde{\mathcal{A}}[i]=\widetilde{\mathcal{A}}_1[-i+1]$ for all $i\in I_n$ (see \Cref{obs:P_n_and_reverse}). 
    Applying \Cref{lm:covering-basic_new} on $\widetilde{\mathcal{A}}$, $\widetilde{\mathcal{B}}$, $\widetilde{\mathcal{C}}$ and $k=0$ yields the claims.
\end{proof}

The next lemma shows that there exists a \tencenter solution of radius \radius in the Yes-Case of Gap Convolution-3SUM and no solution to \tencenter of radius \radius in the No-Case of Gap Convolution-3SUM. 

\begin{lemma}\label{lem:3SUMquadhardness}
    Assuming the 3-SUM hypothesis, there is no $O(n^{2-\eps})$ time algorithm for the Euclidean 10-center problem in $\real^2$ for any constant $\eps>0$.
\end{lemma}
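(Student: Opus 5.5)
The plan is a linear-time reduction from Gap Convolution-3SUM, which is 3SUM-hard by \Cref{lm:gap-3sum-hardness}. Given $X[-n\dots n]$, we build the two-hexagon configuration $S_u,S_v$ with $X_u=X_v=X_{(u,v)}=X$. The two hexagons sharing an edge have exactly $10$ hexagonal vertices, which form $\mathcal{H}$. On top of this we place the anchor points $\mathcal{A}$, the type~(1) points on the ten non-shared edges, the type~(2) points on the shared edge, and the type~(3) consistency points at the eight vertices incident only to non-shared edges. This is $O(n)$ points, and we ask whether a \tencenter solution of radius $r=\radius$ exists. Irrational coordinates are replaced by rational approximations with error far below $\eps^{2}$; all lemmas have slack of order $\eps^{1.7}$, so they survive. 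An $O(n^{2-\delta})$ algorithm for \tencenter would therefore solve Gap Convolution-3SUM in that time.

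\textbf{Completeness.} Suppose $i_0+j_0+k_0=0$ with $|i_0|,|j_0|,|k_0|\le n/100$ and $X[i_0]+X[j_0]+X[k_0]=0$. Set $i=2i_0$ and $j=2j_0$, so the shared-edge index is $-(i+j)=2k_0$.
\begin{itemize}
\item At the two endpoints of the shared edge, \Cref{lem:shared_edge} gives disks $B_b,B_c$. They cover the split pieces of the adjacent edges and the whole shared edge.
\item At every other hexagonal vertex, \Cref{lem:2D_consistency_new}(3) gives a disk covering $P[>i]$ of the incoming edge, $P[\le i]$ of the outgoing edge, and the consistency point. The orientation convention guarantees each vertex sees one incoming and one outgoing edge.
\end{itemize}
All these centers lie within $3\eps(n/50+1)\le 0.1n\eps$ of their vertex, so by \Cref{lem:K-cen_anchor_unique_solution} they also cover all anchor points. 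This uses exactly $10$ disks.

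\textbf{Soundness.} Suppose a radius-$r$ solution exists.
\begin{enumerate}
\item By \Cref{lem:K-cen_anchor_unique_solution}, each hexagonal vertex has exactly one disk, centered within $0.3n\eps$ of it.
\item A point of an edge $(a,b)$ is at distance about $1$ from both endpoints and far from all other vertices. So each edge's points are covered only by the disks at $a$ and $b$.
\item The point order along the edge then forces a split index: the disk at $b$ covers $P(a,b)[>i_{ab}]$ and the disk at $a$ covers the rest.
\item At a vertex with a consistency point, that disk must contain the consistency point. By \Cref{lem:2D_consistency_new}(1,2), the split index on the outgoing edge is at least that on the incoming edge, with parity constraints.
\item Going around the chain of non-shared edges of $S_u$ from $c$ to $b$, the indices are monotone. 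Then \Cref{lem:shared_edge}, applied at the endpoints of the shared edge, closes the cycle and forces all split indices on $S_u$ to equal a common even $i$, and likewise an even $j$ on $S_v$.
\item \Cref{lem:shared_edge} then yields $X[i/2]+X[j/2]+X[-(i+j)/2]=0$ with all indices in $\{-n,\dots,n\}$. This contradicts the NO case.
\end{enumerate}

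The main obstacle is step~2 of soundness: showing that every non-anchor point is covered by a disk at one of its own edge's endpoints, and handling split indices at the extremes $\pm 2n$. The plan is a crude distance estimate: with centers within $0.3n\eps$ of vertices, any other hexagonal vertex is at distance at least about $\sqrt3$ from the point. The extremes are covered because $I_n$ includes the boundary indices, and the inequality version of the lemmas remains valid there.
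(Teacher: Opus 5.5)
Your construction, your completeness argument, and the overall shape of your soundness argument (anchors pin the ten centers, split indices on each edge, chasing inequalities around each hexagon and closing at the shared edge) match the paper's proof. However, step 4 states the monotonicity backwards, and step 5 depends on exactly that direction.

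With your convention, $i_{ab}$ is the largest index of $P(a,b)$ covered by the tail disk, so the head disk covers $P(a,b)[>i_{ab}]$. The disk at a degree-two vertex then contains $P_{\mathrm{in}}[i_{\mathrm{in}}+1]$, $P_{\mathrm{out}}[i_{\mathrm{out}}]$ and the consistency point. \Cref{lem:2D_consistency_new}(1) rules out $i_{\mathrm{in}}+1\le i_{\mathrm{out}}$, so $i_{\mathrm{out}}\le i_{\mathrm{in}}$, with equality only for even indices by item 2. In other words, split indices can only decrease along the orientation, which is what the paper says.

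This direction is what closes the cycle. Let $k$ be the split index on the shared edge $(b,c)$.
\begin{itemize}
\item Item 1 of \Cref{lm:covering-basic_new} at $c$ gives $i_{cd}+i_{cd'}+k\le 0$.
\item Item 1 of \Cref{lm:D4-covering} at $b$ gives $i_{ab}+i_{a'b}+k\ge 0$.
\item Following the orientation from $c$ to $b$ gives $i_{ab}\le i_{cd}$ and $i_{a'b}\le i_{cd'}$.
\end{itemize}
Hence $0\le i_{ab}+i_{a'b}+k\le i_{cd}+i_{cd'}+k\le 0$, and every inequality is tight. With the direction you stated ($i_{cd}\le i_{ab}$), these constraints can all hold with strict inequalities, and nothing is forced.

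Two smaller fixes. First, you cannot close the cycle by citing \Cref{lem:shared_edge} directly. That lemma already assumes the same index on $(a,b)$ and $(c,d)$, and on $(a',b)$ and $(c,d')$. Use the item-1 statements above first. Only afterwards apply \Cref{lem:shared_edge} (or item 2 of the two covering lemmas) to get evenness and $X[i/2]+X[j/2]+X[-(i+j)/2]=0$.

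Second, the boundary issue is not settled by $I_n$ containing $\pm 2n$. You need the paper's observation that every edge's point set has a point at distance more than $r+0.3n\eps$ from each endpoint. For example, $P(a,b)[2n+1]$ lies at distance about $1+(3n+1)\eps$ from $a$, and $P(a,b)[-2n]$ likewise from $b$. So each endpoint disk covers some but not all of the set. Hence $i_{ab}\in\{-2n,\dots,2n\}$, and both $P(a,b)[i_{ab}]$ and $P(a,b)[i_{ab}+1]$ exist, as every lemma application above requires.
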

\begin{figure}[htb]
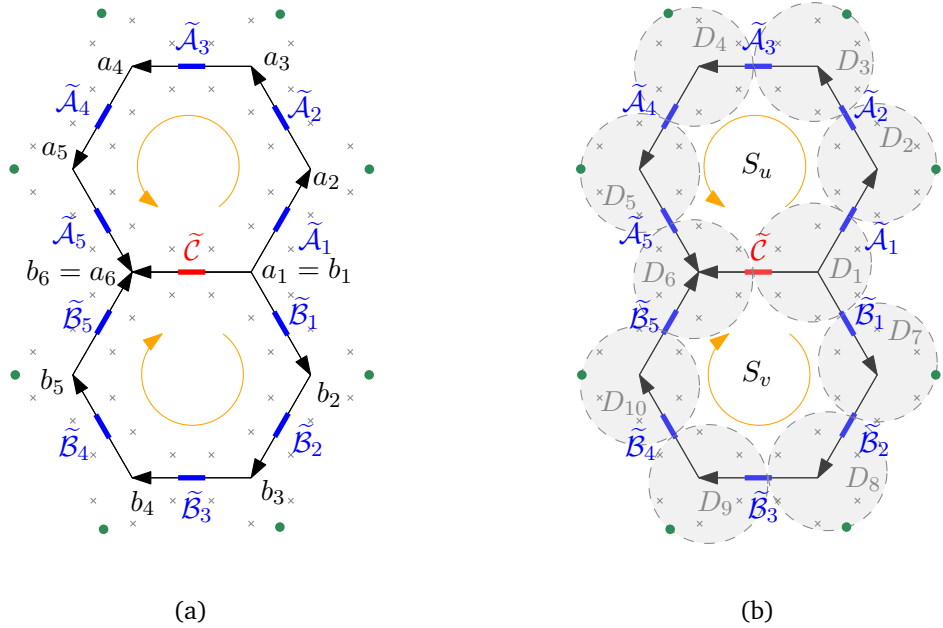

    \centering
     \begin{subfigure}[b]{0.45\textwidth}
        \centering
        \includegraphics[page=2]{figs/2D_SumSet_2.pdf}
        \subcaption[]{}
        \label{subfig:2hexagonspointset}
     \end{subfigure}
    \begin{subfigure}[b]{0.45\textwidth}
        \centering
        \includegraphics[page=3]{figs/2D_SumSet_2.pdf}
        \subcaption[]{}
        \label{subfig:2hexagonscovered}
     \end{subfigure}
    \caption{Illustration of the point set constructed along the edges and around the vertices of two hexagons glued along an edge, illustrating the proof of \Cref{lem:3SUMquadhardness}. The points on the hexagonal edges are constructed from a single array $A$.}
    \label{fig:2hexagonsclustering}
\end{figure}

\begin{proof}
The proof proceeds by investigating the point set $\mathcal{P}$ constructed from the points of all types around vertices of two regular hexagons sharing an edge, as illustrated in \Cref{fig:2dsimplified}. We show that we can solve the Gap Convolution-3SUM problem if and only if we can decide whether there exist a solution to Euclidean 10-center problem of radius $r=\radius$ for $\mathcal{P}$. Each of the point gadgets on an edge of the two glued hexagons 
is constructed from the array $X$ of the Gap Convolution-3SUM, giving rise to the gadget point sets $\widetilde{\mathcal{A}}_1,\dots, \widetilde{\mathcal{A}}_5,\widetilde{\mathcal{C}},\widetilde{\mathcal{B}}_1,\dots,\widetilde{\mathcal{B}}_5$, whose precise definition and location with respect to the hexagonal vertices $a_l$ and $b_l$ is depicted in \Cref{fig:2hexagonsclustering}. 

Suppose first that $X$ represents a YES instance of Gap Convolution-3SUM, that is, there exists $i,j,k\in\{-\lfloor \frac{n}{100}\rfloor,\dots,\lfloor \frac{n}{100}\rfloor\}$ such that $i+j+k=0$ and $X[i]+X[j]+X[k]=0$. 
 By \Cref{lem:shared_edge}, there exists two radius $r$ disks $D_1$ and $D_6$ covering $\widetilde{\mathcal{A}}_5[> 2i]\cup \widetilde{\mathcal{B}}_5[> 2j] \cup\widetilde{\mathcal{C}}\cup \widetilde{\mathcal{A}}_1[\le 2i]\cup \widetilde{\mathcal{B}}_1[\le 2j]$. The center of $D_1$ (resp. $D_6$) is within distance $3\eps(\max(|2i|,|2j|)+1)\le 0.1n\eps$ of $a_1$ (resp. $a_6$), so covers all anchor points around $a_1$ (resp. $a_6$), by \Cref{lem:K-cen_anchor_unique_solution}.
There also exists a disk $D_l$ ($l\in\{2,\dots,5\}$) that covers $\widetilde{\mathcal{A}}_l[\le 2i]\cup \widetilde{\mathcal{A}}_{l-1}[> 2i]$ and the consistency point associated to $a_l$ as well as the anchor points around $a_l$, by \Cref{lem:2D_consistency_new}. The same arguments hold for the points in $\widetilde{\mathcal{B}}_l[\le 2i]\cup \widetilde{\mathcal{B}}_{l-1}[> 2i]$, so the entire point set can be covered by $10$ disks of radius $r$. 

Assume now that there exists a covering of the point set by $10$ disks. We will show that there exists $i,j,k\in\{-n,\dots,n\}$  such that $i+j+k=0$ and $X[i]+X[j]+X[k]=0$, i.e., $X$ is not a NO instance of Gap Convolution-3SUM. First note that the anchor points around a vertex $a_l$ force the center of each disk containing them to lie within distance $0.3n\eps$ to $a_l$, by \Cref{lem:K-cen_anchor_unique_solution}. By construction, for each gadget point set $\widetilde{\mathcal{A}}_l$, there exists a point at distance $>0.3n\eps+r$ from $a_l$, so none of the gadget point sets $\widetilde{\mathcal{A}}_1,\dots, \widetilde{\mathcal{A}}_5,\widetilde{\mathcal{C}},\widetilde{\mathcal{B}}_1,\dots,\widetilde{\mathcal{B}}_5$ can be covered by a single disk of radius $r$. This means that we can now chase the inequalities of the indices of the arrays being covered. We first group the indices covered by a single disk together, so that as we follow along the orientation of each hexagon, each disk is associated to a hexagonal vertex and covers a part of each of the gadget point sets associated to edges incident to that vertex. By \Cref{lem:2D_consistency_new}, following the orientation of the hexagonal edges along the degree two vertices, the maximal indices covered can only decrease, that is, if a disk $D$ covers both $\mathcal{A}_l[j]$ and $\mathcal{A}_{l+1}[i]$, then $i\le j$. Now, at vertex $a_6$, the two paths (starting from $a_1$) along different hexagons rejoin, which enforces these inequalities concerning the indices to be equalities (since otherwise the entire point set cannot be covered using the 10 disks), by \Cref{lm:covering-basic_new}, \Cref{lm:D4-covering} and \Cref{lem:shared_edge}. The indices further have to be even, and we obtain the inequalities regarding the sum of the entries of $X$ from the restrictions imposed by vertices $a_1$ and $a_6$ in both directions ($0\le X[i/2]+X[j/2]+X[k/2]\le 0$), concluding the proof.  
\end{proof}

\section{\texorpdfstring{$k$}{k}-Center in $\real^d$}
\label{sec: parameterized results}


In this section, we provide the proof of Theorems~\ref{thm:k-center}~and~\ref{thm:k-center-approx}. In Section~\ref{sec:binsumset}, we introduce the starting problem of our reduction for which we have a lower bound under ETH. Then, in Section~\ref{subsec:kcenter-2dim}, we prove Theorem~\ref{thm:k-center} on the plane. Finally, in Section~\ref{subsec:kcenter-Ddim}, we prove Theorem~\ref{thm:k-center} for all fixed dimensions.

\subsection{A Hard Problem: Binary SumSet on Grid Graphs}\label{sec:binsumset}

We first describe the \EMPH{binary SumSet} problem, a variant of the \EMPH{binary CSP} problem, as the ETH-hard problem to reduce from, to tackle \kcenter for general $k$.           

An instance of the \EMPH{binary CSP} problem on a grid graph is a tuple $I = (G, D, \mathcal{C})$ where
\begin{itemize}
    \item $G = (V,E)$ is an induced subgraph of the full $d$-dimensional grid graph consisting of vertices at points with integer coordinates and edges between vertices that are a distance of 1 apart. The vertex set $V$ represents a set of $|V|$ variables, one variable per vertex. 
    \item  $D$ is the set of domain values; each variable can be assigned any value from $D$. 
    \item  $\mathcal{C}=\{c_1,c_2,\dots c_q\}$ is a set of binary constraints associated to edges of $G$. Each $c_i\in\mathcal{C}$ is a pair $\{(u,v),R_{uv}\}$ associated to the edge $(u,v)$ and $R_{uv}\subset D\times D$ is a binary relation over $D$, specifying the admissible values for the pair of variables $(u,v)$.    
\end{itemize}

A \EMPH{solution} of $I$ is an assignment of the variables in $V$ where each variable $v$ is assigned a value $v^*$ in~$D$ and for every edge $(u,v)$ we have that $(u^*,v^*)\in R_{uv}$.  Marx and Sidiropoulos~\cite{MS14} showed the following hardness result.

\begin{theorem}[Marx and Sidiropoulos~{\cite[Theorem~2.6]{MS14}}]\label{thm:binary-CSP} Assuming ETH, for every $d\geq 2$, there is no algorithm with running time $f(k)n^{o(k^{1-1/d})}$ for binary CSP on grid graphs with $k$ variables and domain of size~$n$ for any computable function $f$. 
\end{theorem}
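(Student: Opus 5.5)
This is a cited result (Marx and Sidiropoulos, Theorem 2.6), so the plan is to reconstruct their argument rather than derive it from earlier material in this paper. The route is a reduction from 3-Coloring, or more generally from a bounded-degree 3SAT/CSP instance, which under ETH (together with the sparsification lemma) has no $2^{o(N)}$-time algorithm on $N$ variables with $O(N)$ constraints. The key ingredient is an embedding of a bounded-degree graph $H$ on $N$ vertices into a $d$-dimensional grid of side length $L = O(N^{1/(d-1)})$, in each of the first $d-1$ coordinates, and depth $O(N^{1/(d-1)}\cdot \mathrm{polylog})$.

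\textbf{Step 1: routing.} Use the known low-congestion routing of bounded-degree graphs in grids. Each vertex of $H$ is mapped to a connected set (a "path") in $G$. Each edge of $H$ is realized by two such sets that touch. Each grid vertex is used by only $O(1)$, or polylogarithmically many, original vertices.

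\textbf{Step 2: grouping.} Fix $k$ and group the grid into $k$ blocks, each of which becomes one variable of the new CSP. Each block contains about $N/k$ original vertices, or rather their paths, up to polylog factors. A block variable takes as its value the joint coloring of all original vertices whose paths pass through the block, so the domain size is $n = 3^{O((N/k)\cdot\mathrm{polylog})}$. Binary constraints between adjacent blocks enforce that shared paths are colored consistently and that original edges realized across the interface are properly colored.

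\textbf{Step 3: the count.} The new grid has $k$ cells arranged as a $k^{1/d}\times\cdots\times k^{1/d}$ grid. The geometric point is that the boundary between blocks carries only about $k^{1-1/d}$ blocks' worth of crossings, so the routing must be done so that each block's information scales correctly. The intended conclusion is that an $f(k)n^{o(k^{1-1/d})}$ algorithm would run in time $2^{o(N)}$ once $k$ is chosen as a suitable slowly growing function of $N$, contradicting ETH.

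\textbf{Main obstacle.} The hardest step is making the exponent come out as $k^{1-1/d}$ instead of $k$. A naive grouping gives $n^{k}$-type tradeoffs, with $\log n \approx N/k$, so that $n^{o(k^{1-1/d})} = 2^{o(N/k^{1/d})}$. That is not enough on its own.

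The fix I would try first is the actual Marx–Sidiropoulos trick. Embed $H$ into a $d$-dimensional grid of side length $m$ with $m^{d-1}\approx N$, so the last dimension serves as "time" and has length also about $N^{1/(d-1)}$. Then coarsen the grid into $k$ blocks of side $m/k^{1/d}$. Each block's domain encodes the states on its boundary, of size $(m/k^{1/d})^{d-1}$.

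This gives $\log n \approx N/k^{(d-1)/d}$, and hence $n^{o(k^{1-1/d})} = 2^{o(N)}$. Carefully verifying the embedding (the congestion bounds and the polylog losses, which are absorbed into $f(k)$ by choosing $k = k(N)$ appropriately) is the technical core. I would cite it directly rather than reprove it, since the theorem is quoted verbatim from \cite{MS14}.
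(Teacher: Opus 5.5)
The paper gives no proof of this statement. It is imported verbatim from \cite{MS14}, so your closing decision to cite it matches what the paper does. Your reconstruction, however, contains a concrete error that destroys the tight exponent.

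In Step~1 you allow polylogarithmic depth, and at the end you assert that polylogarithmic losses are ``absorbed into $f(k)$ by choosing $k=k(N)$''. They cannot be, because these losses are polylogarithmic in $N$, the size of the ETH-hard instance, not in $k$. With depth $q$, the boundary layer of a block of side $s=L/k^{1/d}$ meets up to $\Theta(q\,s^{d-1})$ images. Hence $\log n=\Theta(q\,N/k^{1-1/d})$, and a polylogarithmic factor in the side length $L$ enters the same way. An algorithm running in time $f(k)\,n^{g(k)k^{1-1/d}}$ with $g(k)\to 0$ then gives time $f(k)\,2^{\Theta(g(k)\,q\,N)}$. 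Since $g$ may tend to $0$ arbitrarily slowly while $k(N)$ is at most polynomial in $N$, the product $g(k(N))\cdot\mathrm{polylog}(N)$ need not tend to $0$, so no contradiction with ETH follows. What you actually need is an embedding of a graph with $O(N)$ edges into a grid of side $O(N^{1/(d-1)})$ with depth $O(1)$, with no logarithmic factors anywhere.

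Two further points need fixing.

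\textbf{Your diagnosis of the naive grouping (Steps~2--3) is backwards.} If grouping really gave $\log n\approx N/k$, then $n^{o(k^{1-1/d})}=2^{o(N/k^{1/d})}$ would be more than enough. The real obstruction is different. A block of side $s$ can meet up to $(\text{depth})\cdot s^d$ images, so $\log n$ can be $\Theta(N^{d/(d-1)}/k)$. That is far too large when $k$ grows slowly. Passing to boundary states is exactly what brings it down to $O(s^{d-1})=O(N/k^{1-1/d})$.

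\textbf{The boundary encoding must be made precise and efficient.}
\begin{itemize}
\item The domain of a block variable is the set of assignments to the block's boundary layer that extend to a satisfying assignment of the fine-grid CSP inside the block.
\item Adjacent blocks are linked only by the fine constraints across their common face. These also enforce consistency of images spanning several blocks.
\item Computing these domains by brute force over the interior costs $2^{\Theta(s^d)}$, which is not $2^{o(N)}$ when $k$ grows slowly. You need a slice-by-slice dynamic program through the block, costing $2^{O(s^{d-1})}=2^{O(N/k^{1-1/d})}$ per block.
\end{itemize}

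With these repairs, your outline does yield the theorem: start from 3-Coloring with $O(N)$ edges via sparsification, use a constant-depth embedding, coarsen into $k$ blocks using boundary states, and take a computable $k=k(N)\to\infty$ with $f(k)\le 2^{o(N)}$.
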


Our reduction is based on a variant of binary CSP on grid graphs, where each constraint is a sum constraint. More formally, an instance of the \EMPH{binary SumSet} problem on a grid graph is a tuple $J = (G , D, \mathcal{D})$ where:
\begin{itemize}
    \item $G= (V,E)$ is an induced subgraph of the $d$-dimensional grid graph as above, each of the $k$ vertices representing a variable. 
    \item  $D = [n^2]$ is the domain.
    \item  $\mathcal{D}$ is a family of subsets of $D$, containing a set $D_v$ for each vertex $v\in V$, and a set $D_{e}$ associated to each edge $e\in E$. Specifically, we 2-color the graph $G$ such that a vertex $v$ gets assigned the color $c(v)\in \{0,1\}$. For vertices $u$ and $v$, $D_v$ is some subset of $[n]\cdot n^{c(v)}$, and $D_{uv}$ is a subset of $\{a+b~|~(a,b)\in D_v\times D_u\}\subset [n^2]$. 
\end{itemize}

A \EMPH{solution} of $J$ is an assignment of the variables in $V$ such that each vertex~$v$ is assigned a value $v^* \in D_v$ and for every edge $e=(u,v)$, it holds that $u^*+v^* \in D_{e}$. The 2-coloring in the definition guarantees that if in a solution $u_1^*+v_1^* =u_2^*+v_2^*\in D_e$, then $u_1^*=u_2^*$ and $v_1^*=v_2^*$. 
We provide a fine-grained reduction from binary CSP to the binary SumSet problem. 


\begin{lemma}\label{lm:binary-sumset} Assuming ETH, for every $d\geq 2$, there is no algorithm with running time $f(k)\cdot n^{o(k^{1-1/d})}$ for binary SumSet on grid graphs with $k$ variables and domain of size~$n$ for any computable function $f$. 
\end{lemma}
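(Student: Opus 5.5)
We reduce from binary CSP on grid graphs (\Cref{thm:binary-CSP}) via a parameter-preserving, polynomial-time reduction that keeps the graph $G$ unchanged. Given $I=(G,D,\mathcal{C})$ with $|D|=n$, identify $D$ with $\{0,\dots,n-1\}$ and $2$-colour the bipartite grid graph $G$ by coordinate-sum parity, giving $c(v)\in\{0,1\}$.

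The encoding is as follows.
\begin{itemize}
\item For a value $x\in D$ at vertex $v$, use the integer $\phi_v(x):=(x+1)\cdot n^{c(v)}$. This lies in $[n]\cdot n^{c(v)}$.
\item Set $D_v:=\{\phi_v(x)\mid x\in D\}$.
\item For an edge $uv$ with $c(u)=0$ and $c(v)=1$, set $D_{uv}:=\{\phi_u(x)+\phi_v(y)\mid (x,y)\in R_{uv}\}$. This is a subset of the sumset of $D_u$ and $D_v$ inside $[n^2]$, up to a harmless shift or rescaling of the domain if one insists on exactly $[n^2]$; enlarging $n$ by a constant factor is fine.
\end{itemize}
The instance $J=(G,[n^2],\mathcal{D})$ has the same $k$ and a domain parameter that is polynomial in $n$, and it is computable in time polynomial in $n$ and $k$.

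Correctness rests on unique decodability of sums. If $a=x+1$, $a'=x'+1\in[n]$ and $b=(y+1)n$, $b'=(y'+1)n$ satisfy $a+b=a'+b'$, then $a-a'=(y'-y)n$. Since $|a-a'|<n$, this forces $y=y'$ and $x=x'$. This is exactly the observation recorded after the definition of binary SumSet.

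\textbf{Completeness.} A CSP solution $(v^*)$ maps to the assignment $\phi_v(v^*)$. Every edge sum $\phi_u(u^*)+\phi_v(v^*)$ lies in $D_{uv}$ by construction, since $(u^*,v^*)\in R_{uv}$.

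\textbf{Soundness.} Take a SumSet solution. Each vertex value lies in $D_v$, so it decodes to some $x_v$. On each edge $uv$ the sum $\phi_u(x_u)+\phi_v(x_v)$ equals $\phi_u(x)+\phi_v(y)$ for some $(x,y)\in R_{uv}$. Unique decodability then gives $(x_u,x_v)=(x,y)\in R_{uv}$, so the decoded values form a CSP solution.

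Finally, the running-time bound transfers. Suppose binary SumSet had an $f(k)\cdot N^{o(k^{1-1/d})}$ algorithm, where the domain size is $N=\mathrm{poly}(n)$. Then $N^{o(k^{1-1/d})}=n^{o(k^{1-1/d})}$, so composing it with the reduction would solve binary CSP in $f(k)\cdot n^{o(k^{1-1/d})}$ time, contradicting \Cref{thm:binary-CSP}.

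The only delicate points are conventions. The paper's "domain size $n$" for SumSet sits alongside $D=[n^2]$ and $|D_v|\le n$, so one must confirm the polynomial blow-up is absorbed, which it is. Each relation $R_{uv}$ must also be oriented according to which endpoint has colour $0$.
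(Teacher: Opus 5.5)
Your proposal is correct and is essentially the paper's own proof: the same $2$-colouring, the encoding of a value at $v$ as an element of $[n]\cdot n^{c(v)}$, edge sets taken as the images of the relations $R_{uv}$, unique decodability because adjacent vertices get different colours, and the transfer $N^{o(k^{1-1/d})}=n^{o(k^{1-1/d})}$ for $N=\mathrm{poly}(n)$. Your two side remarks are points the paper's write-up glosses over, and you handle them correctly: sums can reach $n^2+n$ rather than staying inside $[n^2]$, and each $R_{uv}$ must be encoded according to which endpoint has colour $0$.
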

\begin{proof} We show a reduction from binary CSP on grid graphs to the binary SumSet problem. Let $I = (G = (V,E), D, \mathcal{C})$ be an instance of binary CSP with $k$ variables and domain of size $n$. We construct a family $\mathcal{D}$ of subsets of $[N]$ with $N=n^2$ as in the definition of the binary SumSet problem.

First, we 2-color the grid graph $G$ such that the vertex $v$ gets assigned the color $c(v)\in \{0,1\}$. For vertices $u$ and $v$ of $G$, consider $D_v:=[n]\cdot n^{c(v)}$ and $D_{uv}:=\{a\cdot n^{c(v)}+b\cdot n^{c(u)}~|~(a,b)\in R_{uv}\}$. Since neighboring vertices have different colors, each pair $(a,b)\in R_{uv}$ is associated to a unique number in $D_{uv}$.   

Then $J = (G, [N], \mathcal{D})$ is an instance of binary SumSet on grid graphs with $k$ variables and domain of size $N =n^2$, where $\mathcal{D}$ is defined by all possible subsets of values for edges and vertices. It follows directly from the construction that $I$ is satisfiable if and only if $J$ is satisfiable. Thus, an algorithm with running time $f(k)\cdot N^{o(k^{1-1/d})} = f(k)\cdot n^{o(k^{1-1/d})}$ for $J$ would give an algorithm with the same running time for $I$, contradicting \Cref{thm:binary-CSP}.
\end{proof}

\subsection{The Case $d=2$} \label{subsec:kcenter-2dim}
In this section, we prove Theorem~\ref{thm:k-center} on the plane by reducing the two-dimensional binary SumSet problem to the \kcenter problem in the plane. We show that there exists a solution to \kcenter with radius \radius if and only if binary SumSet has a solution (recall that $\eps={(10\cdot n)^{-100}}$). The reduction is based on a \EMPH{basic curve} and an arrangement of congruent copies of the basic curve for each vertex of the binary SumSet instance on the plane.

We will use the machinery of \Cref{sec:10center} for the grid graph with $k$ vertices in $\real^2$. The basic curve (for each vertex) is no longer just the boundary of a hexagon, but the overarching idea will be the same. The schematic view of the reduction, for each edge, provided by \Cref{fig:2dsimplified} is still relevant, \Cref{subfig:2D_generalk_gadget} shows the curves replacing each hexagon, and \Cref{fig:d2kcenBasic} illustrates how the basic curves fit together.

Let $G = (V,E)$ be a subset of the 2D grid graph with $k$ vertices. Consider a set of four hexagonal grid cells, connected so that together they have exactly 10 neighboring grid cells and there exists exactly one horizontal hexagonal edge in the interior of the union of them (see \Cref{subfig:2D_generalk_gadget}).
Let $S$ be the outer boundary of the 10 neighboring grid cells. The \EMPH{basic curve} $S_{v}$ of each vertex $v$ of $G$ is a translation of $S$ that consists of hexagonal edges. Denoting the vertices of the grid graph $G$ by $a_{ij}$, such that the $i$ index corresponds to the coordinate of the vertex in the vertical direction, while $j$ specifies the horizontal coordinate. If $u=a_{ij}$, we also write $S_{ij}=S_u$. 

\begin{figure}[htb]
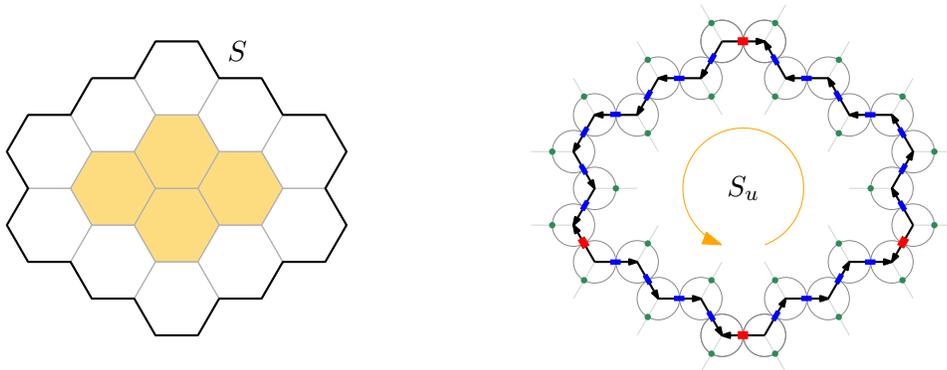

    \centering
    \begin{subfigure}[b]{0.45\textwidth}
        \centering
        \includegraphics[page=6]{figs/2D_schematic_view.pdf}
     \end{subfigure}
    \begin{subfigure}[b]{0.45\textwidth}
        \centering
        \includegraphics[page=5]{figs/2D_schematic_view.pdf}
     \end{subfigure}
     \caption{The basic curve is a translation of the curve $S$ on the left. On the right, the orientation is counterclockwise. The points of type (1) are drawn in blue, of type (2) in red, and of type (3) as green circles. 
     \label{subfig:2D_generalk_gadget}}
\end{figure}

We arrange the basic curves in such a way that the basic curves of two vertices $u$ and $v$ share an edge if and only if $(u,v)\in E$ (as in \Cref{fig:d2kcenBasic}). Moreover, two basic curves can share at most one edge. This shared edge is used to encode the constraint that for the assignment $u^*$ to $u$ and $v^*$ to $v$, it holds that $u^*+v^*\in D_{(u,v)}$.
We orient each basic curve in a consistent way, such that one can travel around the curve by following the orientations of nonshared edges (the orientation of shared edges may be inconsistent with the orientation of the other edges, as was also the case for \tencenter, see \Cref{fig:2hexagonsclustering}). Next, we glue basic curves corresponding to neighboring vertices of the grid graph along shared edges, where we only glue basic curves of opposite orientation. Specifically, for $a_{ij}$ such that $i+j$ is even, we orient the basic curve counterclockwise; otherwise, we orient it clockwise. In terms of the colors assigned in the binary SumSet instance, the orientation is reversed for basic curves associated to vertices of the grid graph of different colors. The fact that it is possible to assemble an arbitrary amount of these basic curves into a configuration that mimics the connectivity of a grid graph follows essentially from~\Cref{subfig:2D_arrangegadgets}. The figure shows that around a central basic curve, we can arrange 8 copies by removing edges appropriately from the hexagonal tiling. Furthermore, the arrangement exhibits translational symmetry, as indicated by the green arrows, which means that the basic curves do not overlap in arrangements of arbitrary size.         

\begin{figure}[htb]
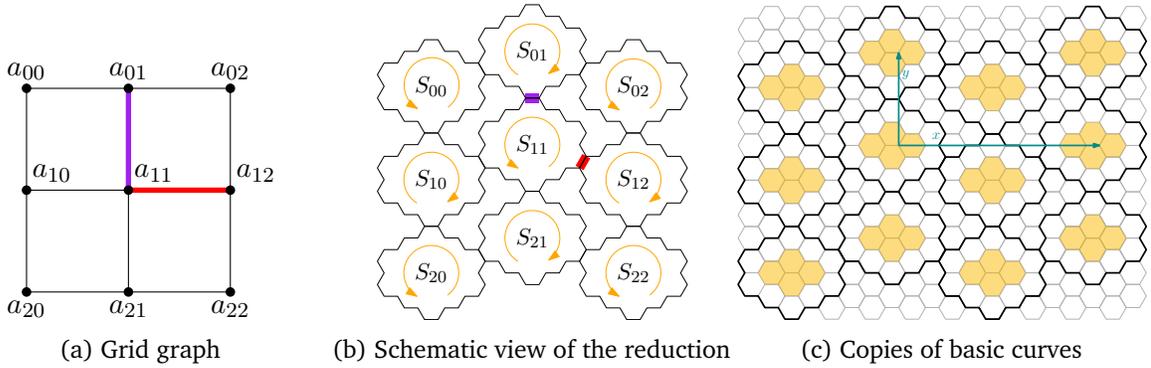

    \centering
    \begin{subfigure}[b]{0.3\textwidth}
    \vspace{0pt}
        \centering
        \includegraphics[page=2]{figs/2D_generalk_.pdf}
        \subcaption[]{Grid graph}
        \label{subfig:2D_generalk_grid_graph}
     \end{subfigure}%
    \begin{subfigure}[b]{0.33\textwidth}
    \vspace{0pt}
        \centering
        \includegraphics[page=4, height=4.2cm]{figs/2D_schematic_view.pdf}
        \subcaption[]{Schematic view of the reduction}
        \label{subfig:2D_generalk_full}
     \end{subfigure}%
     \begin{subfigure}[b]{0.33\textwidth}
     \vspace{0pt}
        \centering
        \includegraphics[width=\textwidth]
        {figs/hexagadgets.pdf}
        \subcaption[]{Copies of basic curves}
        \label{subfig:2D_arrangegadgets}
     \end{subfigure}
    \caption{(a) a grid graph in $\real^2$ with $k = 9$ vertices; (b) A schematic view of the reduction; the shared edges marked in (a) are highlighted. Each arrow denotes the orientation of the edges of the basic curve. (c) Copies of nonoverlapping basic curves with the same incidence structure as the full grid graph. Translational symmetries are indicated by the green arrows.}
    \label{fig:d2kcenBasic}
\end{figure}

Let $\mathcal{H}$ be the set containing the hexagonal vertices of a basic curve.  
Then, the cardinality $k'$ of $\mathcal{H}$ is in $O(k)$, i.e., linear in the number $k$ of vertices of the grid graph of the binary SumSet instance. Define~$\mathcal{A}$ to be the set of anchor points defined by the points in~$\mathcal{H}$ as in \Cref{sec:10center}. Then, by \Cref{lem:K-cen_anchor_unique_solution}, for every $k'$-center solution to $\mathcal{A}$ of radius \radius the centers lie close to the points in $\mathcal{H}$.

In the following, we add more points to this instance and show that a $k'$-center solution of radius \radius to this instance exists if and only if the corresponding binary SumSet instance has a solution.

\paragraph{Array construction.}
We construct an array for each set $D_v$ and $D_e$ of the binary SumSet instance. Then, we will use these arrays to construct point sets similar to the 10-center construction. The definition allows us to reuse the machinery developed in~\Cref{sec:10center} for the \tencenter problem (\Cref{lem:shared_edge} and \Cref{lem:2D_consistency_new}).

For each set $D_v$, where $v$ is a vertex of the grid graph of the binary SumSet instance, we define an array $X_v=X_v[(-100n^2) \dots (100n^2)]$
\begin{equation} \label{def:Xv}
    X_v[k]=\begin{cases}
    1& \text{if } k\in D_v \\
    2 & \text{otherwise}.
    \end{cases}
\end{equation}
Further for an edge $e$, we define the array $X_e=X_e[(-100n^2) \dots (100n^2)]$ by
\begin{equation}\label{def:Xe}
    X_e[k]=\begin{cases}
    -2& \text{if } -k\in D_e\\
    0& \text{otherwise}.
    \end{cases}
\end{equation}

By construction, we have the following observation. 
\begin{observation}\label{obs:defX_i}
    Let $v$ and $w$ be adjacent vertices. Then for any $k_1,k_2,k_3$, it holds that $X_v[k_1]+X_w[k_2]+X_{(v,w)}[k_3]\leq 0$ if and only if $k_1\in D_v$, $k_2\in D_w$, and $k_3\in D_{(v,w)}$.
\end{observation}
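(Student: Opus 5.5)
The claim is a direct case check on the definitions \eqref{def:Xv} and \eqref{def:Xe}. Each entry of $X_v$ and $X_w$ lies in $\{1,2\}$, and each entry of $X_{(v,w)}$ lies in $\{-2,0\}$. So I will simply enumerate the possible values of the sum $X_v[k_1]+X_w[k_2]+X_{(v,w)}[k_3]$ and record exactly when it is non-positive. One point to settle first is the convention for $k_3$. The definition \eqref{def:Xe} marks index $k$ as good when $-k\in D_e$, so "$k_3\in D_{(v,w)}$" in the observation should be read through this sign flip, i.e.\ as the condition under which $X_{(v,w)}[k_3]=-2$. I will state this reading explicitly at the start.

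For the direction "if": suppose all three memberships hold. Then the three entries are $1$, $1$ and $-2$, and the sum is $0\le 0$.

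For the direction "only if": I argue by contrapositive. If $X_{(v,w)}[k_3]=0$, the sum is at least $1+1=2>0$. If $X_{(v,w)}[k_3]=-2$ but at least one of $k_1\notin D_v$ or $k_2\notin D_w$ holds, then the corresponding entry equals $2$. The sum is then at least $2+1-2=1>0$. So the sum is non-positive only in the all-good case.

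There is no real obstacle here; the only care needed is the sign convention for the edge index just discussed. It is worth noting the stronger fact that the sum is either exactly $0$ (all good) or at least $1$ (otherwise). This gap is what makes the statement usable later together with \Cref{lm:covering-basic_new}, where only the sign of the sum matters. The entries are bounded by $2\le n^2$, so those lemmas apply with $n$ replaced by $100n^2$.
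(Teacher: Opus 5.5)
Your argument is correct and is exactly the direct case check the paper has in mind when it says the observation holds ``by construction'': the entries lie in $\{1,2\}$ and $\{-2,0\}$, so the sum is $0$ in the all-good case and at least $1$ otherwise. You are also right that the edge condition must be read through the sign flip in (\ref{def:Xe}), i.e.\ as $-k_3\in D_{(v,w)}$; taken literally, $k_3\in D_{(v,w)}$ forces $X_{(v,w)}[k_3]=0$ because $D_{(v,w)}$ consists of positive integers, and then the ``if'' direction would fail.
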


As in \Cref{sec:10center}, for every non-shared edge $(a,b)$ of a basic curve $S_u$, we add the point set $\PSEdge{a}{b}$ (points of type (1)) and for every shared edge $(a,b)$ we add the point set $\PSEdge{b}{a}$ (points of type (2)), and for every hexagonal vertex~$b$ that is incident to two non-shared edges, we add the consistency point~$c_b$ (points of type (3)). 


\begin{observation}\label{obs:possibleCenters}
    Let $p$ be a point of type (1), (2), or (3) on the hexagonal edge from $a$ to $b$ in a basic curve. Then, in any $k'$-center solution of radius \radius, $p$ is contained in a disk that is centered at a point with distance at most $0.3n\eps$ to $a$ or to $b$. 
\end{observation}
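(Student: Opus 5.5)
We would prove \Cref{obs:possibleCenters} by combining \Cref{lem:K-cen_anchor_unique_solution} with a simple distance bound. Fix a $k'$-center solution of radius \radius for the full point set. Since the anchor points $\mathcal{A}$ are part of the instance, the same $k'$ disks form a $k'$-center solution of radius \radius for $\mathcal{A}$ alone. Because $|\mathcal{H}|=k'$, \Cref{lem:K-cen_anchor_unique_solution} gives, for every $h\in\mathcal{H}$, a disk whose center lies within $0.3n\eps$ of $h$. Distinct points of $\mathcal{H}$ are at distance at least $2$, so these centers are distinct. Hence every one of the $k'$ centers lies within $0.3n\eps$ of exactly one hexagonal vertex, and there are no other centers.

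Now let $p$ be a point of type (1), (2) or (3) on the hexagonal edge from $a$ to $b$. Some disk contains $p$, and its center $c$ satisfies $\|c-h\|\le 0.3n\eps$ for some $h\in\mathcal{H}$. It remains to show that $h\in\{a,b\}$. By the triangle inequality,
\[
\|p-h\|\le \radius+0.3n\eps<1.
\]
So it suffices to check that every point of these types lies at distance at least $1$ from every hexagonal vertex other than $a$ and $b$.

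For this check, we would bound the positions using \eqref{eq:Pn-def}. Type (1) and (2) points are $p_{\mathrm{mid}}+t\vec w$, where $p_{\mathrm{mid}}$ is the midpoint of the edge, $\vec w$ is a unit vector along the edge, and $|t|\le 3(100n^2)\eps+2\eps^{1.5}+\eps\ll 1$. Here we use that the arrays have entries of constant size (for \kcenter) or at most $n^2$ (for \tencenter), and index range $O(n^2)$, so $|t|$ is tiny since $\eps=(10n)^{-100}$. The consistency point $c_b$ lies on a hexagonal edge at distance $1-\eps$ from its endpoint $b$, i.e.\ essentially at the midpoint as well. Every point on a hexagonal edge of length $2$ in the hexagonal grid is at distance at least $\sqrt{3}$ from any hexagonal vertex other than that edge's endpoints; this minimum is attained at the midpoint with respect to the adjacent hexagon's center vertex, and the distance only increases toward the endpoints. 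Points within $O(n^2\eps)$ of the midpoint are therefore at distance at least $\sqrt{3}-o(1)>1$ from all other vertices in $\mathcal{H}$. This contradicts $\|p-h\|<1$ unless $h\in\{a,b\}$.

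For type (3), we must take the relevant edge to be the third edge at $b$ carrying $c_b$. That edge's other endpoint may be a vertex in $\mathcal{H}$, or may lie outside $\mathcal{H}$; in either case the statement holds.

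The only delicate point is ensuring that the hexagonal vertices and centers of $\mathcal{H}$ are pairwise at distance $\ge 2$ and that no grid vertex in $\mathcal{H}$ lies closer than $\sqrt3$ to an edge midpoint. Both follow from the geometry of the hexagonal grid with edge length $2$, so the argument is routine.
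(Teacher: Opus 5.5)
Your argument is correct and matches the paper's proof: the anchor lemma (\Cref{lem:K-cen_anchor_unique_solution}) puts every center within $0.3n\eps$ of a point of $\mathcal{H}$, and every point of type (1)--(3) on an edge is at distance at least $\sqrt{3}$ from all points of $\mathcal{H}$ other than that edge's endpoints; your pigeonhole step (centers are in bijection with $\mathcal{H}$) and your explicit treatment of the consistency point spell out what the paper leaves implicit. One slip: $1-\eps+\eps^{1.7}+0.3n\eps$ is not less than $1$, since $0.3n\eps>\eps$ for $n\ge 100$; the argument only needs $\|p-h\|\le 1+0.3n\eps<\sqrt{3}$, so the conclusion is unaffected.
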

\begin{proof}
    The distance from $p$ to any point in $\mathcal{H}\setminus \{a, b\}$ is at least $\sqrt{3}$. Therefore, the claim follows by \Cref{lem:K-cen_anchor_unique_solution}.
\end{proof}

\begin{theorem}\label{thm:planarsumset}
    Assuming ETH, there is no $f(k)n^{o(\sqrt{k})}$ time algorithm for the \kcenter problem in $\mathbb{R}^2$.
\end{theorem}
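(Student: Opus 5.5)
We reduce from binary SumSet on two-dimensional grid graphs, using \Cref{lm:binary-sumset}. Given $J=(G,D,\mathcal{D})$ with $k$ vertices and domain $[n^2]$, we build the point set $\mathcal{P}$ described above. It consists of the anchor points $\mathcal{A}$ around the set $\mathcal{H}$ of all hexagonal vertices of the glued basic curves, with $k':=|\mathcal{H}|=O(k)$. On non-shared edges we place the type (1) sets built from the arrays $X_v$. On shared edges we place the type (2) sets built from $X_e$. At the degree-two vertices we place the consistency points of type (3). We then ask whether $\mathcal{P}$ has a $k'$-center solution of radius \radius. The instance has $O(k n^2)$ points and is built in polynomial time. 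An $f(k')N^{o(\sqrt{k'})}$ algorithm for \kcenter with $N=|\mathcal{P}|$ would therefore give an $f'(k)n^{o(\sqrt{k})}$ algorithm for binary SumSet, contradicting \Cref{lm:binary-sumset}. It remains to prove that the instance is correct.

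\textbf{Completeness.} Let $(v^*)_{v\in V}$ be a solution of $J$. For each basic curve $S_v$ we use split index $2v^*$ on every non-shared edge. We center one disk near each hexagonal vertex, as follows.
\begin{itemize}
\item At a degree-two vertex $b$ of $S_v$, \Cref{lem:2D_consistency_new}(3) gives a disk within distance $3\eps(2|v^*|+1)$ of $b$. It covers the later part of the incoming edge, the earlier part of the outgoing edge, and $c_b$.
\item At the two endpoints of a shared edge $e=(u,v)$, \Cref{lem:shared_edge} applies with $i=2u^*$ and $j=2v^*$. Its condition $X_u[u^*]+X_v[v^*]+X_e[-(u^*+v^*)]=0$ follows from \Cref{obs:defX_i}: with the chosen entries $1,1,-2$, the sum is $\le 0$ exactly when $u^*\in D_u$, $v^*\in D_v$, and $u^*+v^*\in D_e$, and in that case it equals $0$.
\end{itemize}
Here one must fix the index convention. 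The entry of $X_e$ indexed $-(i+j)/2$ must correspond to $u^*+v^*\in D_e$, which is why \eqref{def:Xe} uses $-k$. All disk centers lie within $0.1n\eps$ of their hexagonal vertex, after rescaling $\eps$ relative to the array length $O(n^2)$ if necessary. So the second part of \Cref{lem:K-cen_anchor_unique_solution} shows that the anchor points are covered as well.

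\textbf{Soundness.} Fix any solution of radius \radius.
\begin{itemize}
\item By \Cref{lem:K-cen_anchor_unique_solution}, there is exactly one disk near each $h\in\mathcal{H}$.
\item By \Cref{obs:possibleCenters}, each point on edge $(a,b)$ lies in the disk of $a$ or the disk of $b$. No single disk can cover a whole edge set, since its extreme points are too far apart. Hence each edge set $P(a,b)$ is split by a threshold $t_{ab}$ into a prefix in the disk of $a$ and a suffix in the disk of $b$, up to the orientation convention.
\item Walk along $S_v$ following the orientation of its non-shared edges. At each degree-two vertex, \Cref{lem:2D_consistency_new}(1) forces the thresholds to be non-increasing, and part (2) forces parity constraints.
\item At the two endpoints of each shared edge, \Cref{lm:covering-basic_new} and \Cref{lm:D4-covering} give opposite inequalities, as in the proof of \Cref{lem:shared_edge}.
\end{itemize}
The main obstacle is closing this chain of inequalities into equalities. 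The argument will follow the closed cycle $S_v$. The thresholds decrease monotonically along the degree-two vertices. At the shared edges, the sums of thresholds are bounded by $\le 0$ at one end and by $\ge 0$ at the other. Because each basic curve is a closed loop, going once around forces every inequality to be tight. This parallels the \tencenter argument, but now one must check that the cyclic chain through the shared-edge vertices is consistent with the glued orientations: opposite orientations for the two color classes, and four shared edges per curve.

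Once tightness holds, every non-shared edge of $S_v$ carries the same even split index $2v^*$. Each shared edge $e=(u,v)$ then satisfies $X_u[u^*]+X_v[v^*]+X_e[-(u^*+v^*)]\le 0$. By \Cref{obs:defX_i} this gives $u^*\in D_u$ and $u^*+v^*\in D_e$, so $(v^*)$ solves $J$.
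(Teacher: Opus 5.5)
Your reduction, completeness argument and bookkeeping match the paper and are fine: the sign convention for $X_e$, rescaling $\eps$ to the array length, and $O(kn^2)$ points. The gap is in soundness, at the step you call the main obstacle.

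For a shared edge $e=(u,v)$, write $\mathrm{in}_u(e)$ and $\mathrm{out}_u(e)$ for the split indices of $S_u$ on its non-shared edges entering the tail and leaving the head of $e$. Write $k_e$ for the split index on $e$ itself. \Cref{lm:D4-covering} and \Cref{lm:covering-basic_new} give only
\[
\mathrm{out}_u(e)+\mathrm{out}_v(e)\;\le\;-k_e\;\le\;\mathrm{in}_u(e)+\mathrm{in}_v(e).
\]
This constrains sums over two curves; there is no inequality between $\mathrm{in}_u(e)$ and $\mathrm{out}_u(e)$ alone. In the warm-up each hexagon had one shared edge, but $S_u$ now has up to four, so going once around a single closed curve forces nothing. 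You would have to sum over all curves:
\begin{itemize}
\item each curve gives $\sum_{e\ni u}(\mathrm{in}_u(e)-\mathrm{out}_u(e))\le 0$;
\item each shared edge gives $(\mathrm{in}_u(e)-\mathrm{out}_u(e))+(\mathrm{in}_v(e)-\mathrm{out}_v(e))\ge 0$;
\item only the global total makes everything tight.
\end{itemize}
More importantly, even full tightness yields only $\mathrm{in}_u(e)+\mathrm{in}_v(e)=\mathrm{out}_u(e)+\mathrm{out}_v(e)$. Curve $u$ may lower its index by $2$ across $e$ while $v$ raises its index by $2$. The lemmas do not forbid this, because the array condition at the tail, $X_u[\mathrm{in}_u(e)/2]+X_v[\mathrm{in}_v(e)/2]+X_e[k_e/2]\ge 0$, holds automatically for entries in $\{1,2\}$ and $\{-2,0\}$. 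So your claim that, once tightness holds, every non-shared edge of $S_v$ carries the same index $2v^*$ is unjustified. Nothing in your soundness argument would change for arbitrary sets $D_v$, and then a circulation of such trades around a $4$-cycle of the grid is not excluded.

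The missing ingredient is the one the paper invokes: the $2$-coloring built into binary SumSet, which you use only for orientations.
\begin{itemize}
\item Tightness at the head of each shared edge, together with \Cref{obs:defX_i} (with the sign fixed as you noted), gives $\mathrm{out}_u(e)/2\in D_u$.
\item Tightness along the segments of $S_u$ gives $\mathrm{in}_u(e)=\mathrm{out}_u(e^-)$ for the preceding shared edge $e^-$ on $S_u$. Hence $\mathrm{in}_u(e)/2\in D_u$ as well, and likewise for $v$.
\item Since $D_u\subseteq n^{c(u)}\cdot[n]$ and $D_v\subseteq n^{c(v)}\cdot[n]$ with $c(u)\neq c(v)$, such a sum decomposes uniquely.
\item Equal sums therefore force $\mathrm{in}_u(e)=\mathrm{out}_u(e)$ and $\mathrm{in}_v(e)=\mathrm{out}_v(e)$.
\end{itemize}
Only after this step is each curve's split index constant, and your final paragraph then goes through.
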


\begin{proof}
    For a $2$-dimensional binary SumSet instance with $k$ variables and domain size $n$, the construction of the arrangement of basic curves outlined above constitutes a $k'$-center instance in~$\mathbb{R}^2$ with $n'$ points where $k'=O(k)$ and $n'=O(n^2)$ ($[n^2]$ is the domain of the binary SumSet instance). 
    We show that there exists a clustering of radius $r=\radius$ if and only if the binary SumSet instance has a solution.
    
    Assume there is a solution to binary SumSet and for any vertex~$v$ let $v^*\in D_v$ be the assignment in a feasible solution. Then, it holds that $u^*+v^*\in D_{(u,v)}$ for any edge $(u,v)$. By \Cref{lem:shared_edge}, \Cref{lem:2D_consistency_new}, and \Cref{obs:defX_i}, it follows that we can cover all points with $k'$ disks of radius $r$ as follows. 
    Let $P(a,b)$ be a point set of type (1) defined along a non-shared edge of a basic curve associated to a vertex $u$ of the grid graph. Then the points in $P(a,b)[\le  2u^*]$ are covered by disks of radius $r$ within distance $0.1n\epsilon$ from $a$, and $P(a,b)[>2u^*]$ is covered by a similar disk close to $b$. For every point set $P(a,b)$ of type (2) (lying on a shared edge of two basic curves) defined by $(u,v)$, the points in $P(a,b)[\le 2(u^*+v^*)]$ are covered by the disk with center close to $a$; $P(a,b)[>2(u^*+v^*)]$ is covered by the disk roughly centered at $b$. Since the disks are sufficiently close to vertices (points in $\mathcal{H}$) of the basic curve, by \Cref{lem:2D_consistency_new}, the consistency points (points of type (3)) and the anchor points are also covered by the collection of $k'$ disks. Along the shared edges of basic curves, the coloring in the definition of the binary SumSet instance ensures global consistency: the index values used for neighboring basic curves are identical, ensuring that the clusterings constructed in this way for neighboring basic curves is compatible for neighboring basic curves.        
    
    The other direction follows similarly to the proof of \Cref{lem:3SUMquadhardness}, in conjunction with~\Cref{obs:defX_i} (which gives the connection of the binary SumSet instance to arrays and thus the constructed point sets), which allows us to rely on our established results restricting the location of centers of disks involved in the clustering. The idea is that within each basic curve gadget, the existence of a clustering enforces inequalities for indices (and array values) as one travels along the edges of the basic curve. Every disk in the solution to the $k'$-center instance covers points of types (1) and (2), and the maximum values of the indices (along the orientation of the basic curve) that are covering correspond to index restrictions of the original array constructed from the binary SumSet instance. The requirement that every point must be covered along with the rigid inequalities in \Cref{lm:covering-basic_new}, \Cref{lem:2D_consistency_new}, and \Cref{lem:shared_edge} actually forces the collection of these inequalities to be equalities, similarly to \Cref{lem:3SUMquadhardness}, where the basic curve consisted of the edges of a single hexagon. At a shared vertex of two neighboring basic curves, the inequalities add up to yield the equality constraint relating the sum of the covered indices and their associated array entries for $X_{u}$ and $X_{(u,v)}$ constructed from the SumSet instance. By definition of the binary SumSet problem, the propagation of the indices along shared edges cannot mix index values for neighboring basic curves, since the coloring implies that the indices of two neighboring basic curves are multiples of different powers of $n$. As a result, the propagation of the index values along hexagonal edges using the clustering is consistent within each basic curve, so the inequalities at vertices of shared edges combine to give restrictions based on the incoming and outgoing edges, and these relate to the same indices and arrays.
\end{proof}

\begin{remark}\label{rem:w1}
It is worth noting that in the proof of Theorem~\ref{thm:planarsumset}, the reduction starts from a $2$-dimensional binary SumSet instance with $k$ variables and domain size $n^2$ and ends with a $k'$-center instance in~$\mathbb{R}^2$ with $n'$ points where $k'=O(k)$ and $n'=O(n)$. Since it is clear from the proof of Theorem 2.6 in \cite{MS14} that the $2$-dimensional binary SumSet problem is W[1]-hard, we also have that \kcenter in 2D is W[1]-hard.
\end{remark}

\subsection{The Case $d\geq 3$}\label{subsec:kcenter-Ddim}

In this section, we prove Theorems~\ref{thm:k-center} and \ref{thm:k-center-approx}. First, we describe how the reduction in Section~\ref{subsec:kcenter-2dim} can be generalized to higher dimensions and then towards the end of this section, we provide the proofs.

We consider yet another reinterpretation of the basic curve in the final construction (\Cref{subfig:2D_generalk_gadget}) in $d=2$. The edges shared between copies of vertically stacked copies of basic curves are associated to `ears' of the curve (portions of the basic curves separated by an \emph{ear edge}, a horizontal edge between vertices that cuts a hexagon exactly ìn half). Both \Cref{fig:d3kcenBasic}(left) and (right) show horizontally drawn ear edges (in pink) along with rotations used to connect basic curves along different linearly independent directions. Consider modifying the basic curves to lie instead as a curve in $\real^3$, obtained by folding the two ears, by $\pi/2$, along the axis formed by the ear edge, according to the same orientation. The folded ears lie in planes that are orthogonal to the plane containing the rest of the basic curve. The resulting basic curves would fit together similarly to the planar case, and we observe that the calculations and reasoning of the previous section also work with this (slightly more complicated) basic curve (we need to move the two consistency points that attach only to vertices of an ear using the same transformation as for the ear, and similarly for the anchor points around these vertices).         

This simple observation means that we can account for further dimensions of the grid graph by simply adding further ears next to the two existing ones, as illustrated in~\Cref{fig:d3kcenBasic}(right).   


\begin{figure}[htb]
    \centering
    \begin{subfigure}[b]{0.3\textwidth}
        \includegraphics[page=1]{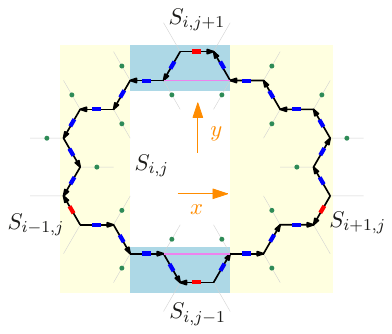}
    \end{subfigure}
    \ \ 
    \begin{subfigure}[b]{0.68\textwidth}
        \raggedleft
        \includegraphics[page=2]{figs/dD_schematic_view.pdf}
    \end{subfigure}
    \caption{(left) The basic curve for the case $d=2$, with the ears in the $y$ direction highlighted in blue. (right) The basic curve for $d=3$: another hexagonal edge and ear are added, in a plane perpendicular to the rest of the basic curve, for the third dimension.} 
    \label{fig:d3kcenBasic}     
\end{figure}

\begin{figure}
    \centering
    \includegraphics[page=4]{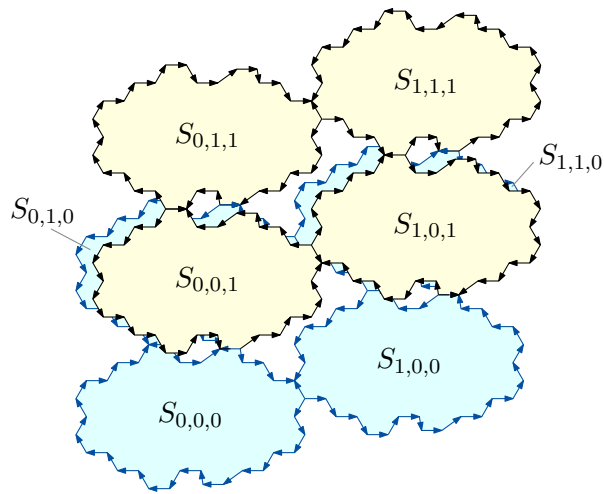}
    \caption{Schematic view of the gadget for a cube.}
    \label{fig:dD_arrangement}
\end{figure}

\paragraph{The basic curve.} Consider the basic curve of~\Cref{subfig:2D_generalk_gadget} without the top and bottom ears, so that the bottom and top edges of the resulting structure each lie on a line and consist of one ear and two hexagonal edges. We extend the top and bottom parts by inserting $(d-2)$ horizontal ear and $d-2$ horizontal hexagonal edges to each of these parts, in such a way that the resulting sequence of edges alternates between the two edge types. We now consider pairs of ears that connect to pairs of ear edges that lie directly on top of each other, as each of the blue and green colored pairs of ears in~\Cref{subfig:2D_generalk_gadget}. We then attach $d-2$ such pairs of ears such that each lies in the plane spanned by a vector in the horizontal direction and another canonical basis vector in $\real^d$, giving each pair an associated dimension. We call the resulting curve a \EMPH{$d$-dimensional basic curve}. \Cref{fig:d3kcenBasic}(right) illustrates the case $d=3$, and also how the construction naturally generalizes to higher dimensions.

\paragraph{The reduction.} Let $G$ be the grid graph with $k$ vertices in $\real^d$. The idea of the construction is still the same as in the planar case. For each vertex $v$ of $G$, we construct one $d$-dimensional basic curve. We arrange the curves so they connect along hexagonal edges on ears associated to the dimension $i$ iff their associated vertices in $G$ are neighboring in dimension $i$. See \Cref{fig:dD_arrangement} for a simple example illustrating the arrangement of a small number of copies in $\real^3$.  

Next, we orient the edges of each basic curve so that the orientation of every shared hexagonal edge is consistent. This can be done since $G$ is 2-colorable, and therefore, we could simply orient the basic curves in one color class clockwise (viewed in the plane of the basic curve without the ears), and orient the basic curves in the other color class counterclockwise.  

The reduction now works in essentially the same way as in the planar case. Points of all types are placed along the $d$-dimensional basic curve as before, where the consistency points on hexagonal edges incident to an ear (on both sides) lie in the plane of the attached ear. The same reasoning as for the planar case then yields, by~\Cref{lm:binary-sumset}, the hardness result for all dimensions.   

We are now ready to prove Theorem~\ref{thm:k-center}.

\paragraph{Proof of Theorem~\ref{thm:k-center}:}
The idea of the proof is the same as in the $2$-dimensional case (\Cref{subsec:kcenter-2dim}), namely, we exhibit a reduction from binary SumSet and invoke \Cref{lm:binary-sumset}. Given a $d$-dimensional binary SumSet instance with $k$ variables and domain size $n^2$, the first step is to convert $G$, the graph of the binary SumSet instance, into an arrangement of basic curves in $\mathbb{R}^d$. Every vertex $v$ of $G$ is replaced by a $d$-dimensional basic curve $S_v$, using the construction discussed above and illustrated for $d=3$ in \Cref{fig:d3kcenBasic}. Since $G$ is an induced subgraph of the grid graph, the $d$-dimensional basic curves are connected in the same way as the vertices of $G$. Each basic curve is assigned an orientation as before, so that at every vertex of a shared edge, either all incident edges are incoming, or outgoing (as in \Cref{fig:dD_arrangement}). 

We add anchor points around every hexagonal vertex as in the 2-dimensional setting (see \Cref{fig:K-cen_anchor_unique_solution} or \Cref{fig:2dsimplified}), in the plane spanned by two linear independent incident edges. 

Next, we convert the sets $D_{v}$ and $D_{(u,v)}$ of the binary SumSet instance into arrays $X_v$ and $X_{uv}$ according to definitions \ref{def:Xv} and \ref{def:Xe}. These arrays are used to define point sets in $\mathbb{R}^d$ (using Definition~\ref{eq:Pn-def}) along the edges of the $d$-dimensional basic curves.
Inside each basic curve $S_v$, points of type (1) associated to $X_v$ are added along the non-shared edges, and those of type (2) are placed on shared edges, corresponding to an edge $(u,v)$ in the grid graph (and the points are defined using $X_{(u,v)}$). Consistency points (points of type (3)) are added to the basic curve as in the two-dimensional case, only that the consistency points on hexagonal edges incident to an ear (on either side) lie in the plane of the attached ear. 

The resulting point set $P$, consisting of the anchor points and points of all three types, represents a $k'$-center instance in~$\mathbb{R}^d$ with $n'$ points. For a fixed dimension $d$, there are only a constant number of hexagonal vertices along a $d$-dimensional basic curve, and around each we place a constant number of points, so we have $k'=O(k)$ and $n'=O(n^2)$, and the construction takes time $O(n^2)$ per vertex of $G$. 
As before, the idea is to show that there exists a clustering of radius \radius if and only if the binary SumSet instance has a solution. As previously observed, the computations and results of the two-dimensional case carry over to the $d$-dimensional basic curve, so this is a straightforward adaptation of the proof of \Cref{thm:planarsumset}.     
\hfill$\square$

By simply observing the parameter setting in our proof of Theorem~\ref{thm:k-center}, we obtain the proof of Theorem~\ref{thm:k-center-approx}.

\paragraph{Proof of Theorem~\ref{thm:k-center-approx}}
Suppose there was an algorithm $\mathcal{A}$, that for some $d_0\ge 2$ and computable function $f$, and for every $n,k,\delta$,  took as input instances of the \kcenter problem in $\mathbb{R}^{d_0}$ on $n$ points, ran in time $f(k)\cdot (k/\delta)^{o(k^{1-1/d_0})}\cdot n^{O(1)}$, and output  a $(1+\delta)$-factor approximation  of the optimal \kcenter cost. From the proof of Theorem~\ref{thm:k-center}, we have that no algorithm running in time $g(k)\cdot n^{o(k^{1-1/d})}$, for any computable function $g$ that takes as input a \kcenter instance on $n$ points in $\mathbb{R}^d$ can distinguish if the optimal radius is at most $\radius$ or if the optimal radius is more than $1-\varepsilon+2\varepsilon^{1.7}$, where $\varepsilon={(10\cdot n)^{-100}}$ (see Remark~\ref{rem:covering_basic_2eps1.7}). We now show that existence of $\mathcal{A}$ contradicts Theorem~\ref{thm:k-center}.

Let $P$ be a \kcenter instance on $n$ points in $\mathbb{R}^{d_0}$. We simply run $\mathcal{A}$ on $P$, and in time $f(k)\cdot (k/\delta)^{o(k^{1-1/{d_0}})}\cdot n^{C}$ (for some universal constant $C$), where $\delta:=\frac{\varepsilon^{1.7}}{\radius}$, and if the optimal radius was at most $\radius$, then the output of $\mathcal{A}$ is at most $1-\varepsilon+2\varepsilon^{1.7}$. On the other hand, if the optimal radius is more than $1-\varepsilon+2\varepsilon^{1.7}$, the the output of $\mathcal{A}$ is more than  $1-\varepsilon+2\varepsilon^{1.7}$. Thus, we can decide if the optimal radius is at most $\radius$ or if the optimal radius is more than $1-\varepsilon+2\varepsilon^{1.7}$. The total running time is thus $\left(f(k)\cdot k^{o(k^{1-1/{d_0}})}\right)\cdot  \left(\frac{\radius}{\varepsilon^{1.7}}\right)^{o(k^{1-1/{d_0}})}\cdot n^{C}=g(k)\cdot \left((10n)^{170}-(10n)^{70}+1\right)^{o(k^{1-1/{d_0}})}\cdot n^C$, where \mbox{$g(k):=\left(f(k)\cdot k^{o(k^{1-1/{d_0}})}\right)$}, and this contradicts Theorem~\ref{thm:k-center} as $\left((10n)^{170}-(10n)^{70}+1\right)^{o(k^{1-1/{d_0}})}\cdot n^C = n^{o(k^{1-1/d_0})}$.
\hfill$\square$


\begin{remark}\label{rem:exp}
It is worth noting a consequence of our reduction when applying the hardness from Theorem~\ref{thm:binary-CSP} \cite{MS14} in the regime where the domain size $n$ is a universal constant. In this case, assuming ETH, there is no algorithm for binary CSP on grid graphs with $K$ variables and constant domain size running in time $2^{o(K^{1-1/d})}$ for any $d\geq 2$.

Our reduction in the proof of Theorem~\ref{thm:k-center} constructs from such a CSP instance, a \kcenter instance in $\mathbb{R}^d$ with $N=O(K)$ points (since $n$ is constant) and $k=\Theta(K)$ centers. The reduction establishes a gap, making it hard to distinguish if the optimal radius is at most $\radius$ or greater than $1-\varepsilon+2\varepsilon^{1.7}$, where $\varepsilon={(10\cdot n)^{-100}}$ (see Remark~\ref{rem:covering_basic_2eps1.7} for the soundness case justification).

Because $n$ is constant, $\varepsilon$ is also a constant. Therefore, the optimal radii in the completeness (at most $\radius$) and soundness (more than $1-\varepsilon+2\varepsilon^{1.7}$) cases are two distinct constants. This implies a constant-factor approximation gap, $(1+\varepsilon_0)$, for some constant $\varepsilon_0 > 0$.

A $(1+\varepsilon_0)$-approximation algorithm for the $d$-dimensional $k$-center problem on $N$ points running in time $2^{o(k^{1-1/d})} \cdot N^{O(1)}$ (where $k=\Theta(N)$) would solve the underlying binary CSP instance in $2^{o(K^{1-1/d})}$ time, violating ETH. Thus, we have proved that for some constant $\varepsilon_0>0$, no $(1+\varepsilon_0)$-approximation for the $d$-dimensional $k$-center problem on $N$ points is possible in time $2^{o({k^{1-1/d}})}\cdot N^{O(1)}$ (where $k=\Theta(N)$), unless ETH fails.
\end{remark}



\section{\texorpdfstring{$2$}{2}-Center in 3D}
\label{sec: 2 center}

In this section, we prove \Cref{thm:2-center} by showing a reduction from the Gap Convolution-3SUM problem. In Section~\ref{sec: 2 center reduction}, we present the reduction, and in Sections~\ref{sec: 2 center completeness}~and~\ref{sec: 2 center soundness}, we analyze the completeness and soundness of the reduction, respectively. 

\subsection{Reduction}\label{sec: 2 center reduction}
In this subsection, we present a reduction from the Gap Convolution-3SUM problem to the \twocenter problem in $\mathbb{R}^3$.

Let $X[-n \dots n]$ be an instance of the Gap Convolution-3SUM problem where each array entry is bounded by $n^2$. For notational convenience, let $M = n^2$ denote the maximum absolute value in $X$, and fix $\eps = (10n)^{-100}$. Our reduction creates the following set of points in ${\mathbb R}^3$; see \Cref{fig:point-set3D}.

\begin{figure}[!htb]
\centering
\begin{tikzpicture}[scale=3,>=stealth]

\draw[->,thick] (0,0) -- (1,0) node[right] {$x$};
 \draw[thick] (0,0) -- (-0.5,0); 
\draw[->,thick] (0,0) -- (0,1) node[above] {$z$};
\draw[thick] (0,0) -- (-0.6,-0.3); 
\draw[dashed,->,thick] (0,0) -- (1.0,0.5) node[above right] {$y$}; 

\fill (0,0) circle (0.5pt) node[below right] {$O$};

\pgfmathsetmacro{\a}{1/sqrt(2)}

\coordinate (A) at (\a,0);
\coordinate (B) at ({0 + \a*0.4}, {0 + \a*0.2});
\coordinate (C) at ({-0.5 + (-0.5)*0.4}, {0 + (-0.5)*0.2});

\fill[blue] (A) circle (0.6pt) node[below right] {$(\frac{1}{\sqrt2},0,0)$};
\fill[red](B) circle (0.6pt);
\node[red] at ($(B)+(-0.36,0.05)$) {$(0,\frac{1}{\sqrt2},0)$};
\fill[green!70!black] (C) circle (0.6pt) node[below left] {$(-\frac12,-\frac12,0)$};

 \node[blue]at ($(A)+(0.1,0.25)$) {$\tilde{\mathcal{A}}$};
\node[red] at ($(B)+(0.1,0.2)$) {$\tilde{\mathcal{B}}$};
\node[green!70!black] at ($(C)+(0.08,-0.1)$) {$\tilde{\mathcal{C}}$};

\draw[->,orange] (A) -- ++(0,0.5) node[right] {$s_A$};
\draw[dashed,thick,orange] (A) -- ++(0,-0.5);
\foreach \i in {1,2,3,4,5}{\fill[blue] ($(A)+(0,0.06*\i)$) circle (0.4pt);}
\foreach \i in {1,2,3,4,5}{\fill[blue!50] ($(A)+(0,-0.06*\i)$) circle (0.4pt);}

\draw[->,orange] (B) -- ++(0,0.5) node[right] {$s_B$};
\draw[dashed,thick,orange] (B) -- ++(0,-0.5);
\foreach \i in {1,2,3,4,5}{\fill[red] ($(B)+(0,0.06*\i)$) circle (0.4pt);}
\foreach \i in {1,2,3,4,5}{\fill[red!50] ($(B)+(0,-0.06*\i)$) circle (0.4pt);}

\draw[->,orange] (C) -- ++(0,0.5)node[right] {$s_C$};
\draw[dashed,thick,orange] (C) -- ++(0,-0.5);
\foreach \i in {1,2,3,4,5}{\fill[green!70!black] ($(C)+(0,0.06*\i)$) circle (0.4pt);}
\foreach \i in {1,2,3,4,5}{\fill[green!40!black] ($(C)+(0,-0.06*\i)$) circle (0.4pt);}


\coordinate (d0) at (0,{1+\a});
\fill[purple] (d0) circle (0.8pt) node[below] {$d^+_0$};

\coordinate (d1) at (1,{\a});
\fill[purple] (d1) circle (0.8pt) node[right] {$d^+_1$};

\coordinate (d2) at ({0+1*0.4},{\a+1*0.2});
\fill[purple] (d2) circle (0.8pt) node[above right] {$d^+_2$};

\coordinate (d3) at (-1,{\a});
\fill[purple] (d3) circle (0.8pt) node[left] {$d^+_3$};

\coordinate (d4) at ({0-0.4},{\a-0.2});
\fill[purple] (d4) circle (0.8pt) node[below right] {$d^+_4$};


\coordinate (dm0) at (0,{-1-\a});
\fill[purple] (dm0) circle (0.8pt) node[above] {$d^-_0$};

\coordinate (dm1) at (1,{-\a});
\fill[purple] (dm1) circle (0.8pt) node[right] {$d^-_1$};

\coordinate (dm2) at ({0+1*0.4},{-\a+1*0.2});
\fill[purple] (dm2) circle (0.8pt) node[above right] {$d^-_2$};

\coordinate (dm3) at (-1,{-\a});
\fill[purple] (dm3) circle (0.8pt) node[left] {$d^-_3$};

\coordinate (dm4) at ({0-0.4},{-\a-0.2});
\fill[purple] (dm4) circle (0.8pt) node[below right] {$d^-_4$};

\end{tikzpicture}
\caption{The three point sets $\apoints,\bpoints,\cpoints$ and the anchor points $\mathcal{D}$.}
\label{fig:point-set3D}
\end{figure}

\paragraph{Canonical points:} These are the points whose coverage would tentatively correspond to a solution. There are three sets of points, built over the domain $I_{n} = \{-2n, \dots, 2n+1\}$: 

\begin{itemize}
    \item $\apoints = \pset_{n}(X, (\frac{1}{\sqrt{2}}, 0,0), (0,0,1), 1)$. 
    \item $\bpoints = \pset_{n}(X, (0, \frac{1}{\sqrt{2}},0), (0,0,1), 1)$. 
    \item $\cpoints = \pset_{n}(X, (-1/2,-1/2,0), (0,0,1/\sqrt{2}), \sqrt{2})$. 
\end{itemize}

Notice that all three point sets are ordered increasingly in the $z$-coordinate. Following the convention established in Section 4, we use $\apoints[\leq i]$ to denote the subset of points in $\apoints$ up to index $i$, and similarly for $[> i]$.

In an ideal situation, we would like the solution of $2$-center to be two (almost) unit balls, one (top ball) centered at around $(0,0,1/\sqrt{2})$ and the other (bottom ball) centered at around $(0,0,-1/\sqrt{2})$, together covering points in $\apoints\cup\bpoints\cup\cpoints$. Furthermore, the top ball will cover points in $\apoints[> 2i], \bpoints[> 2j], \cpoints[> 2k]$ while the bottom ball will cover points in $\apoints[\leq 2i], \bpoints[\leq 2j], \cpoints[\leq 2k]$ for even indices $2i, 2j, 2k \in I_{n}$ such that $i+j+k = 0$ and $X[i] + X[j] + X[k] = 0$. Therefore, a solution of $2$-center for the points will give us a solution of the Gap Convolution-3SUM problem. To get such precise control over the solution of the \twocenter, we add more anchor points as follows.

\paragraph{Anchor points:} Choose $t = 3 \lfloor n/100 \rfloor + 1/4$. The anchor points are at the extreme of the tentative two balls along the two opposite directions of each axis, consisting of the following: 
\begin{itemize}
    \item $d^+_0 = (0,0,1+\frac{1}{\sqrt{2}} + \eps)$
    \item $d^+_1 = (1-t \eps, 0, \frac{1}{\sqrt{2}}+\eps)$
    \item $d^+_2 = (0,1-t \eps,  \frac{1}{\sqrt{2}}+\eps)$
    \item $d^+_3 = (-1+t \eps, 0, \frac{1}{\sqrt{2}}+\eps)$
    \item $d^+_4 = (0,-1+t \eps,  \frac{1}{\sqrt{2}}+\eps)$
\end{itemize}

We symmetrically define $d^-_0, \ldots, d^-_4$ as $d^+_0,\ldots, d^+_4$ with negated $z$-coordinates.
The anchor points are defined as $\dpoints = \{d^+_0, \ldots, d^+_4, d^-_0, \ldots, d^-_4\}$. View $\dpoints$ as $\dpoints = \dset^+ \cup \dset^-$. This completes the description of our reduction. 

\begin{lemma}
The instance $X$ is a Yes-instance of Gap Convolution-3SUM if and only if there exist two balls of (squared) radii $1+ 3 M^2 \eps^2$ that cover all the points in $\apoints \cup \bpoints \cup \cpoints \cup \dpoints$.  
\end{lemma}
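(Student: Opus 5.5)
The proof has two directions. \emph{Completeness}: a YES instance yields an explicit pair of balls. \emph{Soundness}: any pair of balls of squared radius $1+3M^2\eps^2$ covering $\apoints\cup\bpoints\cup\cpoints\cup\dpoints$ yields $i,j,k\in\{-n,\dots,n\}$ with $i+j+k=0$ and $X[i]+X[j]+X[k]=0$, so $X$ is not a NO instance. Both directions rest on a first-order expansion of squared distances around the two ideal centers $o^\pm=(0,0,\pm\tfrac1{\sqrt2})$, in the spirit of the proof of \Cref{lm:covering-basic_new}. For a center $c=o^-+(x,y,\zeta)$ and a canonical point at height $h$ (with $|h|,|x|,|y|,|\zeta|=O(n\eps)$), I expand up to error $O(n^4\eps^2)$:
\begin{align*}
\|\apoints[\cdot]-c\|^2&\approx 1-\sqrt2\,x+\sqrt2\,(h-\zeta),\\
\|\bpoints[\cdot]-c\|^2&\approx 1-\sqrt2\,y+\sqrt2\,(h-\zeta),\\
\|\cpoints[\cdot]-c\|^2&\approx 1+x+y+\sqrt2\,(h-\zeta).
\end{align*}
Taking weights $1,1,\sqrt2$ on these three rows cancels $x$ and $y$. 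With the height of $\cpoints$ carrying the factor $1/\sqrt2$ and offset $\alpha=\sqrt2$, the $\cpoints$ term becomes $3k\eps+X[k]\eps^{1.5}\pm\sqrt2\eps$, on the same scale as the other two. The coefficient of $\zeta$ in the weighted sum does not vanish. Controlling $\zeta$ is exactly the role of the anchor $d_0^\pm$: it pins the vertical coordinate of each center up to $O(\eps)$ with a known sign.

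\textbf{Soundness.} I proceed in four steps.
\begin{enumerate}
\item Every point of $\dset^+$ is at distance about $\sqrt2\cdot\sqrt2>2r$ from every point of $\dset^-$. Hence one ball $B^+$ contains all of $\dset^+$ and the other ball $B^-$ contains all of $\dset^-$.
\item I argue, much as in \Cref{lem:fixing_center}, that covering $d_0^+,\dots,d_4^+$ forces the center of $B^+$ into a small box around $o^+$. Covering $d_0^+$ gives $\zeta\ge \eps-O(M^2\eps^2)$. Covering the pairs $d_1^+,d_3^+$ and $d_2^+,d_4^+$ gives $|x|,|y|\le t\eps+O(\eps^{2})$. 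Together these confine the center to within $O(n\eps)$ of $o^+$. The same holds for $B^-$ by symmetry.
\item The lowest canonical points are too far from $B^+$ and the highest are too far from $B^-$. Distance to a center grows monotonically with height on each line. So on each of the three lines the points split at a single index: $\apoints[\le i']\subseteq B^-$ and $\apoints[>i']\subseteq B^+$, and similarly with indices $j'$ for $\bpoints$ and $k'$ for $\cpoints$.
\item I apply the linearization to $B^-$ containing $\apoints[i'],\bpoints[j'],\cpoints[k']$ and take the $(1,1,\sqrt2)$-weighted sum. Since $\zeta\ge\eps-O(\eps^2)$, the $\zeta$-term has a definite sign. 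This gives the analogue of \eqref{eq:basic_1}: $i'+j'+k'\le 0$, with equality only if all three indices are even and $X[i'/2]+X[j'/2]+X[k'/2]\le0$. The same computation for $B^+$ with $\apoints[i'+1],\bpoints[j'+1],\cpoints[k'+1]$ gives the reverse inequalities, as in \Cref{lm:D4-covering}. Combining the two yields an exact solution.
\end{enumerate}

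\textbf{Completeness.} Given a solution $i,j,k$ with $|i|,|j|,|k|\le\lfloor n/100\rfloor$, I solve the linearized equalities for $x,y$ as in the explicit formulas for $d_x,d_y$ in \Cref{lm:covering-basic_new}, and set $\zeta=\eps$. The choice $t=3\lfloor n/100\rfloor+\tfrac14$ is what keeps the horizontal anchors covered for every split in this gap range. I then check that all quadratic error terms are at most $3M^2\eps^2$, and that lower (respectively upper) indices are covered automatically by monotonicity.

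\textbf{Main obstacle.} The delicate step is the error bookkeeping in soundness. The $\eps$-order slack from the anchors, namely the $+\eps$ in $d_0^\pm$ and $t\eps$ in $d_{1..4}^\pm$, must exactly offset the $\pm\eps$ offsets of the even and odd points. Otherwise parity information is lost, and the tiny radius surplus $3M^2\eps^2$ must not swamp the $\eps^{1.5}$ terms that carry $X$. I would first write the exact squared distances, isolate the linear part, and verify that every remainder is $O(n^4\eps^2)\ll\eps^{1.5}$. Only then would I fix the sign conventions for $\zeta$.
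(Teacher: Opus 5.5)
Your plan follows the paper's argument. The $(1,1,\sqrt2)$-weighted combination for one ball is exactly the paper's subtraction of the $x^\pm$ and $y^\pm$ bounds from the $\cpoints$ bound in \Cref{lem:coverage for B+} and \Cref{lem:coverage for B-}. The pole anchors $d_0^\pm$ give the one-sided vertical bound of \Cref{lem: ranges}, the anchors $d_1^\pm,\dots,d_4^\pm$ give the horizontal box used in \Cref{lem: switching}, and the two balls' inequalities are combined as in \Cref{eq: upper-1} and \Cref{eq: lower-1}. Completeness also matches: for $B^-$ take $c^-=-c^+$, i.e.\ $\zeta=-\eps$ in your $o^-$ frame. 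However, three soundness steps are wrong as written.

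\textbf{Step~1.} The separation claim is false: $\|d_1^+-d_1^-\|=\sqrt2+2\eps<2r$, and in any case $\sqrt2\cdot\sqrt2=2<2r$ since $r>1$. The conclusion survives only through the poles. Since $\|d_0^+-d_0^-\|=2+\sqrt2+2\eps$, the poles lie in different balls. Since $\|d_0^+-d_\ell^-\|>2.1$ and $\|d_0^--d_\ell^+\|>2.1$, and there are only two balls, every other anchor is forced into the ball of its own pole.

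\textbf{Step~2.} The anchors do not confine the center to $O(n\eps)$ vertically. For $B^+$, $d_0^+$ gives only $\zeta\ge\eps-O(M^2\eps^2)$, and $d_1^+,\dots,d_4^+$ still allow $\zeta-\eps$ as large as about $\sqrt{2t\eps}\gg n\eps$. So your two-sided expansion ``up to $O(n^4\eps^2)$'' is unjustified for an arbitrary covering ball. You do not need it. Your three linear expressions are exact lower bounds for every center, because the dropped terms are $x^2+y^2+(h-\zeta)^2\ge 0$; this is how the paper argues.

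\textbf{Step~4.} In your frame $c=o^-+(x,y,\zeta)$ for $B^-$, covering $d_0^-$ gives $1+\eps+\zeta\le r$, i.e.\ $\zeta\le-\eps+O(M^2\eps^2)$, not $\zeta\ge\eps-O(\eps^2)$. With the sign you wrote, the term $-(2\sqrt2+2)\zeta$ points the useless way. With the correct sign it contributes $(2\sqrt2+2)\eps$, which exactly cancels the even-point offsets. Parity is governed by the $+\eps$ in $d_0^\pm$ alone; $t$ only serves the horizontal box. Dividing by $\sqrt2\eps$, the ball $B^-$ yields
\[ \frac32(i'+j'+k')+\frac12\odd(i')+\frac12\odd(j')+\Bigl(2\sqrt2-\frac32\Bigr)\odd(k')+\Sigma\sqrt\eps\le O(M^2\eps), \]
where $\Sigma=X[\lfloor i'/2\rfloor]+X[\lfloor j'/2\rfloor]+X[\lfloor k'/2\rfloor]$. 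The ball $B^+$ on indices $i'+1,j'+1,k'+1$ gives the same inequality with $-\frac32(i'+j'+k')$ and $-\Sigma'\sqrt\eps$, where $\Sigma'$ uses the indices $\lfloor(\cdot+1)/2\rfloor$. Together these force $i'+j'+k'=0$, all three indices even, and $\Sigma=0$.
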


Using standard arguments, \Cref{thm:2-center} follows directly from the lemma. 

\paragraph{Proof of~\Cref{thm:2-center}:} Assume for contradiction that there is a subquadratic-time algorithm that solves the \twocenter problem in 3D. 
Given an instance $X$ guaranteed by \Cref{lm:gap-3sum-hardness}, we use our reduction to produce an instance of size $N = O(n)$ of the \twocenter problem. The presumed algorithm can be used to distinguish between the Yes- and No-instance of $2$-center in $O(N^{2-\gamma})$ time for some $\gamma >0$. Since $N= O(n)$, this algorithm also distinguishes between Yes- and No-instance of Gap Convolution-3SUM in time $O(n^{2-\gamma})$, contradicting the 3SUM hypothesis. \hfill $\square$

\subsection{Completeness}\label{sec: 2 center completeness}

In this subsection, we show that a solution of Gap Convolution-3SUM gives an optimal solution of \twocenter. For convenience, we define the following values, for integers $i, j, k \in \{-n, \dots, n\}$: 
\begin{eqnarray*}
x^*_i & = & -3i - X[i] \sqrt{\eps} \\
y^*_j & = & -3j - X[j] \sqrt{\eps} \\ 
z^*_k & = & -3k - X[k] \sqrt{\eps} 
\end{eqnarray*}
Notice that since a valid solution for Gap Convolution-3SUM always falls within the restricted bounds $[-\lfloor n/100 \rfloor, \lfloor n/100 \rfloor]$, a YES-instance restricts $x^*_i, y^*_j, z^*_k$ to be strictly bounded by $3\lfloor n/100 \rfloor + M\sqrt{\eps}$ for sufficiently large $n$. 

\begin{observation} 
\label{obs: relating x,y,z}
For all integers $i,j,k \in \{-n, \ldots, n\}$, we have 
$x^*_i + y^*_j + z^*_{k} = -3(i+j+k) - (X[i]+X[j]+X[k]) \sqrt{\eps}$. 
If $i+j+k=0$, this simplifies to $-(X[i]+X[j]+X[k])\sqrt{\eps}$.
\end{observation}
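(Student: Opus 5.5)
The plan is to verify the identity directly from the definitions of $x^*_i$, $y^*_j$, $z^*_k$ given just above; no earlier lemma is needed. First, I would write out the three defining expressions $x^*_i = -3i - X[i]\sqrt{\eps}$, $y^*_j = -3j - X[j]\sqrt{\eps}$, and $z^*_k = -3k - X[k]\sqrt{\eps}$. Next, I would add them term by term, grouping the integer parts and the $\sqrt{\eps}$ parts separately. The integer parts sum to $-3i-3j-3k = -3(i+j+k)$, and the $\sqrt{\eps}$ parts sum to $-(X[i]+X[j]+X[k])\sqrt{\eps}$. This yields
\[
x^*_i + y^*_j + z^*_k = -3(i+j+k) - (X[i]+X[j]+X[k])\sqrt{\eps}.
\]

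For the second sentence of the observation, I would substitute the hypothesis $i+j+k=0$. This makes the first term vanish and leaves exactly $-(X[i]+X[j]+X[k])\sqrt{\eps}$.

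There is no real obstacle here; the argument is a one-line use of linearity of the definitions. The only care needed is to note that the indices lie in $\{-n,\dots,n\}$, so $X[i]$, $X[j]$, $X[k]$ are defined. The role of the observation in the sequel is that, when $i+j+k=0$ and $X[i]+X[j]+X[k]=0$ (the YES case), the three shifts sum to zero. This is what will allow the top and bottom balls to be placed consistently in the completeness argument.
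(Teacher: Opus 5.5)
Your proof is correct and is the same argument the paper relies on; the paper states this as an observation without proof. It is just termwise summation of the defining expressions for $x^*_i$, $y^*_j$, $z^*_k$, followed by substituting $i+j+k=0$.
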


These terms also allow us to simplify the coordinates using the unified formula from \Cref{eq:Pn-def}. 
\begin{observation}
\label{obs: coordinates of points}
For all $\hat{i},\hat{j},\hat{k} \in I_{n}$, using $\odd$ as an indicator function of being odd, we can rewrite the points' coordinates as:   
\begin{itemize}
    \item $\apoints[\hat{i}] = (1/\sqrt{2}, 0, \eps(-x^*_{\lfloor \hat{i}/2 \rfloor} - (-1)^{\hat{i}})) = (1/\sqrt{2}, 0, \eps(-x^*_{\lfloor \hat{i}/2 \rfloor} + 2\odd(\hat{i}) - 1))$
    \item $\bpoints[\hat{j}] = (0, 1/\sqrt{2}, \eps(-y^*_{\lfloor \hat{j}/2 \rfloor} - (-1)^{\hat{j}})) = (0, 1/\sqrt{2}, \eps(-y^*_{\lfloor \hat{j}/2 \rfloor} + 2\odd(\hat{j}) - 1))$
    \item $\cpoints[\hat{k}] = (-\frac{1}{2}, - \frac{1}{2}, \frac{\eps}{\sqrt{2}} (-z^*_{\lfloor \hat{k}/2 \rfloor}) - (-1)^{\hat{k}}\eps) = (-\frac{1}{2}, - \frac{1}{2}, \eps(-\frac{1}{\sqrt{2}} z^*_{\lfloor \hat{k}/2 \rfloor} + 2\odd(\hat{k})-1))$
\end{itemize}
\end{observation}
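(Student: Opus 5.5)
This observation follows by substituting into the definition \eqref{eq:Pn-def} of $\pset_n$ and regrouping terms. No geometric argument is needed. Fix $\hat{i}\in I_n$ and set $i=\lfloor \hat{i}/2\rfloor$. For $\apoints=\pset_n(X,(\frac{1}{\sqrt{2}},0,0),(0,0,1),1)$, the definition gives
\[
\apoints[\hat{i}]=\Bigl(\tfrac{1}{\sqrt{2}},0,0\Bigr)+\bigl(3i\eps+X[i]\eps^{1.5}-(-1)^{\hat{i}}\eps\bigr)(0,0,1).
\]
The first two coordinates are therefore $1/\sqrt{2}$ and $0$. Factoring $\eps$ out of the third coordinate gives $\eps\bigl(3i+X[i]\sqrt{\eps}-(-1)^{\hat{i}}\bigr)$. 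By the definition of $x^*_i$ we have $3i+X[i]\sqrt{\eps}=-x^*_i$, which yields the first displayed form $\eps(-x^*_{\lfloor\hat{i}/2\rfloor}-(-1)^{\hat{i}})$.

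For the second form I use the identity $-(-1)^{m}=2\,\odd(m)-1$. It can be checked in the two parity cases: for even $m$ both sides equal $-1$, and for odd $m$ both sides equal $1$. The same computation, with $(0,\frac{1}{\sqrt{2}},0)$ as the central point and $y^*_j$ in place of $x^*_i$, handles $\bpoints[\hat{j}]$ verbatim.

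The only place needing care is $\cpoints=\pset_n(X,(-\frac12,-\frac12,0),(0,0,\frac{1}{\sqrt{2}}),\sqrt{2})$, where both the direction vector and the offset $\alpha=\sqrt{2}$ are nonstandard. Writing $k=\lfloor\hat{k}/2\rfloor$, the third coordinate is
\[
\tfrac{1}{\sqrt{2}}\bigl(3k\eps+X[k]\eps^{1.5}-(-1)^{\hat{k}}\sqrt{2}\,\eps\bigr)=\tfrac{\eps}{\sqrt{2}}(-z^*_k)-(-1)^{\hat{k}}\eps .
\]
Here the factor $\sqrt{2}$ in $\alpha$ cancels the $1/\sqrt{2}$ from the direction vector, so the parity term keeps magnitude exactly $\eps$, while the index term is scaled by $1/\sqrt{2}$. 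Factoring out $\eps$ and applying the parity identity again gives $\eps(-\frac{1}{\sqrt{2}}z^*_k+2\,\odd(\hat{k})-1)$.

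I do not expect a real obstacle. The only step to double-check is this $\sqrt{2}$ bookkeeping for $\cpoints$, since it is precisely what later makes the $\cpoints$ spacing $\sqrt{2}$ times smaller while the $\pm\eps$ split between consecutive odd and even points is the same on all three lines.
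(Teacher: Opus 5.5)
Your proposal is correct and matches the paper's intent: the paper states this observation without proof because it follows exactly as you describe, by substituting into the definition of $\pset_n$ in \Cref{eq:Pn-def}, using $-x^*_i = 3i + X[i]\sqrt{\eps}$ (and likewise for $y^*$, $z^*$), and applying the parity identity $-(-1)^m = 2\odd(m)-1$. Your handling of the $\sqrt{2}$ factors for $\cpoints$ is also right: the offset $\alpha=\sqrt{2}$ cancels the $1/\sqrt{2}$ in the direction vector, so only the $z^*$ term keeps the $1/\sqrt{2}$ scaling.
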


Let $(i,j,k)$ be a solution to the instance $X$, i.e., $i+j+k = 0$ and $X[i]+ X[j] + X[k] = 0$. We define the following two ball centers of (squared) radii $1+ 3\eps^2 M^2$: 
\[c^+ = (x \eps, y \eps, \frac{1}{\sqrt{2}} + \eps),~c^- = - c^+\]
where $x=  x^*_i$ and $y = y^*_j$; additionally define $z= z^*_k$. 
The balls with centers $c^+, c^-$ are referred to as the top ($B^+$) and bottom ($B^-$) balls, respectively. 

\begin{figure}[!htb]
    \centering
    \begin{tikzpicture}[scale=3]
  \pgfmathsetmacro{\h}{1/sqrt(2)}          
  \pgfmathsetmacro{\r}{1}                  
  \pgfmathsetmacro{\rin}{sqrt(1 - \h*\h)} 
  \pgfmathsetmacro{\yr}{0.22}              

  \pgfmathsetmacro{\es}{0.97}              

  \pgfmathsetmacro{\rinvis}{\es * \rin}
  \pgfmathsetmacro{\yrvis}{\es * \yr}

  \coordinate (C1) at (0, {\h});    
  \coordinate (C2) at (0, -{\h});   

  \draw[dashed,gray] (C1) ellipse ({\r} and {\yr*\r});
  \draw[dashed,gray] (C2) ellipse ({\r} and {\yr*\r});

  \fill[blue!18,opacity=0.85] (C1) circle (\r);
  \fill[red!18,opacity=0.85]  (C2) circle (\r);

  \draw[thick,blue!80] (C1) circle (\r);
  \draw[thick,red!80]  (C2) circle (\r);

  \draw[thick,blue!80] ($(C1)+(-\r,0)$) arc (180:360:{\r} and {\yr*\r});
  \draw[thick,red!80]  ($(C2)+(-\r,0)$) arc (180:360:{\r} and {\yr*\r});

  \draw[dashed,black] ({-\rinvis},0) arc (180:0:{\rinvis} and {\yrvis*\rin});
  \draw[thick,black]  ({-\rinvis},0) arc (180:360:{\rinvis} and {\yrvis*\rin});

 
  \pgfmathsetmacro{\a}{1/sqrt(2)}

  \coordinate (A) at (\a,0);
  \coordinate (B) at ({0 + \a*0.4}, {0 + \a*0.2});
  \coordinate (C) at ({-0.5 + (-0.5)*0.4}, {0 + (-0.5)*0.2});

\node[blue]  at ($(A)+(0.1,0.25)$) {$\tilde{\mathcal{A}}$};
  \node[red] at ($(B)+(0.1,0.2)$) {$\tilde{\mathcal{B}}$};
  \node[green!70!black] at ($(C)+(0.08,-0.1)$) {$\tilde{\mathcal{C}}$};

  \draw[->,orange] (A) -- ++(0,0.5) node[right] {$s_A$};
  \draw[dashed,thick,orange] (A) -- ++(0,-0.5);
  \foreach \i in {1,2,3,4,5}{\fill[blue] ($(A)+(0,0.06*\i)$) circle (0.4pt);}
  \foreach \i in {1,2,3,4,5}{\fill[blue!50] ($(A)+(0,-0.06*\i)$) circle (0.4pt);}

  \draw[->,orange] (B) -- ++(0,0.5) node[right] {$s_B$};
  \draw[dashed,thick,orange] (B) -- ++(0,-0.5);
  \foreach \i in {1,2,3,4,5}{\fill[red] ($(B)+(0,0.06*\i)$) circle (0.4pt);}
  \foreach \i in {1,2,3,4,5}{\fill[red!50] ($(B)+(0,-0.06*\i)$) circle (0.4pt);}

  \draw[->,orange] (C) -- ++(0,0.5)  node[right] {$s_C$};
  \draw[dashed,thick,orange] (C) -- ++(0,-0.5);
  \foreach \i in {1,2,3,4,5}{\fill[green!70!black] ($(C)+(0,0.06*\i)$) circle (0.4pt);}
  \foreach \i in {1,2,3,4,5}{\fill[green!40!black] ($(C)+(0,-0.06*\i)$) circle (0.4pt);}

    \fill[blue] (A) circle (0.6pt);
\fill[red](B) circle (0.6pt);
\fill[green!70!black] (C) circle (0.6pt);

\coordinate (d0) at (0,{1+\a});
\fill[purple] (d0) circle (0.8pt) node[below] {$d^+_0$};

\coordinate (d1) at (1,{\a});
\fill[purple] (d1) circle (0.8pt) node[right] {$d^+_1$};

\coordinate (d2) at ({0+1*0.4},{\a+1*0.2});
\fill[purple] (d2) circle (0.8pt) node[above right] {$d^+_2$};

\coordinate (d3) at (-1,{\a});
\fill[purple] (d3) circle (0.8pt) node[left] {$d^+_3$};

\coordinate (d4) at ({0-0.4},{\a-0.2});
\fill[purple] (d4) circle (0.8pt) node[below right] {$d^+_4$};


\coordinate (dm0) at (0,{-1-\a});
\fill[purple] (dm0) circle (0.8pt) node[above] {$d^-_0$};

\coordinate (dm1) at (1,{-\a});
\fill[purple] (dm1) circle (0.8pt) node[right] {$d^-_1$};

\coordinate (dm2) at ({0+1*0.4},{-\a+1*0.2});
\fill[purple] (dm2) circle (0.8pt) node[above right] {$d^-_2$};

\coordinate (dm3) at (-1,{-\a});
\fill[purple] (dm3) circle (0.8pt) node[left] {$d^-_3$};

\coordinate (dm4) at ({0-0.4},{-\a-0.2});
\fill[purple] (dm4) circle (0.8pt) node[below right] {$d^-_4$};

\node[right=3em] at (d0) {$B^+$};
\node[right=3em] at (dm0) {$B^-$};
\end{tikzpicture}
    \caption{The $2$-center with 2 approximate unit balls associated to a solution of Gap Convolution-3SUM in terms of a 2-clustering for the point sets $\apoints$, $\bpoints$, $\cpoints$, and $\dpoints$. }
    \label{fig:twoballs-gadget}
\end{figure}

\begin{observation}
\label{obs:monotone distance}
Since $x^*, y^*, z^*$ monotonically decrease with respect to the index (as the sequence step of $3$ dominates $2M\sqrt{\eps}$):
\begin{itemize}
    \item The squared distances between $c^+$ and $\apoints[\hat{i}], \bpoints[\hat{j}], \cpoints[\hat{k}]$ are monotonically decreasing in $\hat{i}, \hat{j}, \hat{k}$ respectively. 
    \item The squared distances between $c^-$ and $\apoints[\hat{i}], \bpoints[\hat{j}], \cpoints[\hat{k}]$ are monotonically increasing in $\hat{i}, \hat{j}, \hat{k}$ respectively. 
\end{itemize}   
\end{observation}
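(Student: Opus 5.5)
The claim reduces to one fact: along each of the three vertical lines $s_A, s_B, s_C$, the $z$-coordinate of the points is strictly increasing in the index and stays tiny, while $c^+$ sits far above and $c^-$ far below all of them. I would argue as follows.

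\textbf{Splitting the squared distance.} Fix one of the lines, say $s_A$. Every point $\apoints[\hat{i}]$ has the same $x$- and $y$-coordinates $(1/\sqrt{2},0)$. Hence
\[
\|\apoints[\hat{i}]-c^\pm\|^2 = h^\pm + \bigl(z(\apoints[\hat{i}]) - z(c^\pm)\bigr)^2,
\]
where $h^\pm$ is the squared horizontal distance and does not depend on $\hat{i}$. The same decomposition holds on $s_B$ and on $s_C$. So it suffices to show two things. First, $z(\apoints[\hat{i}])$, $z(\bpoints[\hat{j}])$ and $z(\cpoints[\hat{k}])$ are strictly increasing in the index. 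Second, all of these values lie strictly below $z(c^+)=\frac{1}{\sqrt2}+\eps$ and strictly above $z(c^-)=-\frac{1}{\sqrt2}-\eps$. Then $z(c^+)-z(\cdot)$ is positive and decreasing, which gives the claim for $c^+$; and $z(\cdot)-z(c^-)$ is positive and increasing, which gives the claim for $c^-$.

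\textbf{The range bound.} This is immediate from \Cref{obs: coordinates of points}. Since $|x^*_m| \le 3n + M\sqrt{\eps}$ for all $m$, every canonical point has $|z| \le \eps(3n+M\sqrt{\eps}+1) \ll \frac{1}{\sqrt2}$, because $\eps=(10n)^{-100}$. This holds for all indices in $I_n$, not only for those appearing in a solution.

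\textbf{Monotonicity.} This is the only step needing care, and I expect it to be the main obstacle, specifically the step from an odd index to the next even index. Using \Cref{obs: coordinates of points}, I would compare consecutive indices in two cases.
\begin{itemize}
\item Even $2m$ to odd $2m+1$: the block index $m$ is unchanged, so the $z$-coordinate increases by exactly $2\eps$.
\item Odd $2m+1$ to even $2m+2$: the block index increases by one and the parity term drops by $2\eps$. For $s_A$ and $s_B$ the change is
\[
\eps\bigl(3 + (X[m+1]-X[m])\sqrt{\eps} - 2\bigr) \;\ge\; \eps\bigl(1 - 2M\sqrt{\eps}\bigr) \;>\; 0.
\]
\end{itemize}
On $s_C$ the block step is scaled by $1/\sqrt2$, so the odd-to-even change is
\[
\eps\Bigl(\tfrac{3}{\sqrt2} - 2 + \tfrac{1}{\sqrt2}(X[m+1]-X[m])\sqrt{\eps}\Bigr) \;\ge\; \eps\bigl(0.12 - 2M\sqrt{\eps}\bigr) \;>\; 0.
\]
This is the tightest margin, and it is precisely where the choice of spacing $3$ against the offset $\sqrt{2}\eps$ in the definition of $\cpoints$ matters. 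In every case the margin is positive because $M\sqrt{\eps} = n^2(10n)^{-50}$ is negligible.

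Combining strict monotonicity of the $z$-coordinates with the range bound and the constant horizontal parts yields both bullet points of the observation.
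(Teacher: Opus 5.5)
Your proof is correct and follows the same reasoning the paper compresses into its one-line justification: the canonical points on each vertical line have $z$-coordinates that increase with the index and lie strictly between $z(c^-)$ and $z(c^+)$, while the horizontal offset is constant, so the squared distances are monotone. Your explicit case split is somewhat more careful than the paper's parenthetical, because monotonicity of $x^*,y^*,z^*$ alone does not account for the parity offset; you correctly identify the tight odd-to-even margin $3/\sqrt{2}-2>0$ on $s_C$.
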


We argue that these balls cover all points, as shown in~\Cref{fig:twoballs-gadget}.

\begin{claim}\label{clm:top-ball}
The top ball $B^+$ covers all points $\apoints[> 2i], \bpoints[> 2j], \cpoints[> 2k]$ and $\dset^+$.      
\end{claim}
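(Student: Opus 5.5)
By Observation~\ref{obs:monotone distance}, the squared distance from $c^+$ to the points of each column decreases with the index. So it suffices to check the three extreme points $\apoints[2i+1]$, $\bpoints[2j+1]$, $\cpoints[2k+1]$, together with the five anchor points of $\dset^+$. For each of these eight points I will write the difference vector to $c^+=(x\eps,y\eps,\tfrac{1}{\sqrt2}+\eps)$ using Observation~\ref{obs: coordinates of points}, and show its squared norm is at most $1+3M^2\eps^2$.

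First I record the size bounds. Since $(i,j,k)$ is a YES solution, $|i|,|j|,|k|\le\lfloor n/100\rfloor$. Hence $|x|,|y|,|z|\le 3\lfloor n/100\rfloor+M\sqrt\eps$, which is far below $M=n^2$. In particular $x^2,y^2,z^2\le M^2$.

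Canonical points. For odd index $2i+1$ we have $\apoints[2i+1]=(\tfrac1{\sqrt2},0,\eps(1-x))$, so the difference vector is $(\tfrac1{\sqrt2}-x\eps,\,-y\eps,\,-\tfrac1{\sqrt2}-x\eps)$. The cross terms $\mp\sqrt2 x\eps$ cancel, leaving squared norm $1+(2x^2+y^2)\eps^2\le 1+3M^2\eps^2$. Symmetrically, $\bpoints[2j+1]$ gives $1+(x^2+2y^2)\eps^2$.

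For $\cpoints[2k+1]=(-\tfrac12,-\tfrac12,\eps(1-z/\sqrt2))$ the difference is $(-\tfrac12-x\eps,\,-\tfrac12-y\eps,\,-\tfrac1{\sqrt2}-z\eps/\sqrt2)$. Its squared norm is
\[1+(x+y+z)\eps+(x^2+y^2+z^2/2)\eps^2 .\]
This is the key step. The linear term does not cancel coordinatewise; it vanishes only because $x+y+z=-(X[i]+X[j]+X[k])\sqrt\eps=0$ by Observation~\ref{obs: relating x,y,z} and the solution property. It is also exactly where the $1/\sqrt2$ scaling of $\cpoints$ is needed. What remains is again at most $1+3M^2\eps^2$.

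Anchor points. For $d_0^+$ the difference is $(-x\eps,-y\eps,1)$, giving $1+(x^2+y^2)\eps^2$. For $d_1^+$ the difference is $(1-(t+x)\eps,\,-y\eps,\,0)$. Here I need $0\le (t+x)\eps\le 2$ so that $(1-(t+x)\eps)^2\le 1$. This holds because $t=3\lfloor n/100\rfloor+1/4$ dominates $|x|$ up to $M\sqrt\eps<1/4$. The squared distance is then at most $1+y^2\eps^2$. The points $d_2^+,d_3^+,d_4^+$ are handled identically, using $t\pm y$ and $t-x$ in place of $t+x$.

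The only places I expect care to be needed are the cancellation for $\cpoints$ and the sign condition $t\pm x,\,t\pm y\ge 0$ for the anchors. That sign condition is precisely why $t$ is chosen slightly larger than the maximal index range.
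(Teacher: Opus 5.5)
Your proposal is correct and follows the paper's proof essentially step for step. Both arguments reduce via Observation~\ref{obs:monotone distance} to the odd boundary points $\apoints[2i+1],\bpoints[2j+1],\cpoints[2k+1]$, cancel the $\pm\sqrt2 x\eps$ (resp.\ $\pm\sqrt2 y\eps$) cross terms, use $x+y+z=0$ to kill the linear term for $\cpoints$, and handle $\dset^+$ via $t\pm x,\,t\pm y>0$; your exact expressions such as $1+(2x^2+y^2)\eps^2$ are just slightly more explicit than the paper's inequalities.
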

\begin{proof}
From Observation~\ref{obs:monotone distance}, we only need to verify that the top ball covers the respective odd boundary points $\apoints[2i+1], \bpoints[2j+1], \cpoints[2k+1]$. Since $2i+1$ is odd, the $z$-coordinate evaluates with $+ \eps$.
\begin{eqnarray*}
\|c^+ - \apoints[2i+1]\|^2 &\leq & (\frac{1}{\sqrt{2}} - x \eps)^2 + (y\eps)^2 + (\frac{1}{\sqrt{2}} + \eps - \eps(-x + 1))^2 \\ 
 &= & (\frac{1}{2}- \cancel{\sqrt{2} x \eps} + (x\eps)^2) + (y \eps)^2  + (\frac{1}{2}+ \cancel{\sqrt{2} x \eps}+ (x\eps)^2) \\ 
 &\leq & 1 + 3M^2 \eps^2  
\end{eqnarray*}
This implies that $\|c^+- \apoints[2i+1]\| \leq 1+ 3M^2 \eps^2$. 

Similarly, due to Observation~\ref{obs:monotone distance}, it suffices to verify that the top ball covers $\bpoints[2j+1]$. 
\begin{eqnarray*}
\|c^+ - \bpoints[2j+1]\|^2  & \leq &  (x\eps)^2 + (\frac{1}{\sqrt{2}} - y \eps)^2 + (\frac{1}{\sqrt{2}} +y \eps)^2 \\ 
&=& (x\eps)^2 + (\frac{1}{2} - \cancel{\sqrt{2} y \eps} + (y\eps)^2)+  (\frac{1}{2} +  \cancel{\sqrt{2} y \eps} + (y\eps)^2)  \\ 
&\leq& 1+3M^2 \eps^2 
\end{eqnarray*}

Finally, we verify that the top ball covers $\cpoints[2k+1]$. Since $2k+1$ is odd, $\odd(2k+1) = 1$, evaluating the coordinate with $+ \eps$.
\begin{eqnarray*}
\|c^+ - \cpoints[2k+1]\|^2 & \leq & (\frac{1}{2} + x \eps)^2 + (\frac{1}{2} + y \eps)^2 + (\frac{1}{\sqrt{2}} + \eps - (-\frac{\eps z}{\sqrt{2}} + \eps))^2  \\ 
&=& (\frac{1}{4}+ x \eps +(x \eps)^2) + (\frac{1}{4}+ y\eps + (y\eps)^2)+ (\frac{1}{2}+ z \eps + \frac{1}{2}(\eps z)^2) \\ 
&\leq& 1+ (x+y+ z) \eps + 3M^2 \eps^2 
\end{eqnarray*}
Using that $i,j,k$ is a solution to the instance $X$, we have that $x+y+z = -(X[i]+ X[j]+ X[k]) \sqrt{\eps} = 0$ and therefore the squared distance between $c^+$ and $\cpoints[2k+1]$ is at most $1+ 3M^2 \eps^2$. 

Next, we verify the distance between $c^+$ and the anchor points. In particular, 
\begin{eqnarray*}
\|c^+- d^+_0\|^2 & = & (x \eps)^2 + (y\eps)^2 + 1 \le 1 + 2M^2 \eps^2 \\ 
\|c^+- d^+_1\|^2 & = & (1-t\eps - x\eps)^2 + (y\eps)^2 \leq 1+2M^2 \eps^2 \\ 
\|c^+ - d^+_2\|^2 &= &  (x \eps)^2 + (1-t\eps - y\eps)^2  \leq 1+ 2M^2 \eps^2 \\ 
\|c^+ - d^+_3\|^2 &= & (-1+t\eps - x\eps)^2 + (y\eps)^2 \leq 1+2M^2 \eps^2 \\ 
\|c^+- d^+_4\|^2 & =& (x \eps)^2 + (-1+t\eps - y\eps)^2  \leq 1+ 2M^2 \eps^2
\end{eqnarray*}
The second and third lines use the fact that $(t+x)$ and $(t+y)$ are positive because $|x|$ and $|y|$ are bounded by $3\lfloor n/100 \rfloor + M \sqrt{\eps}$ while $t = 3\lfloor n/100 \rfloor + 1/4$. The fourth and fifth use that $(t-x)$ and $(t-y)$ are positive. 
This concludes the proof of the claim. 
\end{proof}

We can similarly show the coverage of $B^-$. 

\begin{claim}\label{clm:bottom-ball}
The ball $B^-$ covers $\apoints[\leq 2i], \bpoints[\leq 2j], \cpoints[\leq 2k]$ and $\dset^-$.      
\end{claim}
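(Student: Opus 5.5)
The plan is to mirror the proof of \Cref{clm:top-ball}, exploiting the point symmetry $c^- = -c^+$. By \Cref{obs:monotone distance}, the squared distance from $c^-$ to $\apoints[\hat i]$, $\bpoints[\hat j]$, $\cpoints[\hat k]$ is increasing in the index. Hence it suffices to check the three even boundary points $\apoints[2i]$, $\bpoints[2j]$, $\cpoints[2k]$, together with the five anchor points of $\dset^-$.

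For the boundary points, I will use \Cref{obs: coordinates of points}. Since $2i$ is even, $\apoints[2i] = (1/\sqrt2, 0, \eps(-x-1))$ with $x=x^*_i$. Writing $c^- = (-x\eps, -y\eps, -\frac{1}{\sqrt2}-\eps)$, the $z$-difference is
\[
-x\eps-\eps+\tfrac{1}{\sqrt2}+\eps = \tfrac{1}{\sqrt2}-x\eps .
\]
The squared distance is therefore
\[
(\tfrac{1}{\sqrt2}+x\eps)^2+(y\eps)^2+(\tfrac{1}{\sqrt2}-x\eps)^2 ,
\]
and the cross terms $\pm\sqrt2 x\eps$ cancel exactly as before. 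This leaves $1+2(x\eps)^2+(y\eps)^2\le 1+3M^2\eps^2$, since $|x|,|y|\le M$ up to lower-order terms, as already used in \Cref{clm:top-ball}. The computation for $\bpoints[2j]$ is identical with the roles of $x$ and $y$ swapped.

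For $\cpoints[2k] = (-\frac12,-\frac12,\eps(-\frac{z}{\sqrt2}-1))$, the coordinate differences are $-\frac12+x\eps$, $-\frac12+y\eps$, and $\frac{1}{\sqrt2}-\frac{z\eps}{\sqrt2}$. Squaring and summing gives
\[
1-(x+y+z)\eps+O(M^2\eps^2).
\]
By \Cref{obs: relating x,y,z} and $i+j+k=0$, $X[i]+X[j]+X[k]=0$, we have $x+y+z=0$, so this is also at most $1+3M^2\eps^2$. The only point needing care here is the sign of the linear term, which is now negative rather than positive. Since the sum vanishes, the sign does not matter.

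For the anchors, $d^-_\ell = -d^+_\ell$ and $c^- = -c^+$ give $\|c^- - d^-_\ell\| = \|c^+ - d^+_\ell\|$. The bounds are therefore exactly those already computed in \Cref{clm:top-ball}, which relied on $t\pm x>0$ and $t\pm y>0$. I expect no real obstacle. The one step to double-check is the monotonicity direction and the parity sign convention in the $z$-coordinate of the even points, so that the $\eps$ offset in $c^-$ cancels the $-\eps$ from the even index, just as the $+\eps$ cancelled for odd indices in the top-ball case.
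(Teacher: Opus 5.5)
Your proposal is correct and follows the paper's proof essentially step for step: monotonicity reduces everything to $\apoints[2i]$, $\bpoints[2j]$, $\cpoints[2k]$, the cross terms cancel the same way, $x+y+z=0$ removes the linear term for $\cpoints[2k]$, and the anchors are handled by the same calculation as for the top ball. One small slip: $d^-_\ell$ is obtained from $d^+_\ell$ by negating only the $z$-coordinate, so $d^-_\ell \neq -d^+_\ell$ for $\ell \in \{1,2,3,4\}$ (e.g.\ $-d^+_1 = d^-_3$); your symmetry argument still works because $\dset^- = -\dset^+$ as sets and \Cref{clm:top-ball} bounds all five anchors.
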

\begin{proof}
Using Observation~\ref{obs:monotone distance}, we verify that $B^-$ covers the point $\apoints[2i]$. Since $2i$ is even, the $z$-coordinate evaluates with $- \eps$.
\begin{eqnarray*}
\|c^- - \apoints[2i]\|^2 & \leq & (\frac{1}{\sqrt{2}} + x \eps)^2 + (y\eps)^2 + (-\frac{1}{\sqrt{2}} - \eps - \eps(-x - 1))^2 \\ 
  &= & (\frac 1 2+ \cancel{\sqrt{2} x \eps} + (x\eps)^2) + (y \eps)^2  + (\frac 1 2- \cancel{\sqrt{2} x \eps}+ (x\eps)^2) \\ 
 &\leq & 1 + 3M^2 \eps^2  
\end{eqnarray*}
Similarly, it suffices to verify that $B^-$ covers $\bpoints[2j]$. 
\begin{eqnarray*}
\|c^- - \bpoints[2j]\|^2  & \leq &  (x\eps)^2 + (\frac{1}{\sqrt{2}} + y \eps)^2 + (\frac{1}{\sqrt{2}} -y \eps)^2 \\ 
&=& (x\eps)^2 + (\frac{1}{2} + \cancel{\sqrt{2} y \eps} + (y\eps)^2)+  (\frac{1}{2} -  \cancel{\sqrt{2} y \eps} + (y\eps)^2)  \\ 
&\leq& 1+3M^2 \eps^2 
\end{eqnarray*}
Finally, we verify that $B^-$ covers $\cpoints[2k]$. Since $2k$ is even, $\odd(2k) = 0$, evaluating the coordinate with $-\eps$.
\begin{eqnarray*}
\|c^- - \cpoints[2k]\|^2 & \leq & (\frac{1}{2} - x \eps)^2 + (\frac{1}{2} - y \eps)^2 + (-\frac{1}{\sqrt{2}} - \eps - (-\frac{\eps z}{\sqrt{2}} - \eps))^2  \\ 
&=& (\frac{1}{4}- x \eps +(x \eps)^2) + (\frac{1}{4}- y\eps + (y\eps)^2)+ (\frac{1}{2}- z \eps + \frac{1}{2}(\eps z)^2) \\ 
&\leq& 1- (x+y+z) \eps + 3M^2 \eps^2 
\end{eqnarray*}
Using $-(x+y+z) = (X[i]+ X[j]+ X[k]) \sqrt{\eps} = 0$, the squared distance to $\cpoints[2k]$ is bounded by $1+ 3M^2 \eps^2$. 

For the anchor points, the identical calculation to \Cref{clm:top-ball} gives $\|c^- - d^-_i\|^2 \leq 1+3M^2 \eps^2$ for all $i \in \{0,1, \ldots, 4\}$, concluding the proof of the claim. 
\end{proof}

\subsection{Soundness}\label{sec: 2 center soundness}

In this section, we will show that a solution of \twocenter implies a solution of Gap Convolution-3SUM.  Consider a clustering solution that uses two balls. Define $B^+$ to be the ball that covers $d^+_0$ and $B^-$ to be that covering $d^-_0$ (notice that they cannot be the same ball since $\|d^+_0 - d^-_0\| > 2.1$). 
For the same reason (i.e., $\|d^+_0 - d^-_i\| > 2.1$  and $\|d^-_0 - d^+_i\| > 2.1$ for all $i$), $B^+$ must cover all the points in $\dset^+$ while $B^-$ covers all in $\dset^-$. 

Our argument is split into three steps. In the first step, we argue that $B^+$ cannot cover any of the sets  $\apoints$, $\bpoints$ or $\cpoints$ entirely (the same goes for $B^-$). In fact, there must be indices $\hat{i}, \hat{j}, \hat{k}$, called \EMPH{switching indices}, for which $B^+$ covers exactly the tails $\apoints[> \hat{i}]$, $\bpoints[> \hat{j}]$ and $\cpoints[> \hat{k}]$, while $B^-$ covers exactly the heads $\apoints[\leq \hat{i}],\bpoints[\leq \hat{j}], \cpoints[\leq \hat{k}]$. 
In the second step, we define $i= \lfloor \hat{i}/2 \rfloor$, $j=\lfloor \hat{j}/2 \rfloor$ and $k= \lfloor \hat{k}/2 \rfloor$, and prove that (i) $\hat{i},\hat{j}$ and $\hat{k}$ are all even and (ii) $i+j+k = 0$. This is tentatively our solution to Gap Convolution-3SUM.  In the third and final step, we argue that $X[i]+X[j]+X[k]=0$, which implies that $X$ is not a No-instance, hence concluding the proof.

Define $c^+ = (x^+ \eps, y^+\eps, \frac{1}{\sqrt{2}} + \eps + z^+)$ and  $c^- = (-x^- \eps, -y^- \eps,  -\frac{1}{\sqrt{2}} - \eps - z^-)$ as the centers of $B^+$ and $B^-$ respectively. 
We can draw the following conclusions from the fact that $B^+$ and $B^-$ cover the anchor points. 

\begin{lemma}[Ranges of variables]
\label{lem: ranges}
The following inequalities hold: 
\begin{itemize}
    \item $z^+, z^- \geq  - 3 M^2 \eps^2$ 
    \item $x^+,y^+, x^-,y^- \in [-t - 3M^2 \eps, t+ 3M^2 \eps]$. 
\end{itemize}
\end{lemma}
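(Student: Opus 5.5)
The plan is to use only the anchor points, reading off one inequality from each anchor via the squared-distance constraint $\|c^\pm - d^\pm_\ell\|^2 \le 1+3M^2\eps^2$. By the symmetry of the construction under $(x,y,z)\mapsto(-x,-y,-z)$, the bottom-ball statements follow from the top-ball ones verbatim. So I will only treat $B^+$, whose center is $c^+ = (x^+\eps, y^+\eps, \frac{1}{\sqrt2}+\eps+z^+)$.

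\emph{Step 1: the bound on $z^+$.} I would use the anchor $d^+_0=(0,0,1+\frac1{\sqrt2}+\eps)$. Its squared distance to $c^+$ is
\[
(x^+\eps)^2+(y^+\eps)^2+(1-z^+)^2 \;\ge\; (1-z^+)^2 .
\]
If $z^+<0$, this gives $1-2z^+ \le (1-z^+)^2 \le 1+3M^2\eps^2$, hence $z^+\ge -\tfrac32 M^2\eps^2 \ge -3M^2\eps^2$. If $z^+\ge 0$, the bound is trivial.

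\emph{Step 2: the bounds on $x^+,y^+$.} I would use the horizontal anchors. Both $d^+_1$ and $c^+$ have $z$-coordinates of the form $\frac1{\sqrt2}+\eps+(\cdot)$, so
\[
\|c^+-d^+_1\|^2=(1-t\eps-x^+\eps)^2+(y^+\eps)^2+(z^+)^2 \;\ge\; (1-(t+x^+)\eps)^2 .
\]
Suppose $x^+ < -t-3M^2\eps$. Then $(t+x^+)\eps < -3M^2\eps^2$, so $1-(t+x^+)\eps > 1+3M^2\eps^2 \ge 1$. Squaring gives a value exceeding $1+6M^2\eps^2 > 1+3M^2\eps^2$, a contradiction. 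Hence $x^+\ge -t-3M^2\eps$.

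Symmetrically, $d^+_3$ gives $(1-(t-x^+)\eps)^2\le 1+3M^2\eps^2$, which forces $x^+\le t+3M^2\eps$. The same argument with $d^+_2,d^+_4$ handles $y^+$.

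The one point needing care is that $1-(t\pm x^+)\eps$ is positive, so the squaring step is monotone. In the contradiction case it is at least $1$, so there is no sign issue. The slack $3M^2\eps$ on $x^+$ is much larger than needed; the argument above actually yields $x^+ \ge -t - \tfrac32 M^2\eps$. I expect no real obstacle here; this is routine bookkeeping from the anchors.
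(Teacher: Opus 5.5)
Your proof is correct and is essentially the paper's argument. Each anchor $d^\pm_\ell$ bounds one coordinate of the center once the other coordinate differences are discarded. The only cosmetic difference is that the paper works with unsquared distances, weakening $\sqrt{1+3M^2\eps^2}$ to $1+3M^2\eps^2$, while you square and so obtain the sharper slacks $\tfrac{3}{2}M^2\eps^2$ and $\tfrac{3}{2}M^2\eps$. Your reduction of $B^-$ to $B^+$ is also fine: only the anchor set $\mathcal{D}$ needs to be invariant under $(x,y,z)\mapsto(-x,-y,-z)$, not the whole point set, and it is (with $d^+_0\mapsto d^-_0$, $d^+_1\mapsto d^-_3$, $d^+_2\mapsto d^-_4$, and so on).
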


Intuitively, since $M^2 \eps \rightarrow 0$, this roughly means that $z^+,z^-$ are non-negative, and that each of  $x^+,x^-,y^+, y^-$ is rigidly confined by the anchor bounds to $[-t, t]$. 

\begin{proof} We show the bounds for $x^+,y^+,z^+$ and the bounds for $x^-,y^-,z^-$ follow from the same argument.  Observe that $\|c^+- d^+_0\| \geq   (1-z^+)$ and since $\|c^+- d^+_0\| \leq 1+3M^2 \eps^2$, it must be that $z^+ \geq  - 3M^2 \eps^2$. 

To bound the range of $x^+$, observe that $1+3M^2 \eps^2 \geq \|c^+ - d^+_1\| \geq 1-t \eps - x^+ \eps$, which implies $x^+ \geq -t - 3M^2 \eps$. Moreover, $1+3M^2 \eps^2 \geq \|c^+ - d^+_3\| \geq 1-t \eps + x^+ \eps$, 
which implies that $x^+ \leq t+ 3M^2 \eps$. 

To bound the range of $y^+$, consider $1+3M^2 \eps^2 \geq \|c^+ - d^+_2\| \geq 1-t \eps - y^+ \eps$, which implies that $y^+ \geq -t - 3M^2 \eps$. Moreover, $1+3M^2 \eps^2 \geq \|c^+ - d^+_4\| \geq 1-t \eps + y^+ \eps$, implying that $y^+ \leq t+3M^2 \eps$. 
\end{proof}

\paragraph{First step: existence of switching indices.} The following two lemmas relate the balls' coverage to the coordinates of the centers. 

\begin{lemma}[Top ball coverage]
\label{lem:coverage for B+}
Let $\hat{i}, \hat{j}, \hat{k} \in I_{n}$ be such that $B^+$ covers $\apoints[\hat{i}], \bpoints[\hat{j}]$ and $\cpoints[\hat{k}]$. Then, we have that  
\begin{itemize}
    \item $x^+ \geq x^*_{\lfloor \hat{i}/2 \rfloor} + 2\cdot  \even(\hat{i})- 6M^2 \eps$
    \item $y^+ \geq y^*_{\lfloor \hat{j}/2 \rfloor}+ 2 \cdot  \even(\hat{j}) - 6M^2 \eps$
    \item $x^+ + y^+ + z^*_{\lfloor \hat{k}/2 \rfloor} \leq -2\sqrt{2} \cdot \even(\hat{k})+ 8M^2 \eps$
\end{itemize}
where $\even$ is an indicator function of whether the integer is even, respectively. 
\end{lemma}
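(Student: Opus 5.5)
The plan is to expand each covering condition $\|c^+-p\|^2\le 1+3M^2\eps^2$ exactly, using the coordinates from \Cref{obs: coordinates of points}. The terms of order $1$ will cancel, the terms linear in $\eps$ give the claimed inequalities, and everything of order $\eps^2$ (and $z^+$ times small quantities) is absorbed into the $M^2\eps$ slack after dividing by $\sqrt{2}\eps$ or $\eps$. Throughout I use \Cref{lem: ranges}: $z^+\ge -3M^2\eps^2$ and $|x^+|,|y^+|\le t+3M^2\eps\le n$. I also use $|x^*|,|y^*|,|z^*|\le 3n+M\sqrt\eps\le 4n$. Hence all products like $(x^+\eps)^2$ are at most $M^2\eps^2$ up to constants, since $M=n^2$.

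For the first item, write $\apoints[\hat i]=(1/\sqrt2,0,\eps a)$ with $a=-x^*_{\lfloor \hat i/2\rfloor}+2\odd(\hat i)-1$. Then
\[
\|c^+-\apoints[\hat i]\|^2=(1/\sqrt2-x^+\eps)^2+(y^+\eps)^2+(1/\sqrt2+\eps+z^+-\eps a)^2 .
\]
Set $w=\eps+z^+-\eps a$. The right-hand side equals $1-\sqrt2 x^+\eps+\sqrt2 w+(x^+\eps)^2+(y^+\eps)^2+w^2\ge 1+\sqrt2(w-x^+\eps)$. Comparing with $1+3M^2\eps^2$ gives $w-x^+\eps\le 3M^2\eps^2/\sqrt2$. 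Now $w\ge \eps(1-a)-3M^2\eps^2$ because $z^+\ge-3M^2\eps^2$. Dividing by $\eps$ yields $x^+\ge 1-a-O(M^2\eps)=x^*+2-2\odd(\hat i)-O(M^2\eps)=x^*+2\even(\hat i)-O(M^2\eps)$, which is within $6M^2\eps$ as required. The second item is identical with the roles of the $x$- and $y$-coordinates swapped.

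For the third item, write $\cpoints[\hat k]=(-1/2,-1/2,\eps c)$ with $c=-z^*/\sqrt2+2\odd(\hat k)-1$. Then
\[
\|c^+-\cpoints[\hat k]\|^2=(1/2+x^+\eps)^2+(1/2+y^+\eps)^2+(1/\sqrt2+w')^2,
\]
where $w'=\eps+z^+-\eps c$. Dropping the nonnegative square terms, this is at least $1+(x^++y^+)\eps+\sqrt2 w'$. The cover condition then gives $(x^++y^+)\eps+\sqrt2(\eps(1-c)+z^+)\le 3M^2\eps^2$. Using $z^+\ge -3M^2\eps^2$ and dividing by $\eps$, I get $x^++y^++\sqrt2(2-2\odd(\hat k))+z^*\le O(M^2\eps)$, i.e. $x^++y^++z^*\le -2\sqrt2\,\even(\hat k)+8M^2\eps$.

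The main obstacle is the direction of the $z^+$ term. In items one and two $z^+$ enters with a positive sign on the side that must stay small, and in item three as well, so the lower bound $z^+\ge-3M^2\eps^2$ from \Cref{lem: ranges} suffices and no upper bound on $z^+$ is needed. I must also check that discarding the squares $w^2$ and $w'^2$ is legitimate, which it is since they are nonnegative. What remains is bookkeeping of constants so that the error terms stay below $6M^2\eps$ and $8M^2\eps$. The possible difficulty is a $(1/\sqrt2)$ factor when converting $3M^2\eps^2$ after dividing by $\sqrt2\eps$, and that conversion is comfortably within the stated slack.
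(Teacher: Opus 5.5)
Your proposal is correct and follows essentially the same route as the paper. Like the paper, you expand the covering condition $\|c^+-p\|^2\le 1+3M^2\eps^2$ using the coordinates from the observation, discard the nonnegative square terms, use only the lower bound $z^+\ge -3M^2\eps^2$, and divide by $\sqrt2\,\eps$ (resp.\ $\eps$); this gives error constants $3+3/\sqrt2<6$ and $3+3\sqrt2<8$. The bounds on $|x^+|,|y^+|,|x^*|$ you list are never actually used, since the quadratic terms are dropped rather than estimated.
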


\begin{proof}
From Observation~\ref{obs: coordinates of points}, we rewrite the $z$-coordinate of $\apoints[\hat{i}]$ as $\apoints[\hat{i}]_z = \eps(-x^*_{\lfloor \hat{i}/2\rfloor} - (-1)^{\hat{i}})$.    
Therefore, 
\begin{align*}
    1+ 3M^2 \eps^2 &\geq \|c^+ - \apoints[\hat{i}]\|^2 \\
    &\geq (\frac{1}{\sqrt{2}} - x^+\eps)^2 + (y^+ \eps)^2 + (\frac{1}{\sqrt{2}} + \eps+ z^+ + \eps x^*_{\floor{\hat{i}/2}} + \eps \cdot (-1)^{\hat{i}})^2   \\ 
&\geq  1/2- \sqrt{2} x^+ \eps + 1/2 + \sqrt{2} \eps x^*_{\floor{\hat{i}/2}} + \sqrt{2} \eps (1+(-1)^{\hat{i}}) -  5M^2 \eps^2 \quad \text{(since $z^+\geq -3M^2\eps^2$)}\\
&= 1 - \sqrt{2} x^+ \eps  + \sqrt{2} \eps x^*_{\floor{\hat{i}/2}} + 2\sqrt{2} \eps \even(\hat{i}) -  5M^2 \eps^2
\end{align*}
Rearranging the resulting inequality implies that $x^+ \geq x^*_{\floor{\hat{i}/2}} + 2 \cdot \even(\hat{i}) - 6M^2 \eps$. 

Similarly, we rewrite the $z$-coordinate of $\bpoints[\hat{j}]$ as $\bpoints[\hat{j}]_z = \eps (-y^*_{\lfloor \hat{j}/2\rfloor} - (-1)^{\hat{j}})$, and the same calculation gives $y^+ \geq y^*_{\floor{\hat{j}/2}} + 2 \cdot \even(\hat{j}) - 6M^2 \eps$. 

Finally, using Observation~\ref{obs: coordinates of points}, we have $\cpoints[\hat{k}]_z = -\frac{\eps}{\sqrt{2}} z^*_{\floor{\hat{k}/2}} - \eps \cdot (-1)^{\hat{k}}$. We get: 
\begin{align*}
    1+3M^2 \eps^2 &\geq \|c^+ - \cpoints[\hat{k}]\|^2  \\
    &\geq (x^+\eps + 1/2)^2 + (y^+ \eps + 1/2)^2 + (1/\sqrt{2} + \eps+z^+ + \eps z^*_{\floor{\hat{k}/2}}/\sqrt{2} + \eps \cdot (-1)^{\hat{k}})^2  \\ 
&\geq  (x^+\eps + 1/4)  + (y^+ \eps + 1/4) + (1/2 + \eps z^*_{\floor{\hat{k}/2}} + \sqrt{2} \eps \cdot (1+(-1)^{\hat{k}}) ) - 5M^2 \eps^2 \\
&= 1 + \eps (x^+ + y^+ + z^*_{\floor{\hat{k}/2}} + 2\sqrt{2} \cdot \even(\hat{k})) - 5M^2 \eps^2
\end{align*}
In the penultimate inequality, we used $z^+ \geq -3 M^2 \eps^2$. Rearranging the inequality above gives $x^+ + y^+ + z^*_{\floor{\hat{k}/2}} \leq -2\sqrt{2} \cdot \even(\hat{k}) + 8M^2 \eps$ as desired. 
\end{proof}

The identical algebraic approach gives the matching bounds for the bottom ball:

\begin{lemma}[Bottom ball coverage]
\label{lem:coverage for B-}
Let $\hat{i}, \hat{j}, \hat{k} \in I_{n}$ be such that $B^-$ covers $\apoints[\hat{i}], \bpoints[\hat{j}]$ and $\cpoints[\hat{k}]$. Then, we have that  
\begin{itemize}
    \item $x^- \leq x^*_{\lfloor \hat{i}/2 \rfloor} - 2 \cdot \odd(\hat{i}) + 6M^2 \eps$
    \item $y^- \leq y^*_{\lfloor \hat{j}/2 \rfloor} - 2\cdot \odd(\hat{j}) + 6M^2 \eps$
    \item $x^- + y^- + z^*_{\lfloor \hat{k}/2 \rfloor} \geq 2\sqrt{2} \cdot \odd(\hat{k}) - 8M^2 \eps$
\end{itemize}
\end{lemma}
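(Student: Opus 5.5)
We mirror the proof of \Cref{lem:coverage for B+} with the sign conventions of $c^-$. Write $c^- = (-x^-\eps, -y^-\eps, -\frac{1}{\sqrt{2}}-\eps-z^-)$. By \Cref{lem: ranges}, $z^- \geq -3M^2\eps^2$, and $|x^-|,|y^-| \leq t+3M^2\eps$, so all $\eps^2$-order terms are bounded by a few $M^2\eps^2$. For each of the three point families we expand the squared distance, keep the terms linear in $\eps$, and drop all nonnegative quadratic terms except those involving $z^-$. These $z^-$ terms are handled by monotonicity: the vertical offset is only increased by $z^-\ge -3M^2\eps^2$.

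For $\apoints[\hat i]$, \Cref{obs: coordinates of points} gives $z$-coordinate $\eps(-x^*_{\lfloor\hat i/2\rfloor}-(-1)^{\hat i})$. The $x$-offset is $\frac{1}{\sqrt2}+x^-\eps$. The $z$-offset is
\[
\frac{1}{\sqrt2}+\eps+z^- - \eps x^*_{\lfloor \hat i/2\rfloor} - \eps(-1)^{\hat i},
\]
whose magnitude is at least $\frac1{\sqrt2}+\eps(1-(-1)^{\hat i}) - \eps x^* - 3M^2\eps^2$. Squaring, and noting that $1-(-1)^{\hat i} = 2\odd(\hat i)$, we get
\[
1+3M^2\eps^2 \ge 1 + \sqrt2\,x^-\eps - \sqrt2\,\eps x^*_{\lfloor\hat i/2\rfloor} + 2\sqrt2\,\eps\,\odd(\hat i) - 5M^2\eps^2 .
\]
Dividing by $\sqrt2\eps$ gives $x^- \le x^* - 2\odd(\hat i) + 6M^2\eps$. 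The argument for $\bpoints$ is identical with $y$ in place of $x$.

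For $\cpoints[\hat k]$, the $x$- and $y$-offsets are $\frac12 - x^-\eps$ and $\frac12 - y^-\eps$. The $z$-offset is
\[
\frac1{\sqrt2}+\eps+z^- - \frac{\eps}{\sqrt2}z^*_{\lfloor \hat k/2\rfloor} - \eps(-1)^{\hat k}.
\]
Expanding, the linear terms are $-x^-\eps - y^-\eps - z^*\eps + \sqrt2\,\eps(1-(-1)^{\hat k})$. Hence
\[
1+3M^2\eps^2 \ge 1+\eps\bigl(-x^- - y^- - z^* + 2\sqrt2\,\odd(\hat k)\bigr) - 5M^2\eps^2,
\]
which rearranges to the third bullet with slack $8M^2\eps$.

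The only real care point is the error bookkeeping. We must check that the quadratic terms (such as $(x^-\eps)^2$, $(x^*\eps)^2$, the cross terms, and $z^-$ times $O(1)$) are all absorbed in $5M^2\eps^2$ using $|x^*|,|x^-|\le 3n+M\sqrt\eps \ll M$. We must also check that the sign of $z^-$ is never used in the wrong direction; this is the same situation as in \Cref{lem:coverage for B+}.
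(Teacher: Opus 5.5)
Your proof is correct and follows the paper's approach: the paper only says the bottom-ball bounds follow by the "identical algebraic approach" to Lemma~\ref{lem:coverage for B+}, and your computation carries that out with the signs of $c^-$, lower-bounding each squared distance by its linear part and using only $z^- \geq -3M^2\eps^2$ from Lemma~\ref{lem: ranges} (with $8/\sqrt{2} < 6$ giving the stated slack). The error bookkeeping you flag at the end is simpler than you suggest: every quadratic term, including $u^2$ in $(\frac{1}{\sqrt{2}}+u)^2 \geq \frac12 + \sqrt{2}\,u$, is a nonnegative square that is simply dropped, so the only negative error is $\sqrt{2}\,z^- \geq -3\sqrt{2}M^2\eps^2$ and no bound on $|x^-|$ or $|x^*|$ is needed.
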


Now we are ready to show the existence of switching indices. 

\begin{lemma}
\label{lem: switching}
There exist $\hat{i}, \hat{j}, \hat{k} \in I_{n}$ such that $B^+$ covers exactly the tails $\apoints[> \hat{i}]$, $\bpoints[> \hat{j}]$, $\cpoints[> \hat{k}]$, while $B^-$ covers exactly the heads $\apoints[\leq \hat{i}],\bpoints[\leq \hat{j}], \cpoints[\leq \hat{k}]$.
\end{lemma}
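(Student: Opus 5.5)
The plan is to read off, from the anchor-point constraints (Lemma~\ref{lem: ranges}) and the coverage lemmas (Lemmas~\ref{lem:coverage for B+} and~\ref{lem:coverage for B-}), that on each of the three lines $B^+$ picks up an upper set of indices and $B^-$ a lower set. I will then show these two sets cannot overlap, by combining the bounds on $x^\pm$ and $y^\pm$ from $\apoints,\bpoints$ with the bound on $x^\pm+y^\pm$ from $\cpoints$. Throughout, write $\delta := 8M^2\eps$, so every error term in those lemmas is at most $\delta$, and $\delta\ll 1$.

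\textbf{Step 1 (tails and heads).} Every point of $\apoints\cup\bpoints\cup\cpoints$ has $|z|\le 4n\eps$. The $z$-coordinate of $c^+$ is $\frac1{\sqrt2}+\eps+z^+ \ge \frac1{\sqrt2}-3M^2\eps^2$ by Lemma~\ref{lem: ranges}. Along a vertical line, the squared distance to $c^+$ is therefore a constant plus $(z-c^+_z)^2$, which strictly decreases as the point's $z$ increases. The points of each set are ordered increasingly in $z$, so $B^+\cap\apoints$ is an upward-closed set of indices, and the same holds for $\bpoints,\cpoints$. Symmetrically, $c^-_z\le -\frac1{\sqrt2}+3M^2\eps^2$, so $B^-\cap\apoints$ (and likewise for $\bpoints,\cpoints$) is downward-closed.

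\textbf{Step 2 (nonemptiness).} Both intersections must be nonempty on each line. If $B^+$ covered $\apoints[-2n]$, Lemma~\ref{lem:coverage for B+} would give $x^+\ge x^*_{-n}+2-\delta\ge 3n-M\sqrt\eps$. This exceeds $t+3M^2\eps$, contradicting Lemma~\ref{lem: ranges}. If $B^-$ covered $\apoints[2n+1]$, Lemma~\ref{lem:coverage for B-} would give $x^-\le x^*_n-2+\delta<-t-3M^2\eps$, again a contradiction. The same argument applies to $\bpoints$.

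For $\cpoints$: $B^+$ covering $\cpoints[-2n]$ forces $x^++y^+\le -z^*_{-n}-2\sqrt2+\delta\approx -3n$, which contradicts $x^+,y^+\ge -t-3M^2\eps$. $B^-$ covering $\cpoints[2n+1]$ forces $x^-+y^-\ge 3n$, which contradicts the upper ranges. Since all points are covered, on each line we may write $m$ for the least index covered by $B^+$ and $M$ for the largest index covered by $B^-$. Then $M\ge m-1$, and the claim is exactly that $M=m-1$ on all three lines; we set $\hat i=m_A-1$, $\hat j = m_B-1$, $\hat k = m_C-1$.

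\textbf{Step 3 (no overlap; the main step).} Apply Lemma~\ref{lem:coverage for B+} to $\apoints[m_A]$ and Lemma~\ref{lem:coverage for B-} to $\apoints[M_A]$ and subtract:
\[x^+-x^-\ \ge\ x^*_{\lfloor m_A/2\rfloor}-x^*_{\lfloor M_A/2\rfloor}+2\,\even(m_A)+2\,\odd(M_A)-2\delta .\]
Because $x^*$ drops by $3\pm 2M\sqrt\eps$ per index, a short case check on parities gives two regimes.
\begin{itemize}
\item If $M_A=m_A-1$, the right-hand side is at least $-3\delta$. The worst case is $m_A$ odd, where the floors coincide and both indicators vanish.
\item If $M_A\ge m_A$ (overlap), the right-hand side is at least $2-3\delta$.
\end{itemize}
The same holds for $y^+-y^-$.

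For $\cpoints$, subtract the two third bullets of the lemmas:
\[(x^++y^+)-(x^-+y^-)\ \le\ z^*_{\lfloor M_C/2\rfloor}-z^*_{\lfloor m_C/2\rfloor}-2\sqrt2\,(\even(m_C)+\odd(M_C))+2\delta .\]
The parity case check gives:
\begin{itemize}
\item If $M_C=m_C-1$, this is at most $3\delta$. For $m_C$ odd the floors agree; for $m_C$ even, $M_C$ is odd and the bound is $3-4\sqrt2+\dots<0$.
\item If $M_C\ge m_C$, this is at most $2-2\sqrt2+3\delta<0$. This is the delicate case, where the $\sqrt2$ spacing of $\cpoints$ matters: the $3$-step of $z^*$ must be beaten by the two indicator terms.
\end{itemize}

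Combining, $-6\delta$ plus $2$ for each of $\apoints,\bpoints$ that overlaps is at most $(x^+-x^-)+(y^+-y^-)$. This in turn is at most $3\delta$, or at most $2-2\sqrt2+3\delta$ if $\cpoints$ overlaps. Any overlap therefore yields a contradiction, since $\delta\to0$. Hence $M=m-1$ on all three lines, and $B^+$ covers exactly the tails while $B^-$ covers exactly the heads.

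I expect the only real work to be the parity bookkeeping in Step 3, namely verifying the four parity combinations for $\cpoints$ with the $\sqrt2$ factors and checking that the constants $2$ and $2\sqrt2$ strictly dominate the $3$-steps and error terms.
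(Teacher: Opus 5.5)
Your proposal is correct. Steps~1 and~2 follow the paper's argument. The anchor ranges of Lemma~\ref{lem: ranges}, fed into Lemmas~\ref{lem:coverage for B+} and~\ref{lem:coverage for B-}, show that on each line $B^+$ misses index $-2n$ and $B^-$ misses index $2n+1$. Monotonicity of distance along the vertical lines then gives the split. Your Step~1 usefully makes explicit, via $z^\pm\ge -3M^2\eps^2$, why monotonicity holds for an arbitrary feasible center; the paper only invokes it.

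The real difference is Step~3. The paper defines $\hat i$ as the least index with $\apoints[\hat i+1]\in B^+$. It concludes only that $B^+\cap\apoints=\apoints[>\hat i]$ and $\apoints[\le\hat i]\subseteq B^-$. It never rules out $B^-$ also reaching into the tail, so ``exactly the heads'' is not actually established there. This is harmless for the paper, because its second step only uses $\apoints[\hat i+1]\in B^+$ and $\apoints[\hat i]\in B^-$ (and the analogues for $\bpoints,\cpoints$).

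Your Step~3 proves the full statement with the same device the paper uses later in its evenness claim. You subtract the bottom-ball coverage inequalities from the top-ball ones and play the $\apoints,\bpoints$ bounds on $x^\pm,y^\pm$ against the $\cpoints$ bound on $x^\pm+y^\pm$. The only change is that you apply it to the extreme indices $(m,M)$ rather than to $(\hat i+1,\hat i)$. The price is the parity case analysis, which I checked and which goes through.

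Two cosmetic remarks. First, $M$ already denotes $n^2$, so rename your largest $B^-$-index. Second, in the $\cpoints$-overlap case the $z^*$-difference is never positive. Either the floors coincide, in which case $\even(m_C)+\odd(M_C)\ge 1$, or $z^*$ drops by at least $3-2n^2\sqrt{\eps}$. So the sharp bound there is $-2\sqrt2+2\delta$. The place where the indicator terms must beat a $+3$ step is the non-overlap case with $m_C$ even, not the overlap case. Your looser bound $2-2\sqrt2+3\delta$ is still valid, so nothing breaks.
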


\begin{proof}
To see that $B^+$ cannot cover $\apoints[-2n]$, assume for contradiction that it does. From Lemma~\ref{lem:coverage for B+}, we evaluate the lower bound with $\hat{i} = -2n$: 
$x^+ \ge x^*_{-n} + 2 - 6M^2\eps = -3(-n) - X[-n]\sqrt{\eps} + 2 - 6M^2\eps \gtrsim 3n + 2$ (where we write $a \gtrsim b$ for, say, $a \ge b - \eps^{0.2}$). 
However, Lemma~\ref{lem: ranges} implies that $x^+ \leq t +3M^2 \eps \lesssim 3\lfloor n/100 \rfloor + 1/4$. The condition $3n + 2 \lesssim 3\lfloor n/100 \rfloor + 1/4$ is a contradiction for $n \ge 1$. 
Similarly, if $B^-$ covers $\apoints[2n+1]$, Lemma~\ref{lem:coverage for B-} forces $x^- \le x^*_{n} \lesssim -3n$, which contradicts the lower anchor bound $x^- \ge -t-3M^2 \eps \gtrsim -3\lfloor n/100 \rfloor - 1/4$.  
Define $\hat{i}$ as the minimum integer $i'$ such that $B^+$ covers $\apoints[i' +1]$, so (by monotonicity of distance) the ball $B^+$ covers only $\apoints[> \hat{i}]$, while $B^-$ covers  $\apoints[\leq \hat{i}]$. Note that $-2n \leq \hat{i} < 2n+1$.

By identical bounding, if $B^+$ covers $\bpoints[-2n]$, $y^+ \gtrsim 3n + 2$, contradicting $y^+ \lesssim 3\lfloor n/100 \rfloor + 1/4$. If $B^-$ covers $\bpoints[2n+1]$, $y^- \lesssim -3n$, contradicting $y^- \gtrsim -3\lfloor n/100 \rfloor - 1/4$. We define $\hat{j}$ similarly.

Similarly, if $B^+$ covers $\cpoints[-2n]$, Lemma~\ref{lem:coverage for B+} enforces $x^+ + y^+ + z^*_{-n} \lesssim 0 \implies x^+ + y^+ + 3n \lesssim 0 \implies x^+ + y^+ \lesssim -3n$. However, the anchor points rigidly enforce $x^+, y^+ \gtrsim -3\lfloor n/100 \rfloor - 1/4 \implies x^+ + y^+ \gtrsim -6\lfloor n/100 \rfloor - 1/2$. The requirement $-6\lfloor n/100 \rfloor - 1/2 \lesssim -3n$ is a contradiction for $n \ge 1$. 
Symmetrically, if $B^-$ covers $\cpoints[2n+1]$, it satisfies $x^- + y^- + z^*_{n} \gtrsim 0 \implies x^- + y^- - 3n \gtrsim 0 \implies x^- + y^- \gtrsim 3n$, violating the upper anchors $x^- + y^- \lesssim 6\lfloor n/100 \rfloor + 1/2$. Thus there exists a switching index $\hat{k}$ where $B^-$ covers $\cpoints[\leq \hat{k}]$ while $B^+$ covers  $\cpoints[> \hat{k}]$. 
\end{proof}

\paragraph{Second step: showing $i+j+k=0$.} Let $\hat{i},\hat{j}$ and $\hat{k}$ be the indices in \Cref{lem: switching} and define $i = \floor{\hat{i}/2}$, $j = \floor{\hat{j}/2}$, and $k = \floor{\hat{k}/2}$. We claim that (a) the indices $\hat{i}$, $\hat{j}$, $\hat{k}$ are even and (b) $i+j+k = 0$.  

By applying Lemma~\ref{lem:coverage for B+} since $B^+$ covers $\apoints[\hat{i}+1], \bpoints[\hat{j}+1], \cpoints[\hat{k}+1]$, we learn that 
\begin{eqnarray*}
x^+ &\geq& x^*_{\lfloor (\hat{i}+1)/2 \rfloor} + 2\cdot  \even(\hat{i}+1)- 6M^2 \eps  \\ 
y^+ &\geq& y^*_{\lfloor (\hat{j}+1)/2 \rfloor}+ 2 \cdot  \even(\hat{j}+1) -6M^2 \eps \\ 
x^+ + y^+ + z^*_{\lfloor (\hat{k}+1)/2 \rfloor} &\leq& -2\sqrt{2} \cdot \even(\hat{k}+1)+ 8M^2 \eps
\end{eqnarray*}
Subtracting the sum of the first two inequalities from the third gives us 
\begin{equation}
\label{eq: upper-1}
 0 \geq x^*_{\lfloor (\hat{i}+1)/2 \rfloor} + y^*_{\lfloor (\hat{j}+1)/2 \rfloor} + z^*_{\lfloor (\hat{k}+1)/2 \rfloor}  + 2\even(\hat{i}+1) + 2\even(\hat{j}+1) + 2\sqrt{2}\even(\hat{k}+1) - 20M^2 \eps
\end{equation}

Similarly, by applying Lemma~\ref{lem:coverage for B-} since $B^-$ covers $\apoints[\hat{i}], \bpoints[\hat{j}], \cpoints[\hat{k}]$, we have 
\begin{eqnarray*}
x^- &\leq& x^*_{\lfloor \hat{i}/2 \rfloor} - 2 \cdot \odd(\hat{i}) + 6M^2 \eps \\ 
y^- &\leq& y^*_{\lfloor \hat{j}/2 \rfloor} - 2\cdot \odd(\hat{j}) + 6M^2 \eps \\ 
x^- + y^- + z^*_{\floor{\hat{k}/2}} &\geq & 2\sqrt{2} \cdot \odd(\hat{k}) - 8M^2 \eps
\end{eqnarray*}
Subtracting the sum of the first two inequalities from the third gives us 
\begin{equation}
\label{eq: lower-1}
x^*_{\lfloor \hat{i}/2 \rfloor} + y^*_{\lfloor \hat{j}/2 \rfloor} + z^*_{\lfloor \hat{k}/2 \rfloor} \geq  2\odd(\hat{i}) + 2\odd(\hat{j}) + 2\sqrt{2}\odd(\hat{k}) - 20M^2 \eps  
\end{equation}

\begin{claim} The indices 
$\hat{i}, \hat{j}$ and $\hat{k}$ are even. 
\end{claim}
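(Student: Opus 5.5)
The plan is to combine the two inequalities \eqref{eq: upper-1} and \eqref{eq: lower-1} directly: they bound the same quantity from above and below. The work is in rewriting the shifted indices $\lfloor (\hat{i}+1)/2\rfloor$ etc.\ in terms of $i=\lfloor \hat{i}/2\rfloor$, $j$, $k$. Write $o_i=\odd(\hat{i})$, $o_j=\odd(\hat{j})$, $o_k=\odd(\hat{k})$. Then
\[
\lfloor (\hat{i}+1)/2\rfloor = i + o_i, \qquad \even(\hat{i}+1)=o_i,
\]
and similarly for $\hat{j},\hat{k}$. Set $S := x^*_i+y^*_j+z^*_k$.

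First I rewrite \eqref{eq: upper-1}. By definition $x^*_{m+1} = x^*_m - 3 - (X[m+1]-X[m])\sqrt{\eps}$. Hence
\[
x^*_{i+o_i} = x^*_i - 3o_i - \sigma_i\sqrt{\eps}, \qquad |\sigma_i|\le 2M,
\]
and the same holds for $y^*$ and $z^*$. Substituting into \eqref{eq: upper-1} gives
\[
S - 3(o_i+o_j+o_k) + 2o_i + 2o_j + 2\sqrt{2}\,o_k \le 6M\sqrt{\eps} + 20M^2\eps,
\]
that is,
\[
S \le o_i + o_j + (3-2\sqrt{2})\,o_k + \eta, \qquad \eta := 6M\sqrt{\eps}+20M^2\eps .
\]
Inequality \eqref{eq: lower-1} already reads
\[
S \ge 2o_i + 2o_j + 2\sqrt{2}\,o_k - 20M^2\eps .
\]

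Chaining the two bounds on $S$ yields
\[
o_i + o_j + (4\sqrt{2}-3)\,o_k \le \eta + 20M^2\eps .
\]
The right-hand side is far below any fixed positive constant, since $\eps=(10n)^{-100}$ and $M=n^2$ give $M\sqrt{\eps}\ll 1$. The left-hand side is a nonnegative combination of indicators with coefficients $1$, $1$ and $4\sqrt{2}-3\approx 2.66>0$. So it is either $0$ or at least $1$, and it must be $0$. Hence $o_i=o_j=o_k=0$, i.e.\ $\hat{i},\hat{j},\hat{k}$ are all even.

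The only delicate step is the index-shift bookkeeping. I must confirm that passing from $\lfloor(\hat{i}+1)/2\rfloor$ to $\lfloor\hat{i}/2\rfloor$ costs exactly $3$ per odd index, up to an $O(M\sqrt{\eps})$ error, and that the $\even(\cdot+1)=\odd(\cdot)$ identifications line up. The constants $2,2,2\sqrt{2}$ from the parity offsets in Lemmas~\ref{lem:coverage for B+} and~\ref{lem:coverage for B-} then leave the strictly positive slack $1,1,4\sqrt{2}-3$. I will also note that the perturbation terms $X[\cdot]\sqrt{\eps}$ need not cancel exactly here, because they are absorbed into the error $\eta$. These same two inequalities, now with all parities zero, are what I would reuse in the next step to conclude $i+j+k=0$.
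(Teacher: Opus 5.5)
Your proposal is correct and follows essentially the same route as the paper. Both combine \Cref{eq: upper-1} and \Cref{eq: lower-1}, treat the shift from $\lfloor(\hat{i}+1)/2\rfloor$ to $\lfloor\hat{i}/2\rfloor$ as a step of $3\pm 2M\sqrt{\eps}$ per odd index, and reach the same inequality $\odd(\hat{i})+\odd(\hat{j})+(4\sqrt{2}-3)\odd(\hat{k})\le 6M\sqrt{\eps}+40M^2\eps$; the only difference is cosmetic, since you chain the two bounds through $S=x^*_i+y^*_j+z^*_k$ instead of subtracting one inequality from the other.
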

\begin{proof}
By subtracting \Cref{eq: upper-1} from \Cref{eq: lower-1}, we get: 
\begin{equation}
\label{eq:even-odd}
    \begin{split}
        &(x^*_{\floor{\hat{i}/2}} - x^*_{\floor{(\hat{i}+1)/2}} ) + (y^*_{\floor{\hat{j}/2}} - y^*_{\floor{(\hat{j}+1)/2}} ) + ( z^*_{\floor{\hat{k}/2}}- z^*_{\floor{(\hat{k}+1)/2}})\\& \qquad \geq 2(\odd(\hat{i})+\even(\hat{i}+1)) + 2(\odd(\hat{j})+\even(\hat{j}+1)) + 2\sqrt{2}(\odd(\hat{k})+\even(\hat{k}+1))-  40M^2\eps 
    \end{split}
\end{equation}
Using the fact that $\even(p+1) = \odd(p)$, we can rewrite the right-hand side as $4\odd(\hat{i}) + 4\odd(\hat{j}) + 4\sqrt{2}\odd(\hat{k}) - 40M^2\eps$.

Notice that for each sequence $q^* \in \{x^*, y^*, z^*\}$, if $p$ is even then $\floor{p/2} = \floor{(p+1)/2}$ and thus $q^*_{\floor{p/2}} - q^*_{\floor{(p+1)/2}} = 0$, while if $p$ is odd, the step yields $q^*_{\floor{p/2}} - q^*_{\floor{(p+1)/2}} \in 3 \pm 2M\sqrt{\eps}$. Thus we have:
\begin{equation}
    3\odd(\hat{i}) + 3\odd(\hat{j}) + 3\odd(\hat{k}) + 6M\sqrt{\eps} \geq 4\odd(\hat{i}) + 4\odd(\hat{j}) + 4\sqrt{2}\odd(\hat{k}) - 40M^2\eps
\end{equation}

Rearranging isolates the odd multipliers against the error terms:
\begin{equation}
    \odd(\hat{i}) + \odd(\hat{j}) + (4\sqrt{2}-3)\odd(\hat{k}) \leq 6M\sqrt{\eps} + 40M^2\eps
\end{equation}

Because the coefficients for the binary $\odd$ indicators are all $\geq 1$, and $M = n^2$ and $\eps = (10n)^{-100}$, the right side is strictly less than $1$. The inequality mathematically collapses unless $\odd(\hat{i})=\odd(\hat{j})=\odd(\hat{k})=0$, proving all indices are even. 
\end{proof}

\paragraph{Third step: showing $X[i]+X[j]+X[k]=0$.} Using the fact that $\hat{i}, \hat{j}$ and $\hat{k}$ are even, both penalties drop to zero. Substituting into the upper bound (\Cref{eq: upper-1}): 
\[ 0 \geq x^*_i + y^*_j + z^*_k - 20M^2 \eps \]
Using Observation~\ref{obs: relating x,y,z}, we get $0 \geq -3(i+j+k) - (X[i]+X[j]+X[k]) \sqrt{\eps} - 20M^2 \eps$. Because the minimum step for the outer gap is $\pm 3$, while the inner error term is strictly bounded by $3M\sqrt{\eps} \ll 1$, we must have $i+j+k \geq 0$.

Similarly, substituting the even indices into the lower bound (\Cref{eq: lower-1}): 
\[ x^*_i + y^*_j + z^*_k \geq -20\cdot M^2 \eps \]
This yields $-3(i+j+k) - (X[i]+X[j]+X[k]) \sqrt{\eps} \geq -20M^2 \eps$, forcing $i+j+k \leq 0$. Therefore $i+j+k=0$.

With the outer coordinates neutralizing to zero, the remaining inequalities strictly bound the array entry sum:
\[X[i]+X[j]+X[k] \in [-20M^2\sqrt{\eps}, 20M^2\sqrt{\eps}] \]
Since $20M^2\sqrt{\eps} < 0.5$ and the array contains integers, $X[i]+X[j]+X[k] = 0$. 

Hence, we have successfully found indices $i,j,k$ such that $i+j+k=0$ and $X[i]+X[j]+X[k]=0$. 
\section{\texorpdfstring{$6$}{6}-Center in the Plane}\label{sec:6CenterPlane}

In this section, we prove \Cref{thm:6-center} by showing a reduction from the Gap Convolution-3SUM problem. From the input array, we will construct in $O(n)$ time a set of $O(n)$ points $P$ on the plane such that:
\begin{quote}
     $P$ can be covered by exactly 6 disks of radius $\radius$ if the instance of the  Gap Convolution-3SUM problem  is YES. Conversely,  if the instance of the  Gap Convolution-3SUM problem is NO, then $P$ cannot be covered by exactly 6 disks of radius $\radius$.    
\end{quote}

Therefore, a truly subquadratic-time algorithm for the \sixcenter problem would yield a truly subquadratic-time algorithm for the Gap Convolution-3SUM problem, contradicting the 3SUM hypothesis.  The idea is to adapt the approach for \tencenter in \Cref{sec:10center} by removing some of the hexagonal vertices (recall that these roughly correspond to locations of the centers of circles defining the clustering). Each of the hexagons used in the construction of the point set for \tencenter is essentially replaced with a quadrilateral, eliminating the need for $4$ locations for centers, at the cost of an added layer of technical considerations. The ensuing necessary adjustment of the placement of the points along the two quadrilaterals is precisely what this section addresses.    

In this section, we will first provide the details of the reduction from the Gap Convolution-3SUM problem to the \sixcenter problem, and then analyze the completeness and soundness of the reduction.

\subsection{The Reduction}\label{subsec:6cen-reduction}

\begin{figure}[htbp]
    \centering
    \includegraphics[page=4, width=0.7\linewidth]{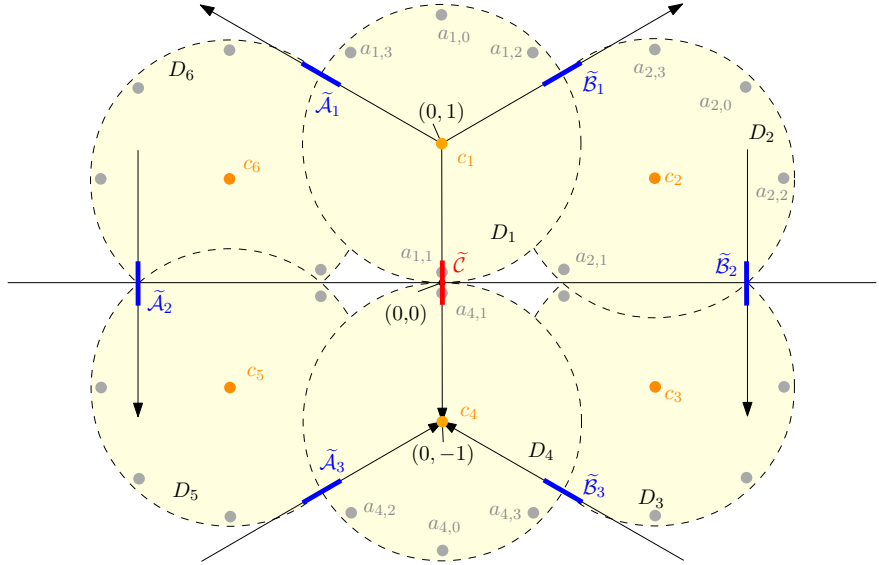}

    \caption{Placement of points in \Cref{eq:6cen-points} and anchor points on the plane. The dashed circles of radius $1-\eps + \eps^{1.7}$ are centered at $c_1,\ldots, c_6$. The arrows indicate the direction of the point sets that we will place. }
    \label{fig:6cen-anchorpoints}
\end{figure}

For a given array $X=X[-n\ldots n]$ of $(2n+1)$ integers, a central point $s\in \real^d$, a vector $\Vec{w} \in \real^d$, we define a set of $2(2n+1)$  points in $\real^d$, denoted by $ \pset_n(A,s,\Vec{w})[(-2n)\ldots (2n+1)]$ where for every $i\in I_n:=\{-2n, \dots, 2n+1\}$:

\begin{equation}\label{eq:Pn-def-simpl}
    \pset_n(X,s,\Vec{w})[i] \coloneq s + \left( 3 \lfloor i/2 \rfloor \cdot \eps + X[\lfloor i/2 \rfloor] \cdot \eps^{1.5} - (-1)^i \eps \right) \cdot \Vec{w}. 
\end{equation}

Note that $\pset_n(X,s,\Vec{w})$ above is a simplified version of $\pset_n(X,s,\Vec{w},\alpha)$ in \Cref{eq:Pn-def} with the offset constant $\alpha = 1$. 

For every point $a \in  \real^2$, we define \EMPH{$\bar{a}$} to be its mirror over the $y$-axis. That is, if $a = (a_x,a_y)$, then $\bar{a} = (-a_x,a_y)$.

\paragraph{The point sets.} First, we define the following points (see \Cref{fig:6cen-anchorpoints}):
\begin{equation}\label{eq:6cen-points}
    \begin{split}
        o = (0,0), \quad p = (\sqrt{3}/2,3/2), &\quad c = (0,1), \quad d = (\sqrt{3},1), \\
       q = (\frac{\sqrt{3}}{2} + \frac{\sqrt{7}}{2}, 0), & \quad \hat{q} = q - ((\frac{11}{\sqrt{7}}-\sqrt{3})\eps,0)
    \end{split}
\end{equation}
Note that $\hat{q}$ is within $O(\eps)$ distance from $q$, and $\|p,q\| = 2$. We will use the following parameter:
\begin{equation}\label{eq:param-6cen}
 \gamma = \frac{\sqrt{21}}{6} - \frac{1}{2}
\end{equation}
Next, we construct the following 7 sets of points from $X$ (see \Cref{fig:cluster-gadget}): 
\begin{equation}\label{eq:6cen-pointsets}
    \begin{split}
        \cpoints &\coloneq \pset_n(X, o, \vec{co})\\
        \apoints_1 &\coloneq \pset_n(X, \bar{p}, \vec{c\bar{p}})\\
        \apoints_3 &\coloneq \pset_n(X, -p, \vec{cp})\\
        \bpoints_1 &\coloneq \pset_n(X, p, \vec{cp})\\
        \bpoints_3 &\coloneq \pset_n(X, -\bar{p}, \vec{c\bar{p}})\\
        \apoints_2 &\coloneq \pset_n(X, -\hat{q}, \gamma \cdot \vec{co})\\
        \bpoints_2 &\coloneq \pset_n(X, \hat{q}, \gamma \cdot \vec{co})
    \end{split}
\end{equation}

\paragraph{Anchor points.}  Let $c_1 = c$, $c_2 = \frac{p+q}{2}$, $c_3 = -\frac{\bar{p}+\bar{q}}{2}$, $c_4 = -c$, $c_5 = -c_2$ and $c_6 = -c_3$ roughly be the expected disk centers.
We now define a set $A$ of anchor points, consisting of four points per disk placed at a distance $\Delta \coloneq 1 - (\frac{n}{10} + 1)\eps$ from each expected center as follows (see \Cref{fig:6cen-anchorpoints}):
\begin{itemize}
    \item $a_{1,0} = c_1 + \Delta \cdot (0, 1)$, $a_{1,1} = c_1 - \Delta \cdot (0, 1)$, $a_{1,2} = c_1 + \Delta \cdot (\frac{1}{\sqrt{2}}, \frac{1}{\sqrt{2}})$ and $a_{1,3} = c_1 + \Delta \cdot (-\frac{1}{\sqrt{2}}, \frac{1}{\sqrt{2}})$ for disk $D_1$,
    \item $a_{2,0} = c_2 + \Delta \cdot (\frac{1}{\sqrt{2}}, \frac{1}{\sqrt{2}})$, $a_{2,1} = c_2 - \Delta \cdot (\frac{1}{\sqrt{2}}, \frac{1}{\sqrt{2}})$, $a_{2,2} = c_2 + \Delta \cdot (0,1)$ and $a_{2,3} = c_2 + \Delta \cdot (1,0)$ for disk $D_2$,
    \item then for the remaining disks, for $\ell = 0,\dots,3$, we set $a_{6,\ell} = \overline{a_{2,\ell}}$, $a_{3,\ell} = -a_{6,\ell}$, $a_{4,\ell} = -a_{1,\ell}$ and $a_{5,\ell} = -a_{2,\ell}$.
\end{itemize}

This completes our reduction.

\begin{figure}[!htp]
    \centering
    \includegraphics[width=0.7\linewidth, page=1]{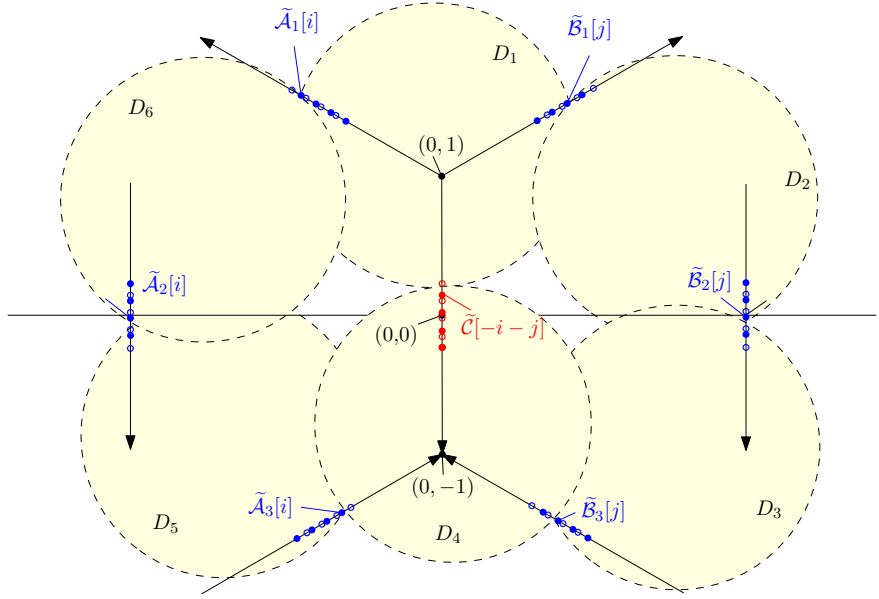}
    \caption{The cluster gadget for \sixcenter defined in Step 1 and 2 together with a possible solution to the \sixcenter problem with 6 disks of radius \radius.}
    \label{fig:cluster-gadget}
\end{figure}

\subsection{Properties of the Points}

We first prove an auxiliary lemma.

\begin{lemma}\label{lem:avg_center}
    Let $\delta \in (0,1)$ and $x,y \in \real^2$ with $\|x-y\| = 2 - 2\delta$.
    Then, the center of any disk of radius $\radius$ containing $x$ and $y$ has distance at most $\sqrt{2\delta}$ from $\frac{x+y}{2}$.
\end{lemma}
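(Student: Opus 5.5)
Let $m=\frac{x+y}{2}$ and let $c$ be the center of a disk of radius $r=\radius$ containing $x$ and $y$. The plan is to decompose $c-m$ into the component along $u=\frac{y-x}{\|y-x\|}$ and the component orthogonal to it, and bound the orthogonal part by a Pythagoras argument and the parallel part by a parallelogram-law argument. In fact both are captured at once by the parallelogram identity
\[
\|c-x\|^2+\|c-y\|^2 \;=\; 2\|c-m\|^2+\tfrac12\|x-y\|^2 ,
\]
which I will apply directly; this avoids splitting into components altogether.

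Since $\|c-x\|,\|c-y\|\le r$, the identity gives $2\|c-m\|^2\le 2r^2-\frac12(2-2\delta)^2=2r^2-2(1-\delta)^2$. Hence
\[
\|c-m\|^2\le r^2-(1-\delta)^2 .
\]
The remaining step is to bound the right-hand side by $2\delta$. Because $r=1-\eps+\eps^{1.7}\le 1$ (as $\eps<1$ and $\eps^{1.7}\le\eps$), we get $r^2\le 1$. Therefore $\|c-m\|^2\le 1-(1-\delta)^2=2\delta-\delta^2\le 2\delta$, and taking square roots yields $\|c-m\|\le\sqrt{2\delta}$.

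There is essentially no obstacle here. The only point to check is that $r\le 1$, which holds for the paper's $\eps=(10n)^{-100}$. I would also remark that the constraint $\delta\in(0,1)$ is used only to make $2\delta-\delta^2$ nonnegative and the distance $2-2\delta$ positive. If a sharper bound is needed later, one could keep the $-\delta^2$ term or use $r^2\le 1-2\eps+3\eps^{1.7}$, but for the statement as given the crude bound $r\le1$ suffices.
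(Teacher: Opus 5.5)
Your proof is correct and is essentially the paper's argument: the paper translates so that $x+y=0$, which kills the cross term $\langle x+y,d\rangle$ when expanding $\|x-d\|^2+\|y-d\|^2$, and that is exactly your parallelogram identity written in coordinates. Both then bound the sum by $2r^2\le 2$ and use $(1-\delta)^2\ge 1-2\delta$ to get $\|c-m\|^2\le 2\delta$. Your version also correctly writes $2r^2$ where the paper's display has the typo ``$2(\radius)$''.
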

\begin{proof}
    By translating and rotating, we can assume without loss of generality that  $x = -y$ and $||x|| = ||y|| = 1-\delta$. Let $d$ be the center.  Then:
    \begin{equation*}
        \begin{split}
            2 \geq 2(\radius) &\geq ||x-d||^2 + ||y-d||^2 \\
            &= ||x||^2 + ||y||^2 - 2\cancel{\langle (x+y), d\rangle} + 2||d||^2\\
             & \geq (1-2\delta)+ (1-2\delta) + 2||d||^2 \qquad\text{(since $x+y = 0$)}\\
             & = 2 - 4\delta + 2 ||(x+y)/2 - d||^2  \qquad\text{(since $x+y = 0$)}
        \end{split}
    \end{equation*}
    Thus, $||(x+y)/2 - d||^2 \leq 2 \delta$, as desired. 
\end{proof}

Recall that $A$ is the set of anchor points; we can now show that these force the disks to be positioned roughly as expected. 

\begin{lemma}\label{lm:6cen-anchor}
    Let $D$ be some set of $6$ disks of radius $\radius$ with centers $d_1, \dots, d_6$.
    \begin{enumerate}
        \item If $\|d_\ell - c_\ell\| \leq 0.1 n \eps$ for all $\ell \in [6]$, then the disks in $D$ cover $A$. 
        \item If the disks in $D$ cover $A$, then we can label their centers such that $\|d_\ell - c_\ell\| \leq 0.3 n \eps$ for all $\ell\in [6]$.
    \end{enumerate}
\end{lemma}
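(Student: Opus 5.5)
Part 1 is immediate from the construction. Every anchor point $a_{\ell,m}$ lies at distance exactly $\Delta = 1-(\frac{n}{10}+1)\eps$ from $c_\ell$. If $\|d_\ell-c_\ell\|\le 0.1n\eps$, the triangle inequality gives $\|a_{\ell,m}-d_\ell\|\le 1-(\frac{n}{10}+1)\eps+0.1n\eps = 1-\eps < \radius$. So the disk around $d_\ell$ covers its own four anchors, and hence $D$ covers $A$.

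For part 2, I would first show that each disk contains anchors of at most one group $\{a_{\ell,0},\dots,a_{\ell,3}\}$. The plan is to check that for $\ell\neq\ell'$ every pair $a_{\ell,m},a_{\ell',m'}$ has distance strictly greater than $2(\radius)\approx 2$. Given the positions $c_1=(0,1)$, $c_4=(0,-1)$, $c_2=\frac{p+q}{2}$ and their mirrors and negations, this is a finite computation over constantly many pairs, and I expect it to be the main obstacle. The centers $c_1,c_2$ are at distance only about $\sqrt3$, so I cannot argue crudely via center distances; I have to use the specific directions chosen for the anchors, namely that they point away from the neighbouring centers. For each close pair of centers, such as $c_1,c_2$, $c_1,c_6$ and $c_2,c_6$, I would compute the minimal anchor–anchor distance and verify that it exceeds $2$ by a constant margin. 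The remaining pairs are handled by the symmetries $a\mapsto\bar a$ and $a\mapsto -a$. An alternative I would try if some pair fails the test: a disk containing anchors from two groups contains two points at distance above $2-2\delta$ for constant $\delta$, and \Cref{lem:avg_center} then pins its center near their midpoint, which is far from everything else. Given pairwise separation, there are 6 groups and 6 disks, and each disk meets at most one group. By pigeonhole each disk covers exactly one full group, so I label the disks so that $d_\ell$ covers $\{a_{\ell,0},\dots,a_{\ell,3}\}$.

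Finally I would pin each center using \Cref{lem:fixing_center} with $h=c_\ell$ and $\delta=(\frac{n}{10}+1)\eps$. Its hypothesis needs all consecutive angular gaps between the four anchors around $c_\ell$ to be at most $\frac34\pi$. For $D_1$ the directions are $90^\circ,45^\circ,135^\circ,270^\circ$, and the gaps $45^\circ,45^\circ,135^\circ,135^\circ$ are all at most $135^\circ$. For $D_2$ the directions are $45^\circ,225^\circ,90^\circ,0^\circ$, with gaps $45^\circ,45^\circ,135^\circ,135^\circ$. The other groups are images under the reflection and the point reflection, which preserve angles. The lemma then gives $\|d_\ell-c_\ell\|<2.7(\frac{n}{10}+1)\eps\le 0.3n\eps$ for $n\ge100$, exactly as in the base case of \Cref{lem:K-cen_anchor_unique_solution}.
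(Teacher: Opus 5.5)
Your Part~1 and your closing step (the angular-gap check and the application of \Cref{lem:fixing_center}) are correct. The middle step, however, rests on a false claim: anchors from different groups are \emph{not} pairwise more than $2(\radius)$ apart. The anchors $a_{\ell,1}$ point toward the middle of the configuration and all lie close to $o$:
\begin{itemize}
\item $a_{1,1}=(0,1-\Delta)$ and $a_{4,1}=-a_{1,1}$ are at distance only $2(\frac{n}{10}+1)\eps$.
\item $a_{3,1}$ is the mirror image of $a_{2,1}\approx(0.820,\,0.043)$ in the $x$-axis, so these two are $\frac32-\sqrt2\,\Delta\approx 0.086$ apart.
\item All six $a_{\ell,1}$ lie within distance $0.83$ of $o$, so a single disk of radius \radius centered at $o$ meets all six groups.
\end{itemize}
Hence ``each disk meets at most one group'' fails, and the pigeonhole step collapses. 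Your fallback via \Cref{lem:avg_center} does not help either: the offending pairs are close together, not almost $2$ apart. (Incidentally, $c_2$ and $c_6$ are about $3.05$ apart. The genuinely close pairs are $c_1c_2$, $c_1c_6$, $c_2c_3$, $c_1c_4$ and their symmetric images.)

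The missing idea is to use one distinguished anchor per group to set up the bijection, and only then pin the centers; this is how the paper proceeds.
\begin{enumerate}
\item The outward anchors $a_{1,0},\dots,a_{6,0}$ are pairwise at distance at least $2>2(\radius)$. So each of the six disks contains exactly one of them, and you label $d_\ell$ by the one it contains.
\item Check that $\|a_{\ell,1}-a_{\ell',0}\|\ge 2$ for all $\ell\neq\ell'$. The tight cases are $a_{4,1}$ versus $a_{1,0}$, at distance exactly $2$, and $a_{2,1}$ versus $a_{3,0}$, with squared distance $2\Delta^2+\frac94>4$, together with their symmetric images. This forces $a_{\ell,1}$ into the disk centered at $d_\ell$.
\item Since $\|a_{\ell,0}-a_{\ell,1}\|=2\Delta$, \Cref{lem:avg_center} gives $\|d_\ell-c_\ell\|\le\sqrt{2(\frac{n}{10}+1)\eps}$. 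This is small enough that $a_{\ell,2}$ and $a_{\ell,3}$ are too far from every other center to be covered elsewhere.
\end{enumerate}
At that point all four anchors of group $\ell$ lie in the disk of $d_\ell$, and your application of \Cref{lem:fixing_center} finishes the proof.
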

\begin{proof}
    For the first item, consider an arbitrary anchor point $a_{\ell,s} \in A$.
    By construction $\|c_\ell - a_{\ell,s}\| = \Delta$, thus by the triangle inequality:
    \begin{align*}
        \|d_\ell - a_{\ell,s}\|
        &\leq \|d_\ell - c_\ell\| + \|c_\ell - a_{\ell,s}\| \\
        &\leq 0.1 n \eps + \Delta = 1-\eps \leq \radius.
    \end{align*}
    Hence each anchor point is covered by the relevant disk.

    For the second item, first note that $a_{1,0}, \dots, a_{6,0}$ have pairwise distance at least $2$ and must therefore be covered by distinct disks; we will use $d_\ell$ for the center of the disk covering $a_{\ell,0}$.
    Next, we show that $\|a_{\ell,1} - a_{\ell',0}\| \geq 2$ for any $\ell \neq \ell'$ (see \Cref{fig:6cen-anchorpoints}).
    For pairs whose disks are not vertically adjacent, this clearly holds.
    For $a_{4,1}$ and $a_{1,0}$ (and symmetrically $a_{1,1}$ and $a_{4,0}$) the distance is exactly $2$.
    We will prove that the distance between $a_{2,1}$ and $a_{3,0}$ is also more than $2$ (and symmetrically, the distances between $a_{3,1}$ and $a_{2,0}$, between $a_{5,1}$ and $a_{6,0}$, and between $a_{6,1}$ and $a_{5,0}$).
    For this, recall that

    \begin{align*}
    a_{2,1}
    &= \frac{p+q}{2} - \Delta \cdot \left( \frac{1}{\sqrt{2}}, \frac{1}{\sqrt{2}} \right)
    = \left( \frac{\sqrt{3}}{2} + \frac{\sqrt{7}}{4} - \frac{\Delta}{\sqrt{2}},\ \frac{3}{4} - \frac{\Delta}{\sqrt{2}} \right), \\
    a_{3,0}
    &= -a_{6,0} = -\overline{a_{2,0}}
    = \overline{\left( \frac{p+q}{2} + \Delta \cdot \left( \frac{1}{\sqrt{2}}, \frac{1}{\sqrt{2}} \right) \right)}
    = \left( \frac{\sqrt{3}}{2} + \frac{\sqrt{7}}{4} + \frac{\Delta}{\sqrt{2}},\ -\frac{3}{4} - \frac{\Delta}{\sqrt{2}} \right).
\end{align*}
Thus, $a_{2,1} - a_{3,0} = \left(-\Delta \cdot \sqrt{2}, \frac{3}{2}\right)$ and
\begin{align*}
    \|a_{2,1} - a_{3,0}\|^2
    &= 2\Delta^2 + \frac{9}{4} \\
    &\geq 2\left( 1 - 2\left( \frac{n}{10} + 1 \right)\eps \right) + \frac{9}{4} \\
    &= 4 + \frac{1}{4} - 4\left( \frac{n}{10} + 1 \right)\eps \\
    &> 4,
\end{align*}
    meaning their distance is also more than $2$.

    Thus, $a_{\ell,1}$ and $a_{\ell,0}$ must be covered by the same disk, for any $\ell$.
    By \Cref{lem:avg_center}, $d_\ell$ is now within distance $\sqrt{2(\frac{n}{10} + 1)\eps}$ of $\frac{a_{\ell,0} + a_{\ell,1}}{2} = c_\ell$.
    Now, $a_{2,\ell}$ and $a_{3,\ell}$ must also be covered by the disk with center $d_\ell$, as they are too distant from the other expected centers $c_{\ell'}$.
    Finally, this lets us use \Cref{lem:fixing_center} to conclude that $d_\ell$ is within distance $2.7(\frac{n}{10} + 1)\eps \leq 0.3 n \eps$ from $c_\ell$, since we can assume $n \ge 100$.
\end{proof}

The relations between $\cpoints$, $\apoints_1$, $\apoints_3$, $\bpoints_1$ and $\bpoints_3$ are exactly as in the 10-center case.
What remains to prove is that the relations involving $\apoints_2$ and $\bpoints_2$ are also as expected.

\begin{lemma}\label{lm:B12-prop}
    \begin{enumerate}
        \item For any $i,j \in I_n$ with $i \leq j$ there does not exist a disk of radius $\radius$ containing $\bpoints_1[i]$ and $\bpoints_2[j]$. \label{lm:item_B1B2ij}
        \item For any $j \in I_n$ with $j + 1 \in I_n$ and $j$ odd, there does not exist a disk of radius $\radius$ containing $\bpoints_1[j+1]$ and $\bpoints_2[j]$.\label{lm:item_B1B2jj+1}
        \item For any $j \in I_n$ with $j + 1 \in I_n$ and $j$ even, there exists a disk of radius $\radius$ containing $\bpoints_1[>j]$ and $\bpoints_2[\leq j]$ that has its center within distance $(|j| + 1)\eps$ from $\frac{p+q}{2}$. \label{lm:item_B1B2jj}
    \end{enumerate}
\end{lemma}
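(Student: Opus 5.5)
The plan is to reduce all three items to one first-order expansion of the distance between a point of $\bpoints_1$ and a point of $\bpoints_2$. Any disk of radius $r=\radius$ containing two points forces their distance to be at most $2r=2-2\eps+2\eps^{1.7}$. So items~\ref{lm:item_B1B2ij} and~\ref{lm:item_B1B2jj+1} follow once this distance is shown to exceed $2r$. For item~\ref{lm:item_B1B2jj} I will take the midpoint of the two extreme points as the center and use monotonicity along the two lines.

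\textbf{Setup.} Write $s_i := 3\lfloor i/2\rfloor\eps + X[\lfloor i/2\rfloor]\eps^{1.5} - (-1)^i\eps$. Let $u := \vec{cp} = (\sqrt3/2,1/2)$, a unit vector, and $e_1=(1,0)$, $e_2=(0,1)$. Then
\[ \bpoints_1[i] = p + s_i u \quad\text{and}\quad \bpoints_2[j] = q - \delta e_1 - \gamma s_j e_2, \quad\text{where } \delta=(11/\sqrt7-\sqrt3)\eps. \]
Hence $\bpoints_1[i]-\bpoints_2[j] = (p-q) + \delta e_1 + s_i u + \gamma s_j e_2$, with $p-q=(-\sqrt7/2,3/2)$ and $\|p-q\|=2$.

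Expanding the square, with quadratic terms bounded by $O(n^4\eps^2)\ll \eps^{1.7}$ since $|s_i|=O(n\eps)$, gives
\[ \|\bpoints_1[i]-\bpoints_2[j]\|^2 = 4 - \sqrt7\,\delta + 2s_i\langle p-q,u\rangle + 3\gamma s_j + O(n^4\eps^2). \]
Next I use the arithmetic identities that motivate the choice of $\gamma$ and $\hat q$. First, $2\langle p-q,u\rangle = -\sqrt{21}/2+3/2 = -3\gamma$. Second, $\sqrt7\,\delta = (11-\sqrt{21})\eps = (8-6\gamma)\eps$. The expansion therefore becomes
\[ \|\bpoints_1[i]-\bpoints_2[j]\|^2 = 4-8\eps + 3\gamma\bigl(2\eps - (s_i-s_j)\bigr) + O(n^4\eps^2). \]
Since $(2r)^2 = 4-8\eps+O(\eps^{1.7})$, the two points fit in a disk essentially only when $s_i-s_j\ge 2\eps - O(\eps^{1.7})$.

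\textbf{Items 1 and 2.} For $i\le j$, the map $i\mapsto s_i$ is nondecreasing, because the step $3\eps$ dominates $2n^2\eps^{1.5}$. Hence $s_i-s_j\le 0$, and the squared distance is at least $4-8\eps+6\gamma\eps-O(n^4\eps^2) > (2r)^2$. This proves item~\ref{lm:item_B1B2ij}.

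For $j=2m+1$ odd, $s_{j+1}-s_j = \eps + (X[m+1]-X[m])\eps^{1.5} \le \eps+2n^2\eps^{1.5}$. The slack is then about $3\gamma\eps$, which again beats all error terms. This proves item~\ref{lm:item_B1B2jj+1}.

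\textbf{Item 3.} For $j=2m$ even, the index $j+1$ has the same floor, so the $X$-terms cancel and $s_{j+1}-s_j=2\eps$ exactly. The distance is then $2-2\eps+O(n^4\eps^2)$, which is strictly inside $2r$ thanks to the $\eps^{1.7}$ slack. I take the center $d$ to be the midpoint of $\bpoints_1[j+1]$ and $\bpoints_2[j]$. Its distance to $\frac{p+q}{2}$ is $\tfrac12\|\delta e_1+s_{j+1}u+\gamma s_j e_2\|\le(|j|+1)\eps$, after bounding $\delta/2<0.3\eps$ and $|s|\le (1.5|j|+1.1)\eps$ crudely.

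It remains to show monotonicity: the other points of $\bpoints_1[>j]$ and $\bpoints_2[\le j]$ are no farther from $d$. Moving along $\bpoints_1$ with increasing index means moving in direction $+u$. From $d$, the current point lies in direction $\approx (p-q)/2$, and $\langle u,(p-q)/2\rangle = (3-\sqrt{21})/8<0$. So the distance decreases at a linear rate $\Theta(1)$ per unit step, which dominates the quadratic $O(\eps^2)$ terms.

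Moving along $\bpoints_2$ with decreasing index moves the point in direction $+e_2$. Here $\langle e_2,(q-p)/2\rangle = -3/4<0$, so the distance again decreases.

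\textbf{Main obstacle.} I expect the delicate part to be item~\ref{lm:item_B1B2jj}. There the inequality is tight to first order, so the bookkeeping must ensure that all second-order terms, namely $O(n^4\eps^2)$, the cross term with $\delta$, and the $O(\eps)$ deviation of $d$ from the exact midpoint direction, stay below $\eps^{1.7}$. This bookkeeping must also hold uniformly over the whole tail of indices in the monotonicity argument. I would handle this exactly as in \Cref{lm:covering-basic_new}: write out the squared distances and bound the remainder $|\eta|\le 100n^4\eps^2<\eps^{1.7}$.
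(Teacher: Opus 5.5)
Your expansion and your handling of items~1 and~2 are correct and match the paper's proof. Your formula $4-8\eps+3\gamma\bigl(2\eps-(s_i-s_j)\bigr)$ is exactly the paper's $4+(x_j-x_i)\frac{\sqrt{21}-3}{2}-(11-\sqrt{21})\eps$. The paper also uses the midpoint of $\bpoints_1[j+1]$ and $\bpoints_2[j]$ as the center in item~3. Your directional-derivative argument for the tails is a valid substitute for the paper's explicit evaluation of $\|\bpoints_1[\ell]-d\|^2$ and $\|\bpoints_2[\ell]-d\|^2$.

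The step that fails is the bound $\|d-\frac{p+q}{2}\|\le(|j|+1)\eps$, for two reasons.

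First, you have the signs wrong. The offset is $d-\frac{p+q}{2}=\frac12\bigl(s_{j+1}u-\delta e_1-\gamma s_j e_2\bigr)$: it is half the perturbation in the \emph{sum} $\bpoints_1[j+1]+\bpoints_2[j]$, not in the difference.

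Second, $\delta/2=\bigl(\frac{11}{2\sqrt7}-\frac{\sqrt3}{2}\bigr)\eps\approx 1.21\eps$, not $<0.3\eps$. With the correct value, bounding the three terms separately gives roughly $\frac12\bigl(2.43+(1+\gamma)(1.5|j|+1.1)\bigr)\eps\approx(1.9+0.95|j|)\eps$. This exceeds $(|j|+1)\eps$ for small $|j|$, e.g.\ $j=0$.

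The stated constant requires the cancellation in the first coordinate. Write $s_{j+1}=S+\eps$ and $s_j=S-\eps$ with $S=\frac32 j\eps+X[j/2]\eps^{1.5}$. Then
\[
d-\frac{p+q}{2}=S\Bigl(\frac{\sqrt3}{4},\frac12-\frac{\sqrt{21}}{12}\Bigr)+\eps\Bigl(\frac34\sqrt3-\frac{11}{2\sqrt7},\frac{\sqrt{21}}{12}\Bigr).
\]
In the second vector, the $-\delta/2$ and $+\frac{\sqrt3}{4}\eps$ contributions partially cancel, so it has norm about $0.87\eps$. The first vector's coefficient has norm below $0.45$. Hence $\|d-\frac{p+q}{2}\|\le(0.675|j|+0.9)\eps+0.45n^2\eps^{1.5}\le(|j|+1)\eps$, which is how the paper argues.

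Your weaker bound would still suffice where the lemma is later used in the 6-center YES case, since there only $\le 0.1n\eps$ is needed. It does not, however, prove item~3 as stated.
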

\begin{proof}  Recall that the point sets are defined as:
\begin{align*}
    \bpoints_1[i] &= p + \left(3\lfloor i/2 \rfloor \eps + X[\lfloor i/2 \rfloor] \cdot \eps^{1.5} - (-1)^i \eps \right) \cdot \left(\frac{\sqrt{3}}{2}, \frac{1}{2}\right) \\
    \bpoints_2[j] &= \hat{q} + \left(3\lfloor j/2 \rfloor \eps + X[\lfloor j/2 \rfloor] \cdot \eps^{1.5} - (-1)^j \eps \right) \cdot (0, -\gamma)
\end{align*}
where $p = (\frac{\sqrt{3}}{2}, \frac{3}{2})$, $q = (\frac{\sqrt{3}}{2} + \frac{\sqrt{7}}{2}, 0), \hat{q} = q - ((\frac{11}{\sqrt{7}}-\sqrt{3})\eps,0)$, and $\gamma = \frac{\sqrt{21}}{6} - \frac{1}{2}$.

We expand the squared distance $\|\bpoints_1[i] - \bpoints_2[j]\|^2$ using the inner product:
\begin{align*}
    \|\bpoints_1[i] - \bpoints_2[j]\|^2 &= (\bpoints_1[i] - \bpoints_2[j])(\bpoints_1[i] - \bpoints_2[j]) \\
     &= (p-q) \cdot (p-q) \\
    &\qquad + 2\left(3\lfloor i/2 \rfloor \eps + X[\lfloor i/2 \rfloor] \cdot \eps^{1.5} - (-1)^i \eps \right) \cdot \left(\frac{\sqrt{3}}{2}, \frac{1}{2}\right) \cdot (p-q) \\
    &\qquad - 2\left(3\lfloor j/2 \rfloor \eps + X[\lfloor j/2 \rfloor] \cdot \eps^{1.5} - (-1)^j \eps \right) \cdot (0, -\gamma) \cdot (p-q) \\
    &\qquad + 2\left( \left(\frac{11}{\sqrt{7}} - \sqrt{3}\right) \eps, 0 \right) \cdot (p-q) + \xi
\end{align*}
with $|\xi| \le 1000 n^4 \eps^2 < \eps^{1.7}$.

Plugging in $p-q = (-\frac{\sqrt{7}}{2}, \frac{3}{2})$, we get:

\begin{align*}
    (p-q) \cdot (p-q) &= 4\\
    (\frac{\sqrt{3}}{2}, \frac{1}{2}) \cdot (p-q) &= \frac{3 - \sqrt{21}}{4}\\
    (0, -\gamma) \cdot (p-q) &= -\frac{3}{2}\gamma = \frac{3 - \sqrt{21}}{4}\\
    \left( \left( \frac{11}{\sqrt{7}} - \sqrt{3} \right) \cdot \eps, 0 \right) \cdot (p - q) &= \left( -\frac{11}{2} + \frac{\sqrt{21}}{2} \right) \cdot \eps
\end{align*}

Then the squared distance $\|\bpoints_1[i] - \bpoints_2[j]\|^2$ simplifies to:
\begin{align*}
    \|\bpoints_1[i] - \bpoints_2[j]\|^2 &= 4 + \left(\underbrace{ \left( 3\lfloor j/2 \rfloor \eps + X[\lfloor j/2 \rfloor] \cdot \eps^{1.5} - (-1)^j \eps \right) }_{\coloneq x_j}\right. \\
    &\qquad \left. -  \underbrace{\left( 3\lfloor i/2 \rfloor \eps + X[\lfloor i/2 \rfloor] \cdot \eps^{1.5} - (-1)^i \eps \right) }_{\coloneq x_i}\right) \cdot \frac{\sqrt{21} - 3}{2} \\
    &\qquad - (11 - \sqrt{21})\eps + \xi
\end{align*}

Let $r = 1 - \eps + \eps^{1.7}$. Observe that:
\begin{equation}\label{eq:2r-sqrt}
    \begin{split}
        (2r)^2 &= 4(1 - \eps + \eps^{1.7})^2 \\
    &= 4(1 - 2\eps + 2\eps^{1.7} + \eps^2 + \eps^{3.4} - 2\eps^{2.7}) \\
    &\in (4 - 8\eps + (8\pm 0.4)\eps^{1.7})
    \end{split}
\end{equation}

Now we consider three different cases, which correspond to the three items in the lemma.

\noindent\textbf{Case 1:} if $i \leq j$, then $x_j \geq x_i$, so:
\begin{equation}
\begin{split}
     \|\bpoints_1[i] - \bpoints_2[j]\|^2 &\geq 4 - (11 - \sqrt{21})\eps - \eps^{1.7} \\
     &\geq 4-6.42\eps - \eps^{1.7} > (2r)^2 \qquad \text{(by \Cref{eq:2r-sqrt})}
\end{split}
\end{equation}

Thus, \cref{lm:item_B1B2ij} of the lemma holds.

\noindent\textbf{Case 2:} $i = j + 1$ and $j$ is odd.  Then we have:
\begin{align*}
    x_j &= 3\lfloor j/2 \rfloor \eps + X[\lfloor j/2 \rfloor] \cdot \eps^{1.5} + \eps \\
    x_i &= 3(\lfloor j/2 \rfloor + 1) \eps + X[\lfloor i/2 \rfloor] \cdot \eps^{1.5} - \eps~,
 \end{align*}
giving $x_j - x_i \geq -\eps - 2n^2 \eps^{1.5}$. Thus, 
\begin{align*}
    \|\bpoints_1[i] - \bpoints_2[j]\|^2 &\ge 4 - \frac{\sqrt{21} - 3}{2}\eps - (11 - \sqrt{21})\eps - \underbrace{3n^2 \eps^{1.5}}_{\leq 0.1\eps} \\
    &\geq 4 - \underbrace{\frac{19 - \sqrt{21}}{2}}_{\leq 7.21}\eps - 0.1\eps \\
    &> (2r)^2 \qquad \text{(by \Cref{eq:2r-sqrt})}
\end{align*}
Thus, \cref{lm:item_B1B2jj+1} of the lemma holds.

\noindent\textbf{Case 3:}  $i = j + 1$ and $j$ is even.  Let:
\begin{align*}
 d  \coloneqq &\frac{1}{2}(\bpoints_1[j+1] + \bpoints_2[j])\\
    &= \left( \frac{\sqrt{3}}{2} + \frac{\sqrt{7}}{4}, \frac{3}{4} \right) - \frac{1}{2} \left( \left( \frac{11}{\sqrt{7}} - \sqrt{3} \right) \eps, 0 \right) + \eps\cdot \left( \frac{\sqrt{3}}{4}, \frac{\sqrt{21}}{12} \right) \\
    &\quad + \left( 3 \cdot \frac{j}{2} \eps + X\left[ \frac{j}{2} \right] \cdot \eps^{1.5} \right) \cdot \left( \frac{\sqrt{3}}{4}, \frac{1}{2} - \frac{\sqrt{21}}{12} \right)\\
    &= \left( \frac{\sqrt{3}}{2} + \frac{\sqrt{7}}{4}, \frac{3}{4} \right) + \eps \left( \frac{3}{4}\sqrt{3}-\frac{11}{2\sqrt{7}}, \frac{\sqrt{21}}{12} \right)  \\
    &\quad + \left( 3 \cdot \frac{j}{2} \eps + X\left[ \frac{j}{2} \right] \cdot \eps^{1.5} \right) \cdot \left( \frac{\sqrt{3}}{4}, \frac{1}{2} - \frac{\sqrt{21}}{12} \right).
\end{align*}
In what follows, we will show (by straightforward yet somewhat tedious) calculations that:
\begin{enumerate}
    \item   $\bpoints_1[> j]$ is contained in the radius-$r$ disk with center $d$.
    \item $\bpoints_2[\le j]$ is contained in the radius-$r$ disk with center $d$.

\end{enumerate}

Then, we observe that:
\begin{equation}
    \|d - \frac{p+q}{2}\| = \left\| \eps \cdot \left( \frac{3}{4}\sqrt{3} - \frac{11}{2\sqrt{7}}, \frac{\sqrt{21}}{12} \right) + \left( 3 \cdot \frac{j}{2} \eps + X\left[\frac{j}{2}\right] \cdot \eps^{1.5} \right) \cdot \left( \frac{\sqrt{3}}{4}, \frac{1}{2} - \frac{\sqrt{21}}{12} \right) \right\|
\end{equation}

By applying the triangle inequality and pulling out factors:
\begin{equation}
   \begin{split}
       \|d - \frac{p+q}{2}\|  &\le \eps \cdot \underbrace{\left\| \left( \frac{3}{4}\sqrt{3} - \frac{11}{2\sqrt{7}}, \frac{\sqrt{21}}{12} \right) \right\|}_{\leq 0.9} + \left| 3 \cdot \frac{j}{2} \eps + X\left[\frac{j}{2}\right] \cdot \eps^{1.5} \right| \cdot \underbrace{\left\| \left( \frac{\sqrt{3}}{4}, \frac{1}{2} - \frac{\sqrt{21}}{12} \right) \right\|}_{\leq 0.45}\\
       &\le \left( \frac{3}{2} \cdot 0.45 \cdot |j| + 0.9 \right) \eps + \underbrace{0.45 n^2 \eps^{1.5}}_{\leq 0.1 \eps} \\
    &\le (|j| + 1) \cdot \eps
   \end{split}
\end{equation}
which proves \cref{lm:item_B1B2jj} of the lemma. 

Now we focus on showing that  $\bpoints_1[> j]$ is contained in the radius-$r$ disk with center $d$. For any $\ell \in I_n$, we have:
\begin{align*}
    \|\bpoints_1[\ell] - d\|^2 &= \left( p - \left( \frac{\sqrt{3}}{2} + \frac{\sqrt{7}}{4}, \frac{3}{4} \right) \right) \cdot \left( p - \left( \frac{\sqrt{3}}{2} + \frac{\sqrt{7}}{4}, \frac{3}{4} \right) \right) \\
    &\quad - 2 \eps \cdot \left( \frac{3}{4}\sqrt{3} - \frac{11}{2\sqrt{7}}, \frac{\sqrt{21}}{12} \right) \cdot \left( p - \left( \frac{\sqrt{3}}{2} + \frac{\sqrt{7}}{4}, \frac{3}{4} \right) \right) \\
    &\quad - 2 \left( 3 \cdot \frac{j}{2} \eps + X\left[ \frac{j}{2} \right] \eps^{1.5} \right) \cdot \left( \frac{\sqrt{3}}{4}, \frac{1}{2} - \frac{\sqrt{21}}{12} \right) \cdot \left( p - \left( \frac{\sqrt{3}}{2} + \frac{\sqrt{7}}{4}, \frac{3}{4} \right) \right) \\
    &\quad + 2 \left( 3 \lfloor \ell/2 \rfloor \eps + X[\lfloor \ell/2 \rfloor] \eps^{1.5} - (-1)^\ell \eps \right) \cdot \left( \frac{\sqrt{3}}{2}, \frac{1}{2} \right) \cdot \left( \underbrace{p - \left( \frac{\sqrt{3}}{2} + \frac{\sqrt{7}}{4}, \frac{3}{4} \right)}_{= (-\sqrt{7}/4,3/4)} \right) \\
    &\quad + \xi' \quad \text{(where  $|\xi'| \le 1000 n^4 \eps^2 < \eps^{1.7}$)}\\
    &= 1 - \frac{11 - \sqrt{21}}{4} \eps + \left( 3 \cdot \frac{j}{2} \eps + X\left[ \frac{j}{2} \right] \eps^{1.5} \right) \cdot \frac{\sqrt{21} - 3}{4} \\
    &\quad - \left( 3 \lfloor \ell/2 \rfloor \eps + X[\lfloor \ell/2 \rfloor] \eps^{1.5} - (-1)^\ell \eps \right) \cdot \frac{\sqrt{21} - 3}{4} + \xi'.
\end{align*}

For $\ell = i = j + 1$, since $j$ is even, $\lfloor \ell/2 \rfloor = j/2$. Thus:
\begin{equation}
    \|\bpoints_1[\ell] - d\|^2 = 1 - \frac{11 - \sqrt{21}}{4} \eps - \frac{\sqrt{21} - 3}{4} \eps + \xi' = 1 - 2\eps + \xi' \le r^2.
\end{equation}

For $\ell > i = j + 1$, we have $\lfloor \ell/2 \rfloor \ge j/2 + 1$:
\begin{align*}
     \|\bpoints_1[\ell] - d\|^2  &\le 1 - \frac{11 - \sqrt{21}}{4} \eps - 2 \cdot \frac{\sqrt{21} - 3}{4} \eps + 10 n^2 \eps^{1.5} \\
    &= 1 - 2\eps - \frac{\sqrt{21} - 3}{4} \eps + 10 n^2 \eps^{1.5} \le 1 - 2\eps \le r^2.
\end{align*}

Hence $\bpoints_1[> j]$ is contained in the radius-$r$ disk with center $d$.

Finally, we show that $\bpoints_2[\le j]$ is contained in the radius-$r$ disk with center $d$. First, we have:
\begin{align*}
    \|\bpoints_2[\ell] - d\|^2 &= \left(q - \left(\frac{\sqrt{3}}{2} + \frac{\sqrt{7}}{4}, \frac{3}{4}\right)\right) \cdot \left(q - \left(\frac{\sqrt{3}}{2} + \frac{\sqrt{7}}{4}, \frac{3}{4}\right)\right) \\
    &\quad - 2 \left(\left(\frac{11}{\sqrt{7}} - \sqrt{3}\right)\eps, 0\right) \cdot \left(q - \left(\frac{\sqrt{3}}{2} + \frac{\sqrt{7}}{4}, \frac{3}{4}\right)\right) \\
    &\quad + 2 \cdot \left(3\lfloor \ell/2 \rfloor \eps + X[\lfloor \ell/2 \rfloor] \cdot \eps^{1.5} - (-1)^\ell \eps\right) \cdot (0, -\gamma) \cdot \left(q - \left(\frac{\sqrt{3}}{2} + \frac{\sqrt{7}}{4}, \frac{3}{4}\right)\right) \\
    &\quad - 2 \cdot \eps \cdot \left(\frac{3}{4}\sqrt{3} - \frac{11}{2\sqrt{7}}, \frac{\sqrt{21}}{12}\right) \cdot \left(q - \left(\frac{\sqrt{3}}{2} + \frac{\sqrt{7}}{4}, \frac{3}{4}\right)\right) \\
    &\quad - 2 \left(3 \cdot \frac{j}{2} \eps + X\left[\frac{j}{2}\right] \eps^{1.5}\right) \cdot \left(\frac{\sqrt{3}}{4}, \frac{1}{2} - \frac{\sqrt{21}}{12}\right) \cdot \left(\underbrace{q - \left(\frac{\sqrt{3}}{2} + \frac{\sqrt{7}}{4}, \frac{3}{4}\right)}_{= (\frac{\sqrt{7}}{4}, -\frac{3}{4})}\right) \\
    &\quad + \xi'' \quad \text{ (with } |\xi''| \le 1000 n^4 \eps^2 < \eps^{1.7} \text{)}\\
     &= 1 - \frac{11 - \sqrt{21}}{4} \eps + \left(3\lfloor \ell/2 \rfloor \eps + X[\lfloor \ell/2 \rfloor] \eps^{1.5} - (-1)^\ell \eps\right) \cdot \frac{\sqrt{21} - 3}{4} \\
    &\quad - \left(3 \cdot \frac{j}{2} \eps + X\left[\frac{j}{2}\right] \eps^{1.5}\right) \frac{\sqrt{21} - 3}{4} + \xi''
\end{align*}

For $\ell = j$, since $j$ is even, $\lfloor \ell/2 \rfloor = j/2$ and $(-1)^\ell = 1$:
\begin{equation*}
    \|\bpoints_2[\ell] - d\|^2 = 1 - \frac{11 - \sqrt{21}}{4} \eps - \eps \cdot \frac{\sqrt{21} - 3}{4} + \xi'' = 1 - 2\eps + \xi'' \le r^2
\end{equation*}

For $\ell < j$, then $\lfloor \ell/2 \rfloor \le j/2 - 1$:
\begin{align*}
    \|\bpoints_2[\ell] - d\|^2  &\le 1 - \frac{11 - \sqrt{21}}{4} \eps + \frac{\sqrt{21} - 3}{4} \cdot (-3\eps + \eps + 2n^2 \eps^{1.5}) + \xi'' \\
    &\le 1 - 2\eps - \underbrace{\frac{\sqrt{21} - 3}{4}}_{\ge 0.39} \eps + \underbrace{2n^2 \eps^{1.5} + \xi''}_{\leq 0.1 \eps} \le 1 - 2\eps \le r^2
\end{align*}

Hence $\bpoints_2[\le j]$ is contained in the radius-$r$ disk with center $d$, as desired.
\end{proof}

Note that statements symmetric to \Cref{lm:B12-prop} hold for $(\bpoints_2, \bpoints_3)$, $(\apoints_1, \apoints_2)$, and $(\apoints_2, \apoints_3)$ (using reflections and \Cref{obs:P_n_and_reverse}). 

\subsection{Correctness: Proof of Theorem~\ref{thm:6-center}}

We assume that $n$ is sufficiently large. Observe that the reduction in \Cref{subsec:6cen-reduction} can be done in $\Tilde{O}(n)$ time.  It remains to verify that the correspondence between the YES case and the NO case of the Gap Convolution-3SUM problem and the solution to the 6-center problem.

\paragraph{YES case:} Suppose the Gap Convolution-3SUM instance $X[-n \dots n]$ is a YES instance: there exist indices $i, j, k \in \{-\lfloor \frac{n}{100} \rfloor, \dots, \lfloor \frac{n}{100} \rfloor\}$ such that $i+j+k=0$ and $X[i] + X[j] + X[k] = 0$.

We argue that the constructed point set can be covered by 6 radius-$r$ disks:

\begin{itemize}
    \item By \Cref{lm:covering-basic_new}, there exists a disk $D_1$ covering $\apoints_1[\leq 2i] \cup \bpoints_1[\leq  2j] \cup \cpoints[\leq  2k]$.
    \item Symmetrically, there is a disk $D_4$ covering $\apoints_3[> 2i] \cup \bpoints_b[> 2j] \cup \cpoints[> 2k]$.
\end{itemize}
By \Cref{lm:B12-prop},
\begin{itemize}
    \item  there is a disk $D_2$ covering $\bpoints_1[>2j] \cup \bpoints_2[\le 2j]$.
    \item there is a disk $D_3$ covering $\bpoints_2[>2j] \cup \bpoints_3[\le 2j]$.
    \item there is a disk $D_6$ covering $\apoints_1[>2i] \cup \apoints_2[\le 2i]$.
    \item there is a disk $D_5$ covering $\apoints_2[>2i] \cup \apoints_3[\le 2i]$.
\end{itemize}

Moreover, \Cref{lm:covering-basic_new} and \Cref{lm:B12-prop} ensure that the centers of $D_1, \dots, D_6$ are within distance $3\eps \cdot (\max\{|2i|, |2j|\} + 1) \le 0.1 n \eps$ from the respective centers $c_1, \dots, c_6$. Thus, by  \Cref{lm:6cen-anchor}, these disks cover all anchor points. Consequently, all points in the construction can be covered by 6 radius-$r$ disks.

\paragraph{NO case:} Suppose that all points can be covered by 6 radius-$r$ disks. We argue that $\exists i, j, k \in \{-n, \dots, n\}$ s.t. $i+j+k=0$ and $X[i] + X[j] + X[k] = 0$ and hence $X$ is not a NO instance of Gap Convolution-3SUM.

By  \Cref{lm:6cen-anchor}, the 6 disks have centers $d_1, \dots, d_6$ that are within distance $0.3n\eps$ of $c_1, \dots, c_6$. Thus:
\begin{itemize}
    \item  disk $D_1$ does not contain all points in $\bpoints_1$, since the furthest point in $\bpoints_1$ has distance $>  1 + 3n\eps$ from $c_1$, but $D_1$ has radius $r = 1 - \eps + \eps^{1.7}$ and its center has distance $0.3n\eps$ from $c_1$.
    \item  Similarly, $D_1$ does not contain all points in $\bpoints_1, \apoints_1$ or $\cpoints$, 
    \item  and $D_4$ does not contain all points in $\bpoints_3, \apoints_3$ or $\cpoints$.
\end{itemize}

Next, we argue that:

\begin{claim}
    $D_2$ does not contain all points in $\bpoints_1$ or $\bpoints_2$. (Similar statements hold for $D_3$ vs $\bpoints_2, \bpoints_3$, $D_5$ vs $\apoints_2, \apoints_3$, and $D_6$ vs $\apoints_1, \apoints_2$).
\end{claim}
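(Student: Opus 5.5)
We prove the claim by the same distance argument used just above for $D_1$ and $D_4$. The only input is \Cref{lm:6cen-anchor}: the center $d_2$ of $D_2$ satisfies $\|d_2 - c_2\| \le 0.3n\eps$, where $c_2 = \frac{p+q}{2}$. Hence every point covered by $D_2$ lies within distance $r + 0.3n\eps < 1 + 0.3n\eps$ of $c_2$. It therefore suffices to exhibit one point of $\bpoints_1$ and one point of $\bpoints_2$ whose distance from $c_2$ exceeds this bound.

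For $\bpoints_1$ we take the extreme point $\bpoints_1[-2n]$. By \Cref{eq:Pn-def-simpl} it equals
\[
p + \left(-3n\eps + X[-n]\eps^{1.5} - \eps\right)\left(\tfrac{\sqrt3}{2},\tfrac12\right),
\]
which is a displacement of $p$ of length about $3n\eps$ in direction $-(\tfrac{\sqrt3}{2},\tfrac12)$. Since $\|p - c_2\| = 1$, we expand to first order:
\[
\|\bpoints_1[-2n]-c_2\| = 1 + (3n+1)\eps\left\langle -\left(\tfrac{\sqrt3}{2},\tfrac12\right),\, p-c_2\right\rangle + O(n^2\eps^{1.5}).
\]
Here $p - c_2 = (p-q)/2 = (-\tfrac{\sqrt7}{4},\tfrac34)$, so the inner product is $\frac{\sqrt{21}-3}{8} \approx 0.198$. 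The distance is thus about $1 + 0.59n\eps$, which exceeds $1 + 0.3n\eps$ once the $O(n^2\eps^{1.5})$ error is absorbed (using $\eps = (10n)^{-100}$).

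For $\bpoints_2$ we take the other extreme point $\bpoints_2[2n+1] = \hat q + \left(3n\eps + X[n]\eps^{1.5} + \eps\right)(0,-\gamma)$. Now $q - c_2 = (\tfrac{\sqrt7}{4},-\tfrac34)$, and $\hat q$ differs from $q$ only by an $O(\eps)$ horizontal shift. The dominant term is $(3n+1)\eps\,\gamma\cdot\tfrac34$. With $\gamma \approx 0.264$ this gives about $0.59n\eps$, the same as before, consistent with the computation in \Cref{lm:B12-prop} where both directions yield the coefficient $\frac{\sqrt{21}-3}{4}$. The shift of $\hat q$ contributes only $O(\eps)$, which is negligible against $n\eps$ for $n \ge 100$. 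So this distance also exceeds $1 + 0.3n\eps$.

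The statements for $D_3$, $D_5$ and $D_6$ follow by the reflections $x \mapsto -x$ and $a \mapsto \bar a$ together with \Cref{obs:P_n_and_reverse}; the latter swaps which extreme index is far. The main thing to check is that the linear coefficient, roughly $0.2 \cdot 3 \approx 0.59$, beats the slack $0.3$ with margin. Since $0.59 - 0.3$ is a constant fraction of $n\eps$, the margin is comfortable, and all remaining error terms are of lower order.
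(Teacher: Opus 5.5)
Your proposal is correct and follows the paper's proof. Both take the extreme points $\bpoints_1[-2n]$ and $\bpoints_2[2n+1]$, show that each lies at distance about $1+3\cdot\frac{\sqrt{21}-3}{8}\,n\eps\approx 1+0.59n\eps$ from $c_2$, and compare this with the reach $r+O(n\eps)$ of $D_2$ given by \Cref{lm:6cen-anchor}. The paper bounds the distance from below by projecting onto the unit vector $\pm(p-q)/2$, which avoids your first-order expansion error term, and it leaves implicit the reflection argument for $D_3$, $D_5$, $D_6$ that you spell out.
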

\begin{proof} Recall that  $\bpoints_1[-2n] = p + (-3n\eps - \eps + X[-n]\cdot \eps^{1.5}) \cdot \left(\frac{\sqrt{3}}{2}, \frac{1}{2}\right)$, $c_2 = \frac{p+q}{2}$, $p = \left(\frac{\sqrt{3}}{2}, \frac{3}{2}\right)$ and $q = \left(\frac{\sqrt{3}}{2} + \frac{\sqrt{7}}{2}, 0\right)$. Note that $\|p-q\|=2$. Hence,
\begin{align*}
    \|\bpoints_1[-2n] - c_2\| &\ge (\bpoints_1[-2n] - c_2) \cdot \frac{p-q}{\|p-q\|} \\
    &= \frac{1}{4}(p-q) \cdot (p-q) + (-3n\eps - \eps + X[-n] \cdot \eps^{1.5}) \cdot \left(\frac{\sqrt{3}}{2}, \frac{1}{2}\right) \cdot \underbrace{\frac{p-q}{2}}_{= (-\frac{\sqrt{7}}{4},\frac{3}{4})} \\
    &= 1 + (3n\eps + \underbrace{\eps - X[-n]\cdot \eps^{1.5}}_{\geq 0}) \cdot \frac{\sqrt{21} - 3}{8} \\
    &\ge 1 + 3 \cdot \frac{\sqrt{21} - 3}{8} \cdot n\eps > 1 + 0.5 n\eps 
\end{align*}
But $D_{2}$ has radius $r \le 1$ and its center is within distance $0.4n\eps$ from $c_{2}$, so $D_{2}$ cannot contain $\bpoints_{1}[-2n]$. 

The argument for $\bpoints_{2}[2n+1]$ is similar. Recall that
\begin{equation}
    \bpoints_{2}[2n+1] = \hat{q} + (3n\eps + X[n]\eps^{1.5} + \eps) \cdot (0, -\gamma)
\end{equation}
where $\gamma = \frac{\sqrt{21}}{6} - \frac{1}{2}$. We calculate the distance from $c_{2}$:
\begin{align*}
    \|\bpoints_{2}[2n+1] - c_{2}\| &\ge (\bpoints_{2}[2n+1] - c_{2}) \cdot \frac{q-p}{\|q-p\|} \\
    &= \frac{1}{4}(q-p)\cdot(q-p) + (3n\eps + X[n]\eps^{1.5} + \eps) \cdot (0, -\gamma) \cdot \underbrace{\frac{p-q}{2}}_{= (\frac{\sqrt{7}}{4},-\frac{3}{4})} \\
    &\quad - ((\frac{11}{\sqrt{7}} - \sqrt{3})\eps, 0) \cdot \frac{q-p}{2} \\
    &= 1 + (3n\eps + X[n]\eps^{1.5} + \eps) \cdot \frac{\sqrt{21} - 3}{8} - \frac{11 - \sqrt{21}}{4} \cdot \eps \\
    &= 1 + n\eps \cdot \frac{3\sqrt{21} - 9}{8} + X[n]\eps^{1.5} \cdot \frac{\sqrt{21} - 3}{8} - \frac{25 - 3\sqrt{21}}{8} \cdot \eps \\
    &\ge 1 + 0.59n\eps - \underbrace{0.2n^{2}\eps^{1.5}}_{\leq 0.1\eps} - 1.5\eps \\
    &> 1 + 0.5n\eps \quad \text{ since we can assume } n \ge 100.
\end{align*}
So $D_{2}$ cannot contain $\bpoints_{2}[2n+1]$. 
\end{proof}

Now we resume our argument. Let:
\begin{itemize}
   \item $i_{1}, j_{1}, k$ be the maximal indices s.t. $D_{1}$ covers $\apoints_{1}[i_{1}], \bpoints_{1}[j_{1}], \cpoints[k]$.
    \item $j_{2}$ is the maximal index s.t. $D_{2}$ covers $\bpoints_{2}[j_{2}]$.
    \item $j_{3}$ is the maximal index s.t. $D_{3}$ covers $\bpoints_{3}[j_{3}]$.
    \item $i_{2}$ is the maximal index s.t. $D_{6}$ covers $\apoints_{2}[i_{2}]$.
    \item $i_{3}$ is the maximal index s.t. $D_{5}$ covers $\apoints_{3}[i_{3}]$.
\end{itemize}
Then:
\begin{itemize}
    \item $D_2$ covers $\bpoints_1[j_1+1], \bpoints_2[j_2]$. 
\item $D_3$ covers $\bpoints_2[j_2+1], \bpoints_3[j_3]$. 
\item $D_6$ covers $\apoints_1[i_1+1], \apoints_2[i_2]$. 
\item $D_5$ covers $\apoints_2[i_2+1], \apoints_3[i_3]$. 
\item $D_4$ covers $\apoints_3[i_3+1], \bpoints_3[j_3+1], \cpoints[k+1]$.
\end{itemize}

\Cref{lm:B12-prop} and  \Cref{lm:covering-basic_new} imply that this is only possible if:
\begin{equation*}
    \begin{split}
        j_1 &\ge j_2 \ge j_3\\
    i_1 &\ge i_2 \ge i_3\\
    i_1&+j_1+k \le 0 \\
    -i_3&-j_3-k \le 0
    \end{split}
\end{equation*}
which is only possible if all these inequalities are equalities. Furthermore, all of the indices $j_1=j_2=j_3$, $i_1=i_2=i_3$, and $k$ must be even, as otherwise some inequality is strict. Moreover, by \Cref{lm:covering-basic_new}, $X[\frac{i_1}{2}] + X[\frac{j_1}{2}] + X[\frac{k}{2}] \le 0$ (applied to $D_1$) and also $X[\frac{i_1}{2}] + X[\frac{j_1}{2}] + X[\frac{k}{2}]  \ge 0$ (applied to $D_4$), so $X[\frac{i_1}{2}] + X[\frac{j_1}{2}] + X[\frac{k}{2}]= 0$. Hence, $i_1, j_1, k$ is a solution, which means $X$ is not a NO instance of Gap Convolution 3SUM.

\paragraph{Acknowledgments.} This work was initiated during the “Fine-Grained \& Parameterized Computational Geometry” workshop at the Lorentz Center. We also thank Sándor Kisfaludi-Bak and Dániel Marx for useful discussions.

\bibliographystyle{alpha}
\bibliography{temp}

\end{document}